\documentclass[journal,10pt,onecolumn,draftclsnofoot,]{IEEEtran}
\usepackage[a4paper, margin=1in]{geometry}
\usepackage{amsmath,amssymb, amscd, amsthm, amsfonts}
\usepackage{mathtools}
\usepackage{graphicx}
\usepackage{multirow}
\usepackage{enumitem} 
\usepackage[colorlinks = true,
             linkcolor = blue,
             urlcolor  = blue,
             citecolor = blue,
             anchorcolor = blue]{hyperref}
\usepackage{url}
\usepackage{cleveref}
\usepackage{xcolor}
\usepackage{diagbox}
\usepackage{tikz}
\usepackage{longtable}
\usepackage{makecell}
\usepackage{colortbl}

\oddsidemargin 0pt
\evensidemargin 0pt
\marginparwidth 40pt
\marginparsep 10pt
\topmargin -20pt
\headsep 10pt
\textheight 8.7in
\textwidth 6.65in
\linespread{1.2}

\DeclareRobustCommand{\rchi}{{\mathpalette\irchi\relax}}
\newcommand{\irchi}[2]{\raisebox{\depth}{$#1\chi$}} % inner command, used by \rchi
%Tikz package and commands
\usetikzlibrary{shapes.geometric, arrows,calc}
\usetikzlibrary{positioning}
\usetikzlibrary{decorations.pathreplacing}
\tikzstyle{simple_box} = [rectangle, rounded corners, minimum height=1cm, text centered, draw=black,align=center,font=\small]
\tikzstyle{simple_box_b} = [rectangle, rounded corners, minimum height=1cm, text centered, align=center,font=\small]
\tikzstyle{arrow} = [thick,->,>=stealth]
\tikzstyle{darrow} = [<->,>=stealth]

\title{Extending fibre nonlinear interference power modelling to account for general dual-polarisation 4D modulation formats}

\author{Gabriele Liga,~\IEEEmembership{Member,~IEEE},
Astrid Barreiro,~\IEEEmembership{Member,~OSA},
Hami Rabbani, 
and Alex Alvarado,\\~\IEEEmembership{Senior Member,~IEEE}.

\thanks{The authors are with the Information and Communication Theory Lab, Signal Processing Systems Group, Department of Electrical Engineering, Eindhoven University of Technology, 5600 MB, Eindhoven, The Netherlands (e-mail: \url{g.liga@tue.nl}). H.~Rabbani is also with the EE Dept. of K. N. Toosi University of Technology.}
\thanks{G.~Liga is funded by the EuroTechPostdoc programme under the European Union's Horizon 2020 research and innovation programme (Marie Skłodowska-Curie grant agreement No 754462). This work has received funding from the European Research Council (ERC) under the European Union's Horizon 2020 research and innovation programme (grant agreement No. 757791).}
}

\newcounter{theo}
\newtheorem{theorem}[theo]{Theorem}
\newtheorem{proposition}[theo]{Proposition}
\newtheorem{corollary}[theo]{Corollary}
\newtheorem{lemma}[theo]{Lemma}
\newtheorem{example_a}{Example}

\renewcommand{\Re}{\operatorname{Re}}

\begin{document}
\maketitle
\begin{abstract}
In optical communications, four-dimensional (4D) modulation formats encode information onto the quadrature components of two arbitrary orthogonal states of polarisation of the optical field. These formats have recently regained attention due their potential power efficiency, nonlinearity tolerance, and ultimately to their still unexplored shaping gains. As in the fibre-optic channel the shaping gain is closely related to the nonlinearity tolerance of a given modulation format, predicting the effect of nonlinearity is key to effectively optimise the transmitted constellation. Many analytical models available in the optical communication literature allow, within a first-order perturbation framework, the computation of the average power of the nonlinear interference (NLI) accumulated in coherent fibre-optic transmission systems. However, all current models only operate under the assumption of a transmitted polarisation-multiplexed, two-dimensional (PM-2D) modulation format. PM-2D formats represent a limited subset of the possible dual-polarisation 4D formats, namely, only those where data transmitted on each polarisation channel are mutually independent and identically distributed. This document presents a step-by-step mathematical derivation of the extension of existing NLI models to the class of arbitrary dual-polarisation 4D modulation formats. In particular, the methodology adopted follows the one of the popular enhanced Gaussian noise model, albeit dropping most assumptions on the geometry and statistic of the transmitted 4D modulation format. The resulting expressions show that, whilst in the PM-2D case the NLI power depends only on different statistical high-order moments of each polarisation component, for a general 4D constellation also several others cross-polarisation correlations need to be taken into account. 
\end{abstract}
\newpage

%%%%%%%%%%%%%%%%%%%%%%%%%%% Sections %%%%%%%%%%%%%%%%%%%%%%%%
\section{Introduction}
With the resurgence of polarisation-diverse, optical coherent detection, transmission of information over an optical fibre is typically performed exploiting four degrees of freedom of the optical field: two quadrature components over two orthogonal states of polarisation. The standard approach consists in encoding data independently over the two polarisation channels using the same two-dimensional (2D) modulation format. The resulting four-dimensional (4D) constellation is often referred to as a \emph{polarisation-multiplexed} 2D (PM-2D) modulation format. The strong point of PM-2D formats is their simplicity of generation and performance analysis: as the two polarisation channels are independent and under the assumption of data-independent cross-polarisation interference in the fibre channel, transmission performance can be evaluated using the 2D component format.  

Despite the popularity of PM-2D formats, a substantial amount of research work in the literature has been devoted to more general 4D formats, i.e. 4D constellations which are not necessarily generated as Cartesian products of a component 2D constellation \cite{Agrell09, Karlsson09}. The reason relies on the fact that, by exploiting the full 4D space, constellation sensitivity and other relevant performance metrics such as mutual information or generalised mutual information can be improved compared to traditional PM-2D formats \cite{Alvarado2015, Eriksson2016, Kojima2017, Chen2019, Chen2020}.   
Previous works on optimised 4D modulation formats have either operated under an additive white Gaussian noise channel hypothesis \cite{Agrell09, Karlsson09, Alvarado2015}, or exploited some heuristic approaches to derive nonlinearly tolerant formats in the fibre-optic channel \cite{Kojima2017,Chen2019,Chen2020}. However, accurately predicting the amount of nonlinear interference generated by transmission of a given constellation in an optical fibre is key to optimise its shape in multiple dimensions. 
    
%%% Previous work on NLI modelling 
Modelling of nonlinear interference (NLI) in optical fibre transmission is quite a mature field of research where an impressive amount of progress was made in the first half of the 2010s, e.g., in  \cite{Poggiolini2012, Carena2014, Mecozzi2012, Dar2013}. In particular, \cite{Mecozzi2012, Dar2013} introduced for the first time the possibility of predicting the dependency of the nonlinear interference power as a function of the modulation format features, i.e. geometrical shape and statistical properties. Among other assumptions, one underlying key point of all previous models is the transmission of PM-2D modulation formats, where data on the two polarisation channels are assumed to be independent and identically distributed. Under this constraint, one can predict the NLI power using the statistical properties of the 2D component modulation format. It is clear, however, that this approach ceases to be applicable to general dual-polarisation 4D formats, where a single 2D component format might not even exist. 

In this work, we extend the existing analytical expressions for the NLI power to account for dual-polarisation 4D constellations where the two 2D polarisation components are not identically distributed or when, due to its properties in 4D (geometry and probability distribution), there is statistical dependency between them. The undertaken approach is the same as in \cite{Carena2014}, i.e. a frequency-domain, first-order perturbation study. However, unlike \cite{Carena2014}, no assumptions are made on either the marginal or joint statistics of the two polarisation components of the transmitted 4D constellation (besides being zero-mean). The final expressions reveal the impact of several cross-polarisation statistics on the NLI power.

The formulas presented in this work enable an accurate computation of the NLI power for \emph{all possible dual-polarisation formats in optical fibre transmission}. As a result, a reliable optimisation of both geometry and symbol probability of occurrence of such 4D formats is also enabled for the optical fibre channel.            
\section{Organisation of the document and notation}\label{sec:preamble}
% Organisation of the document
The document is organised as follows: i) in Sec.~\ref{sec:model assumptions} the investigated system model is described and the model assumptions are presented; ii) Secs.~\ref{sec:PSD_periodic} to \ref{sec:final result} are devoted to a step-by-step analytical derivation of the model; iii) ultimately, the main model expression is presented in Sec.~\ref{sec:final result} (see Theorem \ref{th:keyresult}).  
In particular: in Sec.~\ref{sec:PSD_periodic}, the regular perturbation (RP) solution to the frequency domain Manakov equation is derived for a multi-span fibre system and its power spectral density (PSD) is evaluated, in the case of a transmitted periodic signal; in Sec.~\ref{sec:subset_classification}, the contributions of the different high-order moments and cross-polarisation correlations of the transmitted 4D modulation format are highlighted; in Sec.~\ref{sec:evaluation}, these contributions are separately evaluated; in Sec.~\ref{sec:sumallcontr}, all contributions are added together and the final expression for a periodic transmitted signal is derived; finally, Sec.~\ref{sec:final result} introduces the final result for general aperiodic signals.

% Notation
Throughout this manuscript, we denote 2D (column) vectors with boldface letters (e.g., $\boldsymbol{a}$), whereas 2D column vector functions are indicated with boldface capital letters (e.g., $\boldsymbol{E}(f,z),\tilde{\boldsymbol{E}}(t,z)$, etc.). 
$\mathcal{F}\{\cdot\}$, $\mathbb{E}\{ \cdot \}$, and $\Re\{\cdot\}$ indicate the Fourier transform, the statistical expectation, and the real part operators, respectively. The delta distribution is indicated by $\delta(\cdot)$, whereas $\delta_k$ denotes the Kronecker delta defined as  
\begin{equation*}
    \delta_{k}\triangleq 
    \left\{\begin{array}{c}1 \qquad \text{for} \; k=0, \\ 0 \qquad \text{elsewhere}. \end{array}\right.
\end{equation*}
Finally, $\mathbb{Z}$ and $\mathbb{C}$ denote the  integer and complex fields, respectively, and $j$ is the imaginary unit. 
\section{Model assumptions}\label{sec:model assumptions}

\subsection{System model}\label{sec:system_model}
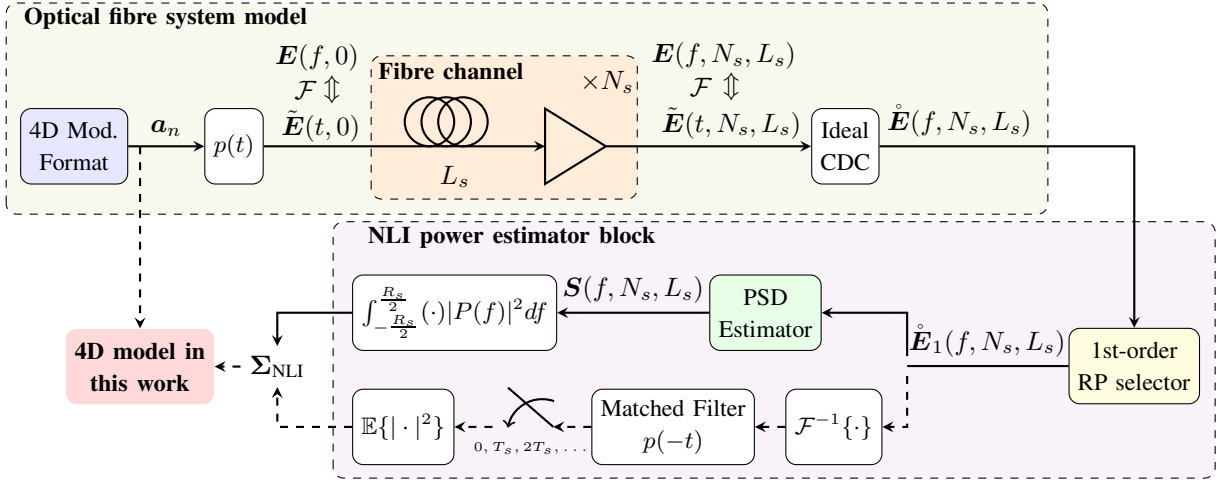
\begin{figure}[tbp]
	\centering
	 \begin{tikzpicture}
    \draw[dashed, rounded corners,fill=olive!5] (-2.3,-0.9) rectangle ++ (13.7,2.8);
	\draw[dashed, fill=orange!15,rounded corners] (2.5,-.7) rectangle ++ (3.5,1.9);
	\draw[dashed, rounded corners,fill=violet!5] (2,-4.4) rectangle ++ (11.6,3.4);
	
	\node (modformat) [simple_box,align=center,fill=blue!10] at (-40pt,0pt) {4D Mod.\\ Format};
	\node (shaping) [simple_box, right=1cm of modformat ,align=center,minimum width=0.1cm,fill=white] {$p(t)$};
	
	%%%%%%%%% Fibre channel %%%%%%%
	\node(Fibre)  [coordinate, right=2.4cm of shaping] {};
	\draw (Fibre) node [thick, anchor=south,circle,minimum size=0.65cm,outer sep=0, draw,xshift=-0.22cm] {};
	\draw (Fibre) node [thick, anchor=south,circle,minimum size=0.65cm,outer sep=0,draw] {};
	\draw (Fibre) node [thick, anchor=south,circle,minimum size=0.65cm,outer sep=0,draw,xshift=+0.22cm] {};
	\node(EDFA_left) [coordinate, right=1.3cm of Fibre] {};
	\coordinate(EDFA_right) at ($(EDFA_left)+(.8,0)$);
	\draw[thick] ($(EDFA_left)+(0,-0.5)$) -- ($(EDFA_left)+(0,+0.5)$) --  (EDFA_right) -- cycle;
	%%%%%%%%%%%

	\node (cd) [simple_box, right=2.7cm of EDFA_right,fill=white] {Ideal \\ CDC};
	\node (1st_select) [coordinate, right=0.5cm of cd] {};
	\node (1st_select_b) [simple_box, below=1.9cm of cd,xshift=3.8cm ,align=center,fill=yellow!15] {1st-order\\RP selector};
	\node(1st_select_b_right) [coordinate, left=-0.2 of 1st_select_b] {};
	\node(1st_select_b_left) [coordinate, left=0.7 of 1st_select_b] {};
	% pivot node
	\node (psde) [simple_box, fill=green!10, below =1.7cm of 1st_select, xshift=-2cm] {PSD \\ Estimator};
	\node (ift)  [simple_box, below=0.5cm of psde, xshift=.9cm,fill=white] {$\mathcal{F}^{-1}\{\cdot\}$};
	\node (integ) [simple_box, left= 2cm of psde,fill=white] {$\int_{-\frac{R_s}{2}}^{\frac{R_s}{2}}(\cdot)|P(f)|^2df$};
	\node (mf)  [simple_box, left=0.4cm of ift,fill=white] {Matched Filter \\ $p(-t)$};
	\node (samp_r) [coordinate, left=.2cm of mf] {};
	\node (samp_l) [coordinate, left=1cm of samp_r] {};
	\node (var)  [simple_box, left=0.6cm of samp_l,fill=white] {$\mathbb{E}\{|\cdot|^2\}$};
	
	%\draw [arrow] (1st_select) --++(0pt,-28pt) node[right]{$\boldsymbol{E}_1(f,N_sL_s)$}--(1st_select|-psde) coordinate (2ndrow)--(psde);
	\draw [-,thick] (1st_select_b) --++(-55pt,0pt)node[above]() {$\mathring{\boldsymbol{E}}_1(f,N_s,L_s)$}--++(-30pt,0pt)node[](1st_select_b_left){}; 
	\draw [arrow] (1st_select_b_left) |- (psde);
	\draw [dashed,arrow] (1st_select_b_left) |- (ift);
	%\draw [arrow] (1st_select) -- node[anchor=south] {$\boldsymbol{E}_1$} (psde);
	\draw [arrow] (psde) -- node[anchor=south] {$\boldsymbol{S}(f,N_s,L_s)$} (integ);

	\draw [arrow] (modformat)-- node[anchor=south]{$\boldsymbol{a}_n$} (shaping);
	\draw [thick] (shaping) -- node[anchor=south,xshift=-0.5cm,yshift=.87cm] {$\boldsymbol{E}(f,0)$} (Fibre);
	\node[below=0.1cm of Fibre,xshift=-1.5cm,yshift=1.15cm] (udarrow1){$\Updownarrow$};
	\node[left=-0.15cm of udarrow1] (F){$\mathcal{F}$};
	%\node[below=-0.1cm of udarrow1]
	\node[below=-0.15cm of udarrow1,xshift=-0.15cm] (Ef){$\tilde{\boldsymbol{E}}(t,0)$};
	
	\draw [arrow] (Fibre) -- (EDFA_left);
	\draw [arrow] (EDFA_right) -- node[anchor=south,xshift=0.2cm,yshift=.9cm] {$\boldsymbol{E}(f,N_s,L_s)$}  (cd);
	\node[below=0.1cm of cd,xshift=-1.5cm,yshift=1.7cm] (udarrow2){$\Updownarrow$};
	\node[left=-0.1cm of udarrow2] (F){$\mathcal{F}$};
	\node[below=-0.15cm of udarrow2] (Ef){$\tilde{\boldsymbol{E}}(t,N_s,L_s)$};
	\draw [arrow] (cd) -| node[anchor=south,xshift=-2.3cm] {$\mathring{\boldsymbol{E}}(f,N_s,L_s)$} (1st_select_b);
	
    \node[left=0.5cm of var,yshift=.8cm] (sigma) {$\boldsymbol{\Sigma}_{\text{NLI}}$};
	\draw [arrow] (integ)--(integ-|sigma)--(sigma);
	\draw [dashed,arrow] (var)--(var-|sigma)--(sigma);
    %\node (a) [coordinate, right=0.2cm of modformat] {};
    \node (4Dmod) [simple_box_b,left=0.34cm of sigma,align=center,fill=red!15] {\textbf{4D model in} \\ \textbf{this work}};
	\draw [dashed,arrow] (sigma)--(4Dmod);
	\draw [dashed,arrow](modformat-|4Dmod)--(4Dmod);
	
	\node (S) at (5.6,0.85) {$\times N_s$};
	\node (L) at (3.6,-0.4) {$L_s$};
	\node (SM) at (-0.2,1.7) {\small\textbf{Optical fibre system model}};
	\node (NLIV) at (4.35,-1.2) {\small\textbf{NLI power estimator block}};
	\node (FC) at (3.55,1) {\small\textbf{Fibre channel}};
	
	%% Sampling & ift 
	\node (samp_r) [coordinate, left=0.5cm of mf] {};
	\node (samp_l) [coordinate, left=.8cm of samp_r] {};
	\draw [dashed,arrow] (mf) -- (samp_r);
	\draw [dashed,arrow] (ift)-- (mf);
	\draw [dashed,arrow] (samp_l) -- (var);
	
	\draw [thick] (samp_r) -- (4.3,-3.2);
	\draw [thick,<-] (4.3,-3.73) arc (170:70:13pt);
    \node (T) at (4.6,-4) {\tiny{$0,T_s,2T_s, \ldots$}};

    %\fill (c) circle[radius=1pt];
\end{tikzpicture}
	\caption{System model under investigation in this document which consists of an optical fibre system model and an NLI variance estimation block. The two branches in the NLI variance estimator block indicate alternative ways of estimating $\boldsymbol{\Sigma}_{\text{NLI}}$.}
	\label{fig:system_model}
\end{figure}

%% System model subsection
The baseband equivalent model of the optical fibre system under investigation in this document is shown in Fig.~\ref{fig:system_model}. The fibre channel is a multi-span fibre system using Erbium-doped fibre amplification. In this manuscript, it is assumed that a single-channel signal is transmitted. The transmitter is assumed to generate for each symbol period $n$ the 4D symbol $\boldsymbol{a}_{n}=[a_{x,n}, a_{y,n}]^{T}$ where $a_{x,n},a_{y,n}\in \mathbb{C}$ are complex symbols modulated on two arbitrary orthogonal polarisation states $x$ and $y$, respectively. Linear modulation with a single, real, pulse $p(t)$ on $x$ and $y$ polarisation is adopted. 
The pulse $p(t)$ with spectrum $P(f)$ is assumed to be strictly band-limited within the range of frequencies 
$[-R_s/2, R_s/2]$.
As discussed in Sec.~\ref{subsec:sig_period}, the transmitted signal $\tilde{\boldsymbol{E}}(t,0)$\footnote{In this paper, the first variable of the optical field represents either the time or frequency variable, whereas the second one represents the fibre propagation section. An exception is made for the multi-span system case, where second and third variable are assigned to the number of spans and span length, resp. This highlights the joint dependence of the output optical field on these two variables, as shown later in the paper.} is assumed to be periodic with period $T$, such that 
\begin{equation}\label{eq:lin_mod}
\tilde{\boldsymbol{E}}(t,0)=\sum_{n=0}^{W-1}\boldsymbol{a}_{n}p(t-nT_s), \qquad \text{for} \;\; 0\leq t \leq T,   
\end{equation}
and $T_s=1/R_s=T/W$ represents the symbol period, and $R_s$ is the symbol rate. A schematic representation of the transmitted signal is shown in Fig.~\ref{fig:time_sketch}.

The signal is transmitted over $N_s$ (homogeneous) fibre spans, each of length $L_s$ and each followed by an ideal lumped optical amplifier whose gain exactly recovers from the span losses. Since in this document we are only concerned about the prediction of NLI arising from the signal-signal nonlinear interactions along the fibre propagation, the optical noise added by the amplifier plays no role in the model and will be entirely neglected. The signal at the channel output $\tilde{\boldsymbol{E}}(t,N_s,L_s)$ is ideally compensated for accumulated chromatic dispersion in the link (see Sec.~\ref{sec:PSD_periodic}). From the resulting frequency-domain signal $\mathring{\boldsymbol{E}}(f,N_s,L_s)$ (Fig.~\ref{fig:system_model}) we ideally isolate the first-order regular perturbation (RP) term $\boldsymbol{E}_1(f,N_s,L_s)$ (see Sec.~\ref{sec:PSD_periodic}) and we compute its PSD $\boldsymbol{S}(f,N_s,L_s)$. The vector of the NLI powers $\boldsymbol{\Sigma}_{\text{NLI}}\triangleq [\sigma^2_{\text{NLI},x},\sigma^2_{\text{NLI},y}]^{T}$ for both $x$ and $y$ polarisations, is obtained by integrating over the frequency interval $[-R_s/2, \; R_s/2]$ the NLI PSD weighted by the function $|P(f)|^2$, where $P^*(f)$ is the frequency response of a matched filter (MF) for the system under consideration. As shown in Fig.~\ref{fig:system_model}, this quantity is equivalent to the variance of the output of the MF followed by symbol-rate sampling, which more naturally arises when assessing the transmission performance of systems employing an MF at the receiver. The model in this manuscript provides an analytical relationship between the statistical features of the transmitted symbols $\boldsymbol{a}_n$ and $\boldsymbol{\Sigma}_{\text{NLI}}$.    

\begin{figure}[t!]
	\centering
	 \begin{tikzpicture}
		\draw[arrow] (-4,0) -- (7,0) node [below] {{\footnotesize$t$~[s]}};
		\draw[thick] (-4,.1) -- (-4,-.1) node [below] {{\footnotesize $0$}};
		\draw[thick] (-3.3,.1) -- (-3.3,-.1) node [below] {{\footnotesize $T_s$}};
		\draw[thick] (-2.6,.1) -- (-2.6,-.1) node [below] {{\footnotesize $2T_s$}};
		\draw[thick] (-1.9,.1) -- (-1.9,-.1) node [below] {{\footnotesize $3T_s$}};
		\draw[thick] (0,.1) -- (0,-.1) node [below,xshift=.5cm] {{\footnotesize $T=WT_s$}};
		\draw[thick] (4,.1) -- (4,-.1) node [below] {{\footnotesize $2T$}};
		\draw[darrow] (0,-.9) -- node[anchor=north] {\small Period $T=\frac{1}{\Delta_f}$} (4,-.9) ;
		
		\draw [decorate,decoration={brace,amplitude=5pt}](-4,.3) -- (-.05,.3) node [black,midway,yshift=0.4cm] {\footnotesize $W$ symbols, $\hat{E}(t,0)$};
		\draw [decorate,decoration={brace,amplitude=5pt}](0.05,.3) -- (4,.3) node [black,midway,yshift=0.4cm] {\footnotesize $W$ symbols, $\hat{E}(t-T,0)$};
		\node (D0) at (5.5,0.35) {$\dots$};
		
		\node (D1) at (5.5,-0.4) {$\dots$};
		\node (D2) at (-1,-0.4) {$\dots$};
		\node (Ts) at (-2,-1.2) {$T_s=\frac{1}{R_s}$};
		\node (Df) at (5.35,-.9) {$\Delta_f\rightarrow 0$};
		\node (A)  at (5.5,-1.2){$\Downarrow$};% \\;
		\node (Df) at (5.55,-1.5) {$T\rightarrow \infty \Leftrightarrow W\rightarrow \infty $};
\end{tikzpicture}
	\caption{Schematic representation of the periodic signal assumption where $W$ symbols are transmitted every $T$~[s], each symbol with a duration of $T_s$~[s]. The periodicity assumption will be lifted in Sec.~\ref{sec:final result} by making $\Delta_f\rightarrow 0$ .}
	\label{fig:time_sketch}
\end{figure}
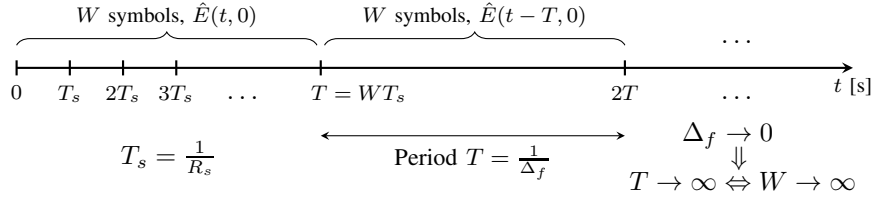

%% 4D formats definition
\subsection{Dual-polarisation 4D vs. PM-2D formats}\label{subsec:4Dformats}
The model presented in this document allows the prediction of the NLI for generic 4D real modulation formats. A 4D format is defined as a set

\begin{equation}
    \mathcal{A}\triangleq\{\boldsymbol{a}^{(i)}=[a^{(i)}_x,a^{(i)}_y]^{T}\in \mathbb{C}^2, \;\; i=1,2,\dots,M\},
    \label{eq:4Dsymbols}
\end{equation}

where $a_x$ and $a_y$ are the symbols modulated on two orthogonal polarisation states $x$ and $y$, respectively, and $M$ is the modulation cardinality. It can be seen that the elements in $\mathcal{A}$ are 2D vectors in $\mathbb{C}$ as opposed to 4D. This is only due the to baseband-equivalent representation of signals used throughout this paper, while it is common to refer to a modulation format dimensionality based on the real signal dimensions, which justifies the 4D format label. 

Two important particular cases of the formats in \eqref{eq:4Dsymbols} are: i) the so-called polarisation-multiplexed 2D (PM-2D) modulation formats, which are characterised by $\mathcal{A}=\mathcal{X}^2$, $\mathcal{X}\in\mathbb{C}$ where $\mathcal{X}$ represents the 2D component constellation; ii) polarisation-hybrid 2D modulation formats characterized by $\mathcal{A}=\mathcal{X}\times\mathcal{Y}$, with $\mathcal{X},\mathcal{Y}\in\mathbb{C}$, $\mathcal{X}\neq\mathcal{Y}$, where $\mathcal{X}$ and $\mathcal{Y}$ are two distinct component 2D formats in $x$ and $y$ polarisation, respectively. PM-2D formats are the most common ones in optical communications due to their generation's simplicity. Both PM-2D and polarisation-hybrid 2D formats are often analysed in terms of their 2D polarisation components. This is because $\mathcal{A}$ can be factorised in two-component formats. If the generic transmitted constellation point is regarded as a random variable, in a conventional PM-2D format the two polarisation components are statistically independent. In the remainder of this paper, no specific assumption on either the geometry or the statistic of the transmitted 4D symbols will be made, except the zero-mean feature $\mathbb{E}\{\boldsymbol{a}^{(i)}\}=\textbf{0}.$

\subsection{Transmitted signal form}\label{subsec:sig_period}
Let $\tilde{\boldsymbol{E}}(t,z)=\tilde{E}_x(t,z)\boldsymbol{i}_x+\tilde{E}_y(t,z)\boldsymbol{i}_y$ be the complex envelope of the optical field vector at time $t$ and fibre section $z$, and $\boldsymbol{i}_x$, $\boldsymbol{i}_y$ denote 2 linear orthonormal polarisations along direction $x$ and $y$, respectively of the transversal plane of propagation. Let also $\boldsymbol{E}(f,z)=E_x(f,z)\boldsymbol{i}_x+E_y(f,z)\boldsymbol{i}_y$ be
the (vector) Fourier transform of $\tilde{\boldsymbol{E}}(t,z)$ defined as 
\begin{equation*}
\boldsymbol{E}(f,z)=\mathcal{F}\{\tilde{\boldsymbol{E}}(t,z)\}\triangleq\int_{-\infty}^{\infty}\tilde{\boldsymbol{E}}(t,z)e^{-j2\pi ft}dt.
\label{eq:FTdef}    
\end{equation*}

Because of the periodicity assumption made in \eqref{eq:lin_mod} (see Fig.~\ref{fig:time_sketch}), $\tilde{\boldsymbol{E}}(t,0)$ we can write $\tilde{\boldsymbol{E}}(t,0)$ as 
\begin{equation}
\tilde{\boldsymbol{E}}(t,0)=\sum_{k=-\infty}^{\infty} \boldsymbol{C}_k\,e^{j2\pi k \Delta_ft}, 
\label{eq:FourierSeries}    
\end{equation}
where $\boldsymbol{C}_k=[C_{x,k},C_{y,k}]^T$, $C_{x/y,k}$ are the Fourier series coefficients of $\tilde{\boldsymbol{E}}(t,0)$, and $\Delta_f=1/T$ is the frequency spacing of the spectral lines in $E_{x/y}(f,z)$. Hence, $\boldsymbol{E}(f,0)$ can be then written as 
\begin{equation}
\begin{split}
\boldsymbol{E}(f,0)=\sum_{k=-\infty}^{\infty} \boldsymbol{C}_{k}\,\delta(f-k\Delta_f).
\end{split}
\label{eq:periodic_spectrum}    
\end{equation}
Since each component of $\tilde{\boldsymbol{E}}(t,0)$ is periodic with period $T$, we can write 

\begin{equation*}
\tilde{\boldsymbol{E}}(t,0)=\sum_{n=-\infty}^{\infty}\hat{\boldsymbol{E}}(t-nT,0),
\end{equation*}
where, as per assumption in \eqref{eq:lin_mod}, we have
\begin{equation*}
\hat{\boldsymbol{E}}(t,0)\triangleq
\begin{dcases}
\sum_{n=0}^{W-1} \boldsymbol{a}_{n} p(t-nT_s), & \text{for} \;\; 0\leq t\leq T \\
0,& \text{ otherwise}
\end{dcases}.
\end{equation*}

Under the above assumptions, the Fourier coefficients in \eqref{eq:FourierSeries}, for $k \in \mathbb{Z}$, are given by
\begin{subequations}
\begin{align}
\boldsymbol{C}_{k}&=\Delta_f \int_{0}^{T} \hat{\boldsymbol{E}}(t,0) e^{-j2\pi k \Delta_f t}dt\label{eq:FourierCoeff1}   \\
&=\Delta_f \int_{0}^{T} \sum_{n=0}^{W-1}\boldsymbol{a}_{n}p(t-nT_s)  e^{-j2\pi k \Delta_f t}dt \label{eq:FourierCoeff2} \\
&=\Delta_f \sum_{n=0}^{W-1}\boldsymbol{a}_{n}\int_{0}^{T} p(t-nT_s)  e^{-j2\pi k \Delta_f t}dt \label{eq:FourierCoeff3}\\
&\approx\Delta_f \sum_{n=0}^{W-1}\boldsymbol{a}_{n}P( k \Delta_f)e^{-j2\pi k \Delta_f n T_s} \label{eq:FourierCoeff4}\\
&=\Delta_f P( k \Delta_f)\sum_{n=0}^{W-1}\boldsymbol{a}_{n}e^{-j2\pi \frac{kn}{W}} \label{eq:FourierCoeff5}\\
&= \sqrt{\Delta_f} P( k \Delta_f)\boldsymbol{\nu_{k}},
\label{eq:FourierCoeff6}
\end{align}
\end{subequations}
where $P(f)\triangleq\mathcal{F}\{p(t)\}$ and 
\begin{equation}
\boldsymbol{\nu}_{k}=[\nu_{x,k},\nu_{y,k}]^T=\sqrt{\Delta_f} \sum_{n=0}^{W-1}\boldsymbol{a}_{n}e^{-j2\pi \frac{kn}{W}}, \qquad\qquad  \forall \; k \in \mathbb{Z},
\label{eq:symbol_fft}
\end{equation}
is the discrete Fourier transform of the sequence $\boldsymbol{a}_{n}$, $n=0,1,2,\ldots, W-1$. 
Note that the approximation in \eqref{eq:FourierCoeff3}--\eqref{eq:FourierCoeff4} is justified only for large enough values of $T$ as\footnote{Assuming without loss of generality that $p(t)$ is causal.}
\begin{equation*}
\lim_{T\rightarrow\infty}\int_{0}^{T} p(t-nT_s)  e^{-j2\pi k \Delta_f t}dt=\mathcal{F}\{p(t-nT_s)\}\bigr|_{f=k\Delta_f}, 
\end{equation*}
and letting $T\rightarrow\infty$ will be the approach taken at a later stage in this derivation.

Finally, combining \eqref{eq:periodic_spectrum} and \eqref{eq:FourierCoeff6} we obtain 
\begin{equation}
\boldsymbol{E}(f,0)=\sqrt{\Delta_f} \sum_{k=-\infty}^{\infty}P( k \Delta_f)\,\boldsymbol{\nu}_{k}\;\delta(f-k \Delta_f)\approx \sum_{k=-W/2}^{W/2}P( k \Delta_f)\,\boldsymbol{\nu}_{k}\;\delta(f-k \Delta_f). 
\label{eq:disc_spec_1}    
\end{equation}
where the approximate equality on the right-hand side of \eqref{eq:disc_spec_1} stems from the fact that $p(t)$ is assumed to be strictly or quasi strictly band-limited (see Sec.~\ref{sec:system_model}). Hence, $P(k \Delta_f)$ is effectively equal to zero for $k=-W/2,-W/2+1,\ldots,W/2$.\footnote{Here, $W$ is assumed even without loss of generality.}
\section{PSD of the first-order NLI for periodic transmitted signals}\label{sec:PSD_periodic}

To find an analytical expression for the NLI power, first a solution as explicit as possible to the Manakov equation \cite{Marcuse1997}
\begin{align}
    \frac{\partial \tilde{\boldsymbol{E}}(t,z)}{\partial z}&=-\frac{\alpha}{2}\tilde{\boldsymbol{E}}(t,z)-j\frac{\beta_2}{2}\frac{\partial^2 \tilde{\boldsymbol{E}}(t,z)}{\partial t^2}+j\frac{8}{9}\gamma |\tilde{\boldsymbol{E}}(t,z)|^2 \tilde{\boldsymbol{E}}(t,z),
    \label{eq:Manakov}
\end{align}
must be found. Eq.~\eqref{eq:Manakov} describes the propagation of the optical field $\tilde{\boldsymbol{E}}(t,z)$ in a single strand of fibre (e.g.,~a fibre span with no amplifier in the system in Fig.~\ref{fig:system_model}). In this case, $\alpha$, $\beta_2$ and $\gamma$ representing the attenuation, group velocity dispersion and nonlinearity coefficients, respectively, can be assumed to be spatially constant. As it is well known, general closed-form solutions are not available for \eqref{eq:Manakov}. Like most of the existing NLI power models in the literature, the model derived here operates within a first-order perturbative framework. In particular, a frequency-domain first-order regular perturbation (RP) approach in the $\gamma$ coefficient is performed \cite{Vannucci2002, Johannisson2013}, i.e., the Fourier transform of the solution in \eqref{eq:Manakov} expressed as

\begin{equation}\label{eq:RP_general}
\boldsymbol{E}(f,z)=\sum_{n=0}^{\infty}\gamma^n\boldsymbol{A}_n(f,z)\approx\boldsymbol{A}_0(f,z)+\gamma\boldsymbol{A}_1(f,z),
\end{equation}
where 
\begin{equation}
\boldsymbol{E}_n(f,z)=\gamma^n\boldsymbol{A}_n(f,z) \qquad \text{for} \;\; n=0,1,....,
\label{eq:AtoE}
\end{equation}
represents the so-called $n$th order term of the expansion.

In the following theorem, we present the expressions for $\boldsymbol{E}_0(f,z),$ and $\boldsymbol{E}_1(f,z),$ when a multiple fibre span system like the one in Fig.~\ref{fig:system_model} is considered. These expressions are well-known in the literature (see, e.g., \cite{Vannucci2002}). Nevertheless, we present the proof in Appendix \ref{app:theoRP} for completeness.

\begin{theorem}[First-order frequency-domain RP solution for a multi-span fibre system]\label{The:RP_NLSE}
Let $\boldsymbol{E}(f,z)$ be the solution in frequency-domain of the Manakov equation for the system in Fig.~\ref{fig:system_model} with initial condition at distance $z=0$ given by the transmitted signal $\boldsymbol{E}(f,0)$. Then, the first-order RP solution after $N_s$ spans $\boldsymbol{E}(f,N_s,L_s)$ is given by
\begin{equation*}
\boldsymbol{E}(f,N_s,L_s)\approx\boldsymbol{E}_0(f,N_s,L_s)+\boldsymbol{E}_1(f,N_s,L_s), 
\end{equation*}
where the zeroth-order term is given by 
\begin{equation*}
\boldsymbol{E}_0(f,N_s,L_s)= \boldsymbol{E}_0(f,0)e^{j2\pi^2f^2\beta_2N_sL_s}, 
\end{equation*}
and the first-order term is
\begin{equation}
\boldsymbol{E}_{1}(f,N_s,L_s)=-j\frac{8}{9}\gamma e^{j2\pi^2 f^2\beta_2 N_sL_s}\int_{-\infty}^{\infty}\int_{-\infty}^{\infty}\boldsymbol{E}^{T}(f_1,0)\boldsymbol{E}^*(f_2,0)\boldsymbol{E}(f-f_1+f_2,0) \eta(f_1,f_2,f,N_s,L_s)df_1 df_2,
\label{eq:1st_order_RP_solution_f}
\end{equation}
with
\begin{equation} \label{eq:fwm_efficiency}
 \eta(f_1,f_2,f,N_s,L_s)\triangleq\frac{1-e^{-\alpha L_s}e^{j4\pi^2\beta_2(f-f_1)(f_2-f_1)L_s}}{\alpha-j4\pi^2\beta_2(f-f_1)(f_2-f_1)}\sum_{l=1}^{N_s}e^{-j4\pi^2\beta_2 (l-1)(f-f_1)(f_2-f_1)L_s},
\end{equation}
where $N_s$ and $L_s$ are the number of spans and the span length of each span, respectively. 
\end{theorem}
\begin{proof}
See Appendix \ref{app:theoRP}.
\end{proof}

While Theorem \ref{The:RP_NLSE} gives an approximation for the field at the output of the fibre, we are interested in the field after ideal CDC (see Fig.~\ref{fig:system_model}). Ideal CDC ideally removes the exponential $e^{j2\beta_2\pi^2 f^2 N_sL_s}$ from \eqref{eq:1st_order_RP_solution_f}, leading to

a first-order term in the RP solution for the system in Fig.~\ref{fig:system_model} given by

\begin{equation}
\mathring{\boldsymbol{E}}_1(f,N_s,L_s)=[\mathring{E}_{1,x},\mathring{E}_{1,y}]^{T}=-j\frac{8}{9}\gamma\int_{-\infty}^{\infty}\int_{-\infty}^{\infty} \boldsymbol{E}^{T}(f_1,0)\boldsymbol{E}^*(f_2,0)\boldsymbol{E}(f-f_1+f_2,0)\eta(f_1,f_2,f,N_s,L_s)df_1df_2.
\label{eq:1st_order_RP_solution_RX}    
\end{equation}

Substituting the spectrum of the transmitted periodic signal \eqref{eq:disc_spec_1} in \eqref{eq:1st_order_RP_solution_RX}, we obtain, for instance for the $x$ component in \eqref{eq:1st_order_RP_solution_RX},

\begin{equation*}
\begin{split}
\mathring{E}_{1,x}(f,N_s,L_s)&=-j\frac{8}{9}\gamma\Delta_f^{3/2}\sum_{k=-\infty}^{\infty}\sum_{m=-\infty}^{\infty}\sum_{n=-\infty}^{\infty}P(k\Delta_f)P^*(m\Delta_f)P(n\Delta_f)\left(\nu_{x,k}\nu^*_{x,m}\nu_{x,n}+\nu_{y,k}\nu^*_{y,m}\nu_{x,n}\right)\\
&\cdot\int_{-\infty}^{\infty}\int_{-\infty}^{\infty}\delta(f_1-k\Delta_f)\delta(f_2-m\Delta_f)\delta(f-f_1+f_2-n\Delta_f)\eta(f_1,f_2,f,N_s,L_s)df_1 df_2.
\end{split}
\label{eq:1st_order_RP_periodic}
\end{equation*}

Integrating in $f_1$ and $f_2$, we obtain\footnote{The product of distributions is not well-defined in the standard distribution theory framework. However, in some cases, such products can be dealt with in the same way as products between distributions and smooth functions. This approach was formalised by Colombeau in his theory of product between distributions \cite{Colombeau1984}.}
\begin{equation}
\begin{split}
\mathring{E}_{1,x}&(f,N_s,L_s)=-j\frac{8}{9}\gamma\Delta_f^{3/2}\sum_{k=-\infty}^{\infty}\sum_{m=-\infty}^{\infty}\sum_{n=-\infty}^{\infty}P(k\Delta_f)P^*(m\Delta_f)P(n\Delta_f)\\
&\cdot\left(\nu_{x,k}\nu^*_{x,m}\nu_{x,n}+\nu_{y,k}\nu^*_{y,m}\nu_{x,n}\right)\eta(k\Delta_f,m\Delta_f,(k-m+n)\Delta_f,N_s,L_s)\delta(f-(k-m+n)\Delta_f).
\end{split}
\label{eq:1st_order_RP_periodic_2}
\end{equation}

Setting $i=k-m+n$ and defining 
\begin{align}
\begin{split}
\eta_{k,m,n}&\triangleq \eta(k\Delta_f,m\Delta_f,(k-m+n)\Delta_f,N_s,L_s)\\
&=\frac{1-e^{-\alpha L_s}e^{j4\pi^2\Delta_f^2\beta_2 (n-m)(m-k)L_s}}{\alpha-j4\pi^2\Delta_f^2\beta_2 (n-m)(m-k)}\sum_{l=1}^{N_s}e^{-j4\pi^2\Delta_f^2\beta_2 (l-1)(n-m)(m-k)L_s},
\end{split}
\label{eq:eta_def}
\end{align}
\eqref{eq:1st_order_RP_periodic_2} can be rewritten as 
\begin{equation}
\mathring{E}_{1,x}(f,N_s,L_s)=\sum_{i=-\infty}^{\infty}c_i\delta(f-i\Delta_f),
\label{eq:1st_order_RP_discrete}    
\end{equation}
where 
\begin{equation}
c_i\triangleq-j\frac{8}{9}\gamma\Delta_f^{3/2}\sum_{\substack{(k,m,n) \in \mathcal{S}_i}}P(k\Delta_f)P^*(m\Delta_f)P(n\Delta_f)\left(\nu_{x,k}\nu^*_{x,m}\nu_{x,n}+
\nu_{y,k}\nu^*_{y,m}\nu_{x,n}\right)\eta_{k,m,n},
\label{eq:ci_coeff}
\end{equation}
and
\begin{equation}\label{eq:setSi}
\mathcal{S}_i\triangleq \{(k,m,n)\in \mathbb{Z}^3:k-m+n=i\}.
\end{equation}

The PSD of the received nonlinear interference (to the 1st-order) is defined as 
\begin{equation}
\boldsymbol{S}(f,N_s,L_s)=[
     S_x(f,N_s,L_s), S_y(f,N_s,L_s)]^{T} 
\triangleq\left[
\mathbb{E}\left\{|\mathring{E}_{1,x}(f,N_s,L_s)|^2\right\},
\mathbb{E}\left\{|\mathring{E}_{1,y}(f,N_s,L_s)|^2\right\}\right]^{T}.
\label{eq:PSD_NLI}
\end{equation}
For periodic signals, which in the frequency domain can be expressed as in \eqref{eq:1st_order_RP_discrete}, the PSD can be expressed as \cite[Sec.~4.1.2]{ProakisDSP3rdEd}
\begin{equation}
S_x(f,N_s,L_s)=\sum_{i=-\infty}^{\infty}\mathbb{E}\{|c_i|^2\}\delta(f-i\Delta_f).
\label{eq:PSD_form}
\end{equation}

Substituting the expression \eqref{eq:ci_coeff} for $c_i$ in \eqref{eq:PSD_form} we obtain 
\begin{subequations}
\begin{align}
\begin{split}
S_x(f,N_s,L_s)&=\left(\frac{8}{9}\right)^2\gamma^2\Delta_f^3\sum_{i=-\infty}^\infty\delta(f-i\Delta_f)\mathbb{E}\biggl\{\sum_{(k,m,n) \in \mathcal{S}_i}P(k\Delta_f)P^*(m\Delta_f)P(n\Delta_f)\left(\nu_{x,k}\nu^*_{x,m}\nu_{x,n}\right.\\
&\left.+\nu_{y,k}\nu^*_{y,m}\nu_{x,n}\right)\eta_{k,m,n}\sum_{(k^{\prime},m^{\prime},n^{\prime}) \in \mathcal{S}_i}P^*(k^{\prime}\Delta_f)P(m^{\prime}\Delta_f)P^*(n^{\prime}\Delta_f)\left(\nu^*_{x,k^{\prime}}\nu_{x,m^{\prime}}\nu^*_{x,n^{\prime}}+\nu^*_{y,k^{\prime}}\nu_{y,m^{\prime}}\nu^*_{x,n^{\prime}}\right)\\
&\eta^*_{k^{\prime},m^{\prime},n^{\prime}}\biggl\}
\label{eq:PSD_x1}
\end{split}
\\
\begin{split}
&=\left(\frac{8}{9}\right)^2\gamma^2\Delta_f^3\sum_{i=-\infty}^\infty\delta(f-i\Delta_f)\mathbb{E}\biggl\{\sum_{\substack{(k,m,n) \in \mathcal{S}_i \\ (k^{\prime},m^{\prime},n^{\prime})\in \mathcal{S}_{i}}}\mathcal{P}_{k,m,n,k^{\prime},m^{\prime},n^{\prime}}\left(\nu_{x,k}\nu^*_{x,m}\nu_{x,n}\nu^*_{x,k^{\prime}}\nu_{x,m^{\prime}}\nu^*_{x,n^{\prime}}\right.\\
&+\nu_{x,k}\nu^*_{x,m}\nu_{x,n}\nu^*_{y,k^{\prime}}\nu_{y,m^{\prime}}\nu^*_{x,n^{\prime}}+\nu_{y,k}\nu^*_{y,m}\nu_{x,n}\nu^*_{x,k^{\prime}}\nu_{x,m^{\prime}}\nu^*_{x,n^{\prime}}\\
&+\left.\nu_{y,k}\nu^*_{y,m}\nu_{x,n}\nu^*_{y,k^{\prime}}\nu_{y,m^{\prime}}\nu^*_{x,n^{\prime}}\right)\eta_{k,m,n}\eta^*_{k^{\prime},m^{\prime},n^{\prime}}\biggl\}
\label{eq:PSD_x2} 
\end{split}
\end{align}
\end{subequations}
where we have defined 
\begin{equation}
\mathcal{P}_{k,m,n,k^{\prime},m^{\prime},n^{\prime}}\triangleq P(k\Delta_f)P^*(m\Delta_f)P(n\Delta_f)P^*(k^{\prime}\Delta_f)P(m^{\prime}\Delta_f)P^*(n^{\prime}\Delta_f).
\label{eq:P_def}  
\end{equation}

The following proposition can be used to make \eqref{eq:PSD_x2} more compact. In particular, we will  group the two inner correlation terms in \eqref{eq:PSD_x2} ( $\nu_{x,k}\nu^*_{x,m}\nu_{x,n}\nu^*_{y,k^{\prime}}\nu_{y,m^{\prime}}\nu^*_{x,n^{\prime}}$ and $\nu_{y,k}\nu^*_{y,m}\nu_{x,n}\nu^*_{x,k^{\prime}}\nu_{x,m^{\prime}}\nu^*_{x,n^{\prime}}$) using this proposition.
\begin{proposition}
For $\mathcal{P}_{k,m,n,k^{\prime},m^{\prime},n^{\prime}}$ in \eqref{eq:P_def}, we have
\begin{align}\label{eq:PSDconj}
&\sum_{\substack{(k,m,n) \in \mathcal{S}_i\\ (k^{\prime},m^{\prime},n^{\prime}) \in \mathcal{S}_i}}\mathcal{P}_{k,m,n,k^{\prime},m^{\prime},n^{\prime}}\nu_{x,k}\nu^*_{x,m}\nu_{x,n}\nu^*_{y,k^{\prime}}\nu_{y,m^{\prime}}\nu^*_{x,n^{\prime}}\eta_{k,m,n}\eta^*_{k^{\prime},m^{\prime},n^{\prime}}\nonumber\\
&=\Biggl(\sum_{\substack{(k,m,n) \in \mathcal{S}_i \\ (k^{\prime},m^{\prime},n^{\prime}) \in \mathcal{S}_i}}\mathcal{P}_{k,m,n,k^{\prime},m^{\prime},n^{\prime}}\nu_{y,k}\nu^*_{y,m}\nu_{x,n}\nu^*_{x,k^{\prime}}\nu_{x,m^{\prime}}\nu^*_{x,n^{\prime}}\eta_{k,m,n}\eta^*_{k^{\prime},m^{\prime},n^{\prime}}\Biggl)^*.
\end{align}
\label{prop:conj}
\end{proposition}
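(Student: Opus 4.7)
My plan is to prove the identity by an index-swap argument: take the complex conjugate of the right-hand side of \eqref{eq:PSDconj}, then rename $(k,m,n)\leftrightarrow (k',m',n')$ in the dummy indices of the double sum, and check that the result matches the left-hand side factor by factor. Since both triples run over the same set $\mathcal{S}_i$, the relabeling leaves the domain of summation invariant, so the only thing to verify is that each of the three types of factors—the $P$-product $\mathcal{P}_{k,m,n,k',m',n'}$, the sextic $\nu$-monomial, and the product $\eta_{k,m,n}\eta^*_{k',m',n'}$—transforms in the right way.

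First I would handle $\mathcal{P}$. By definition \eqref{eq:P_def}, complex conjugation of $\mathcal{P}_{k,m,n,k',m',n'}$ interchanges starred and unstarred $P$-factors at every position, which is exactly what the index swap does, so $\mathcal{P}^*_{k,m,n,k',m',n'}=\mathcal{P}_{k',m',n',k,m,n}$; hence conjugating and then swapping the two triples returns $\mathcal{P}_{k,m,n,k',m',n'}$. The $\eta$-product is handled the same way: $(\eta_{k,m,n}\eta^*_{k',m',n'})^*=\eta^*_{k,m,n}\eta_{k',m',n'}$, and relabeling restores $\eta_{k,m,n}\eta^*_{k',m',n'}$. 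Finally, the sextic $\nu$-monomial on the right, namely $\nu_{y,k}\nu^*_{y,m}\nu_{x,n}\nu^*_{x,k'}\nu_{x,m'}\nu^*_{x,n'}$, becomes under conjugation $\nu^*_{y,k}\nu_{y,m}\nu^*_{x,n}\nu_{x,k'}\nu^*_{x,m'}\nu_{x,n'}$; after swapping primed and unprimed indices this reads $\nu^*_{y,k'}\nu_{y,m'}\nu^*_{x,n'}\nu_{x,k}\nu^*_{x,m}\nu_{x,n}$, which upon reordering the commuting scalar factors is precisely the $\nu$-monomial on the left-hand side.

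Combining the three checks, the conjugate of the right-hand sum coincides term-by-term with the left-hand sum, proving \eqref{eq:PSDconj}. I expect no genuine obstacle in this proof: the only care needed is bookkeeping of the six indices and of which $P$'s carry conjugates, together with noting that $\mathcal{S}_i$ is symmetric under the swap of the two triples so that neither the summation range nor any linear constraint $k-m+n=i=k'-m'+n'$ is disturbed. The argument is purely algebraic and does not require any assumption on the statistics of the $\nu$'s or on the dispersion kernel $\eta$.
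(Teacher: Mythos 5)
Your proof is correct and follows essentially the same route as the paper's Appendix B: both rest on swapping the primed and unprimed index triples (which leaves the double sum over $\mathcal{S}_i\times\mathcal{S}_i$ invariant), the conjugation property $\mathcal{P}^*_{k,m,n,k^{\prime},m^{\prime},n^{\prime}}=\mathcal{P}_{k^{\prime},m^{\prime},n^{\prime},k,m,n}$, and the term-by-term matching of the $\nu$-monomial and the $\eta$-pair. The only difference is direction of travel (you conjugate the right-hand side and recover the left, while the paper transforms the left-hand side into the conjugate of the right), which is immaterial.
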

\begin{proof}
See Appendix \ref{app:B}.
\end{proof}
Using \eqref{eq:PSDconj}, \eqref{eq:PSD_x2} can be written as
\begin{align}
\begin{split}
S_x(f,N_s,L_s)&=\left(\frac{8}{9}\right)^2\gamma^2\Delta_f^3\sum^\infty_{i=-\infty}\delta(f-i\Delta_f)\mathbb{E}\Biggl\{\sum_{\substack{(k,m,n) \in \mathcal{S}_i \\ (k^{\prime},m^{\prime},n^{\prime}) \in \mathcal{S}_i}}               \mathcal{P}_{k,m,n,k^{\prime},m^{\prime},n^{\prime}}\\
&\cdot\left(\nu_{x,k}\nu^*_{x,m}\nu_{x,n}\nu^*_{x,k^{\prime}}\nu_{x,m^{\prime}}\nu^*_{x,n^{\prime}}+\nu_{y,k}\nu^*_{y,m}\nu_{x,n}\nu^*_{y,k^{\prime}}\nu_{y,m^{\prime}}\nu^*_{x,n^{\prime}}\right)\eta_{k,m,n}\eta^*_{k^{\prime},m^{\prime},n^{\prime}}\\
&+2\Re\{\mathcal{P}_{k,m,n,k^{\prime},m^{\prime},n^{\prime}}\nu_{x,k}\nu^*_{x,m}\nu_{x,n}\nu^*_{y,k^{\prime}}\nu_{y,m^{\prime}}\nu^*_{x,n^{\prime}}\eta_{k,n,m}\eta^*_{k^{\prime},n^{\prime},m^{\prime}}\}\Biggr\}\\
&=\left(\frac{8}{9}\right)^2\gamma^2\Delta_f^3\sum^\infty_{i=-\infty}\delta(f-i\Delta_f)\sum_{\substack{(k,m,n) \in \mathcal{S}_i \\ (k^{\prime},m^{\prime},n^{\prime}) \in \mathcal{S}_i}}               \mathcal{P}_{k,m,n,k^{\prime},m^{\prime},n^{\prime}}\left(\mathbb{E}\left\{\nu_{x,k}\nu^*_{x,m}\nu_{x,n}\nu^*_{x,k^{\prime}}\nu_{x,m^{\prime}}\nu^*_{x,n^{\prime}}\right\}\right.\\
&\left.+\mathbb{E}\left\{\nu_{y,k}\nu^*_{y,m}\nu_{x,n}\nu^*_{y,k^{\prime}}\nu_{y,m^{\prime}}\nu^*_{x,n^{\prime}}\right\}\right)\eta_{k,m,n}\eta^*_{k^{\prime},m^{\prime},n^{\prime}}\\
&+2\Re\{\mathcal{P}_{k,m,n,k^{\prime},m^{\prime},n^{\prime}}\mathbb{E}\left\{\nu_{x,k}\nu^*_{x,m}\nu_{x,n}\nu^*_{y,k^{\prime}}\nu_{y,m^{\prime}}\nu^*_{x,n^{\prime}}\right\}\eta_{k,n,m}\eta^*_{k^{\prime},n^{\prime},m^{\prime}}\}.
\end{split}
\label{eq:PSD_x_2}   
\end{align}

According to \eqref{eq:PSD_x_2}, the calculation of the PSD of the NLI reduces to the computation of a four-dimensional summation (per frequency component $i\Delta_f$) of three sixth-order correlations of the sequence of random variables $\nu_{x/y,n}, \; n=0,1,\dots,W-1$. The $y$-component  $S_y(f,N_s,L_s)$ of the PSD can be calculated once $S_x(f,N_s,L_s)$ is obtained, by simply swapping the polarisation label $x\rightarrow y$ and $y\rightarrow x$. This is due to the invariance of the Manakov equation in \eqref{eq:Manakov} to such a transformation.
\section{Classification of the modulation-dependent contributions in the 6th-order frequency-domain correlation}\label{sec:subset_classification}
In this section, we will break down the frequency-domain sixth-order correlation terms in \eqref{eq:PSD_x_2} to highlight different contributions in terms of 4D modulation-dependent cross-polarisation correlations.

\subsection{Expansion in terms of the stochastic moments of the transmitted modulation format}\label{sec:6th-order_freq_to_time}
To relate the PSD in \eqref{eq:PSD_x_2} to the statistical properties of the transmitted modulation format, we replace   
\eqref{eq:symbol_fft} into \eqref{eq:PSD_x_2}, obtaining 
\begin{align}
\begin{split}
S_x(f,N_s,L_s)&=\left(\frac{8}{9}\right)^2\gamma^2\Delta_f^3\sum_{i=-\infty}^\infty\delta(f-i\Delta_f)\sum_{\substack{(k,m,n) \in \mathcal{S}_i \\ (k^{\prime},m^{\prime},n^{\prime}) \in \mathcal{S}_i}} \bigl[ \mathcal{P}_{k,m,n,k^{\prime},m^{\prime},n^{\prime}}\eta_{k,m,n}\eta^*_{k^{\prime},m^{\prime},n^{\prime}}\\
& \cdot \sum_{\boldsymbol{i}\in\{0,1,\dots,W-1\}^6}\mathsf{S}_{\boldsymbol{i}}(k,m,n,k^{\prime},m^{\prime},n^{\prime})+2\Re\{\mathcal{P}_{k,m,n,k^{\prime},m^{\prime},n^{\prime}}\eta_{k,m,n}\eta^*_{k^{\prime}, m^{\prime}, n^{\prime}}\\
& \cdot \sum_{\boldsymbol{i}\in\{0,1,\dots,W-1\}^6}\mathsf{T}_{\boldsymbol{i}}(k,m,n,k^{\prime},m^{\prime},n^{\prime})\}\bigr],
\end{split}
\label{eq:6D_sum}
\end{align}
where $\boldsymbol{i}\triangleq (i_1,i_2,\dots,i_6)$,
\begin{align}
\begin{split}
\mathsf{S}_{\boldsymbol{i}}(k,m,n,k^{\prime},m^{\prime},n^{\prime})&\triangleq\Delta_f^3\left[\mathbb{E}\left\{a_{x,i_1}a_{x,i_2}^*a_{x,i_3}a_{x,i_4}^*a_{x,i_5}a_{x,i_6}^*\right\}+\mathbb{E}\left\{a_{y,i_1}a_{y,i_2}^*a_{x,i_3}a_{y,i_4}^*a_{y,i_5}a_{x,i_6}^*\right\}\right]\\ 
&\cdot e^{-j\frac{2\pi}{W}(ki_1-mi_2+ni_3-k^{\prime}i_4+m^{\prime}i_5-n^{\prime}i_6)},
\end{split}
\label{eq:S_sum}
\end{align}
and
\begin{align}
\begin{split}
&\mathsf{T}_{\boldsymbol{i}}(k,m,n,k^{\prime},m^{\prime},n^{\prime})\triangleq\Delta_f^3\mathbb{E}\left\{a_{x,i_1}a_{x,i_2}^*a_{x,i_3}a_{y,i_4}^*a_{y,i_5}a_{x,i_6}^*\right\}e^{-j\frac{2\pi}{W}(ki_1-mi_2+ni_3-k^{\prime}i_4+m^{\prime}i_5-n^{\prime}i_6)}.
\end{split}
\label{eq:T_sum}
\end{align}

The terms $\mathsf{S}_{\boldsymbol{i}}(k,m,n,k^{\prime},m^{\prime},n^{\prime})$ and $\mathsf{T}_{\boldsymbol{i}}(k,m,n,k^{\prime},m^{\prime},n^{\prime})$ give rise to several correlations among the transmitted symbols $a_{x,i}$ and $a_{y,j}$ at different time-slots $i, j$, each weighted by a complex exponential. As discussed in Sec.~\ref{sec:model assumptions}, in this work we operate under the assumption that the sequence of vector RVs $\boldsymbol{a}_{i}$ for $i=0,1,\dots,W-1$ are independent, identically distributed (i.i.d.), and with $\mathbb{E}\{\boldsymbol{a}_{i}\}=\mathbb{E}\{\boldsymbol{a}\}=\boldsymbol{0}$. As shown in the following example, this assumption allows us to discard the  $\mathsf{S}_{\boldsymbol{i}}(k,m,n,k^{\prime},m^{\prime},n^{\prime})$ and $\mathsf{T}_{\boldsymbol{i}}(k,m,n,k^{\prime},m^{\prime},n^{\prime})$ terms which are identically zero for some values of $\boldsymbol{i}$. Moreover, as it will be shown in Example \ref{ex:degenerate_elements}, for all other values of $\boldsymbol{i}$, $\mathsf{S}_{\boldsymbol{i}}(k,m,n,k^{\prime},m^{\prime},n^{\prime})$ and $\mathsf{T}_{\boldsymbol{i}}(k,m,n,k^{\prime},m^{\prime},n^{\prime})$ can be expressed as a product of high-order statistical moments of the RVs $a_x$ and $a_y$ which enables a more compact expression for \eqref{eq:6D_sum}.

\begin{example_a}\label{ex:1st_order_moment_zero_contr}
Under the i.i.d. assumption for the sequence of vector RVs $\boldsymbol{a}_{i}$, $i=0,1,...W-1$ made in this work, in any of the cases where
\begin{equation}
i_{\kappa_1}\neq i_{\kappa_2}=i_{\kappa_3}=\dots=i_{\kappa_6} \qquad \text{for} \quad \kappa_1, \kappa_2, \dots, \kappa_6=1,2,\dots,6; \;\; \kappa_1\neq\kappa_2\neq\dots\neq\kappa_6,
\label{eq:class_def}
\end{equation}
any of the sixth-order correlations in \eqref{eq:S_sum} and \eqref{eq:T_sum} degenerate into a product between a first-order moment and a fifth-order correlation. Such a product is equal to zero under our assumption $\mathbb{E}\{a_{x,i}\}=\mathbb{E}\{a_{x}\}=0$. 
For example, for $i_1\neq i_2=i_3=\dots=i_6$, we have 
\begin{align*}
\begin{split}
&\mathbb{E}\{a_{x,i_1}a_{x,i_2}^*a_{x,i_3}a_{x,i_4}^*a_{x,i_5}a_{x,i_6}^*\}=\mathbb{E}\{a_{x,i_1}\}\mathbb{E}\{|a_{x,i_2}|^4 a_{x,i_2}^*\}=\mathbb{E}\{a_{x}\}\mathbb{E}\{|a_{x}|^4 a_{x}^*\}=0.
\end{split}
\end{align*}
From this follows that for the set of elements defined by \eqref{eq:class_def}, also $\mathsf{S}_i(k,m,n,k^{\prime},m^{\prime},n^{\prime})=0$, and $\mathsf{T}_i(k,m,n,k^{\prime},m^{\prime},n^{\prime})=0$.
\end{example_a}
The terms in the class in Example \ref{ex:1st_order_moment_zero_contr} are identically zero regardless of the values taken by $k,m,n,k^{\prime},m^{\prime},n^{\prime}$. However, as it will be shown in Sec.~\ref{sec:evaluation}, many nonzero sixth-order correlations in the sum \eqref{eq:6D_sum} cancel each other for a specific subset of values $k,m,n,k^{\prime},m^{\prime},n^{\prime}$ because of the complex exponential weighing. 

\begin{example_a}\label{ex:degenerate_elements}
Under the i.i.d. assumption for the sequence of vector RVs $\boldsymbol{a}_{i}$, $i=0,1,...W-1$ made in this work, we have that for all elements in the subset $\{\boldsymbol{i}\in\{0,1,\dots,W-1\}^6, i_{1}=i_{2},i_{3}=i_{4}=i_5=i_{6}, i_1\neq i_3\}$
\begin{align*}
\begin{split}
\mathsf{S}_{\boldsymbol{i}}(k,m,n,k^{\prime},m^{\prime},n^{\prime})&=\Delta_f^3[\mathbb{E}\{|a_{x,i_1}|^2\}\mathbb{E}\{|a_{x,i_3}|^4\}+\mathbb{E}\{|a_{y,i_1}|^2\}\mathbb{E}\{|a_{x,i_3}|^2 |a_{y,i_3}|^2\}]\\
&\cdot e^{-j\frac{2\pi}{W}((k-m)i_1+(n-k^{\prime}+m^{\prime}-n^{\prime})i_3)}\\
&=\Delta_f^3[\mathbb{E}\{|a_x|^2\}\mathbb{E}\{|a_x|^4\}+\mathbb{E}\{|a_y|^2\}\mathbb{E}\{|a_x|^2 |a_y|^2\}]e^{-j\frac{2\pi}{W}((k-m)i_1+(n-k^{\prime}+m^{\prime}-n^{\prime})i_3)}
\end{split}
\end{align*}
It can be noted that: i) the sixth-order correlation degenerates into products of marginal (high-order) moments of $a_x$, $a_y$ and into the cross-polarisation moment $\mathbb{E}\{|a_x|^2 |a_y|^2\}$; ii) all elements within the set in this example contribute to the inner summation in \eqref{eq:6D_sum} with the same set of moments, cross-polarisation correlations and products thereof (i.e.,~$\mathbb{E}\{|a_x|^2 \}, \mathbb{E}\{|a_x|^4 \}, \mathbb{E}\{|a_y|^2 \}, \mathbb{E}\{|a_x|^2 |a_y|^2\}$).     
\end{example_a}

In the remainder of this Section, we first partition the six-dimensional space $\boldsymbol{i}\in \{0,1,\dots,W-1\}^6$ and list all sets corresponding to nonzero elements of $\mathsf{S}_{\boldsymbol{i}}(k,m,n,k^{\prime},m^{\prime},n^{\prime})$ and $\mathsf{T}_{\boldsymbol{i}}(k,m,n,k^{\prime},m^{\prime},n^{\prime})$. As shown in Example \ref{ex:degenerate_elements}, this will help highlighting the contribution of a specific set in terms of high-order moments of the transmitted symbols $\boldsymbol{a}$ in \eqref{eq:6D_sum}. Then we proceed to list all such contributions.

\subsection{Set partitioning}\label{subsec:subset_part}
The six-dimensional space $\boldsymbol{i} \in \{0,1,\dots, W-1\}^6$ can be partitioned in different \emph{subsets} each one uniquely defined by a partition on the set of indices ($i_1$,\,$i_2$\,,$i_3$\,,$i_4$\,,$i_5$\,,$i_6$). Each partition defines its corresponding subset in $\{0,1,\dots, W-1\}^6$ as follows: for each index partition, the indices belonging to the same subset take all the same value, whilst the indices belonging to different subsets have distinct values. This is schematically illustrated in Fig.~\ref{fig:partition}. For example, the subset of $\{0,1,\dots, W-1\}^6$ labelled by the index partition $\{(i_1,i_2),(i_3,i_4),(i_5,i_6)\}$ is defined as $\{\boldsymbol{i}\in\{0,1,\dots,W-1\}^6 : i_1=i_2,i_3=i_4,i_5=i_6, i_1\neq i_3 \neq i_5 \}$. This subset is shown in Fig. ~\ref{fig:partition} as part of $\mathcal{L}_1$.     

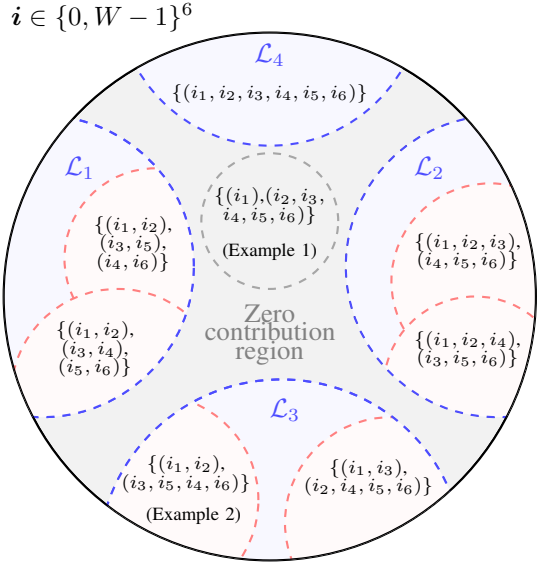
\begin{figure}[tbp]
	\centering
	 	\begin{tikzpicture}
	\tikzset{mynode/.style={align=center,execute at begin node=\setlength{\baselineskip}{0.1em}}} 
	
	\node at (-2.2,3.7) {$\boldsymbol{i}\in \{0, W-1\}^6$};
	\draw[thick,fill = gray!10] (0,0) circle (3.5); %Big circle
	\clip(0,0) circle (3.5);

	% L2
	\filldraw[fill=blue!3, dashed]  (3,0.4) circle (2);
	\begin{scope}
	\clip (3,0.4) circle (2);
	\filldraw[thick,color=red!50, fill=red!2,dashed] (2.9,0.2) circle (1.3);
	\filldraw[thick,color=red!50, fill=red!2, dashed] (2.7,-1.2) circle (1.2);
	\end{scope}
	\node[blue!70] at (2.1,1.7) {$\mathcal{L}_2$};
	\begin{scriptsize}
	\node[mynode,align=center] at (2.6,.6) {$\{ (i_1,i_2,i_3)$,\\ $(i_4,i_5,i_6) \}$};
	\node [mynode,align=center] at (2.6,-.7) {$\{(i_1,i_2,i_4)$,\\ $(i_3,i_5,i_6)\}$};
	\end{scriptsize}
	\draw[thick,color=blue!70,dashed] (3,0.4) circle (2);
	
	% L4
	\filldraw[thick,color=blue!70, fill=blue!3, dashed] (0,4) circle (2);
	\node[blue!70] at (0,3.2) {$\mathcal{L}_4$};
	\begin{scriptsize}
	\node [align=center] at (0,2.7) {$\{(i_1,i_2,i_3,i_4,i_5,i_6)\}$};
	\end{scriptsize}
	
	% L1
	\filldraw[thick,color=blue!70, fill=blue!3, dashed] (-3,0.4) circle (2);
	\node[blue!70] at (-2.5,1.7) {$\mathcal{L}_1$};
	\begin{scope}
	\clip (-3,0.4) circle (2);
	\filldraw[thick,color=red!50, fill=red!2,dashed] (-1.5,0.5) circle (1.2);
	\filldraw[thick,color=red!50, fill=red!2, dashed] (-2,-1.3) circle (1.4);
	\end{scope}
	\begin{scriptsize}
	\node [mynode,align=center] at (-1.8,0.7) {$\{(i_1,i_2)$,\\ $(i_3,i_5)$,\\ $(i_4,i_6)\}$};
	\node [mynode,align=center] at (-2.3,-0.7) {$\{(i_1,i_2)$,\\ $(i_3,i_4)$,\\ $(i_5,i_6)\}$};
	\end{scriptsize}
	\draw[thick,color=blue!70,dashed] (-3,0.4) circle (2);
	
	% L3
	\filldraw[thick,color=blue!70, fill=blue!3, dashed] (0.15,-3.5) circle (2.4);
	\node[blue!70] at (0.2,-1.5) {$\mathcal{L}_3$};
	\begin{scope}
	\clip (0.15,-3.5) circle (2.4);
	\filldraw[thick,color=red!50, fill=red!2,dashed] (-1.55,-2.7) circle (1.4);
	\filldraw[thick,color=red!50, fill=red!2, dashed] (1.6,-3) circle (1.4);
	\end{scope}
	\begin{scriptsize}
	\node [mynode,align=center] at (-1.1,-2.35) {$\{(i_1,i_2)$,\\ $(i_3,i_5,i_4,i_6)\}$};
	\node [mynode,align=center,font=\scriptsize] at (-1,-2.9) {(Example 2)};
	
	\node [mynode,align=center] at (1.3,-2.4) {$\{(i_1,i_3)$,\\ $(i_2,i_4,i_5,i_6)\}$};
	\end{scriptsize}
	\draw[thick,color=blue!70,dashed] (0.15,-3.5) circle (2.4);
	
	\draw[thick,color=gray!70,dashed] (0,1) circle (.9);
	
	\node [mynode,align=center,font=\scriptsize] at (0,1.2) {$\{(i_1)$,$(i_2,i_3,$\\
	$i_4,i_5,i_6)\}$};
	\node [mynode,align=center,font=\scriptsize] at (0,.6) {(Example 1)};
	
	\draw[] (0,0) circle (3.5);
	\node [mynode,align=center,color=gray!99] at (0,-.5) {Zero \\ contribution\\ region};

	\draw[thick] (0,0) circle (3.5); %Big circle

	\end{tikzpicture}
	\caption{Venn diagram of the partition on the 6D space $\boldsymbol{i}=(i_1,i_2,i_3,i_4,i_5,i_6)$ discussed in Sec.~\ref{sec:subset_classification}.}\label{fig:partition}
\end{figure}

In Fig.~\ref{fig:partition}, the \emph{families of subsets} of $\{0,1,\dots, W-1\}^6$ labelled $\mathcal{L}_{i}$, $i=1,2,\dots,4$ are also highlighted. These families are characterised by subsets sharing the same cardinality of elements associated to their corresponding index partition. For example, in $\mathcal{L}_{1}$, all index partitions are characterised by 3 subsets each one containing 2 indices. As shown in Example \ref{ex:degenerate_elements}, this way of partitioning the set $\{0,1,\dots, W-1\}^6$ is useful as it separates out the different contributions of \eqref{eq:6D_sum} based on the high-order moments of $\boldsymbol{a}$ as it is highlighted in region $\mathcal{L}_3$ of Fig.~\ref{fig:partition}. 

%% Subspaces classification 
Since we have 6 different indices the number of subsets in a partition can vary from 1 to 6. Each of these subsets can contain a number of elements also ranging from 1 to 6. However, the subsets of $\{0,1,\dots, W-1\}^6$ where the corresponding index partition has one or more index subsets with only one element bring no contribution to \eqref{eq:6D_sum}, and thus can be discarded. This is illustrated in Example \ref{ex:1st_order_moment_zero_contr}. The above class of index partitions then forms a \emph{zero contribution} region in, as shown in Fig.~\ref{fig:partition}. Such a region also includes all subsets where the corresponding index partitions contain 4 or more index subsets, as at least one of these subsets will have to contain only one element.

As shown in Fig.~\ref{fig:partition}, by removing the \emph{zero contribution} region from $\{0,1,\dots, W-1\}^6$, only 4 different families of subsets are left:   

\begin{enumerate}[label=(\roman*)]

\item $\mathcal{L}_{1}=\{\boldsymbol{i}\in \{0,1,\dots,W-1\}^6 : i_{\kappa_1}=i_{\kappa_2}; \; i_{\kappa_3}=i_{\kappa_4}; \; i_{\kappa_5}=i_{\kappa_6};\; \kappa_1,\kappa_2,\dots\kappa_6=1,2,\dots,6; \;  \kappa_1\neq \kappa_2 \neq \kappa_3\neq\kappa_4\neq\kappa_5\neq\kappa_6\}$. This set contains all sets of elements where the indices $i_1,i_2,\dots,i_6$ can be grouped in 3 pairs. The indices take up the same value within each pair but different values across different pairs. It can be found that this set can be partitioned in 15 different subsets $\mathcal{C}_{1}^{(i)}, i=1,2,\dots,15$ representing all possible distinct ways of pairing the $i_k$ indices for $k=1,2,\dots6$. These sets are listed in Table \ref{tab:L1}, where each column shows a subgroup of indices taking the same value.

\begin{table}[tbp]
\begin{center}
\caption{List of all subsets in $\mathcal{L}_1$. For each subset the index subgroups identify the corresponding pairs of indices assuming the same value.}\label{tab:L1}
\begin{tabular}{c|c|c|c}
\hline \hline
\backslashbox{\footnotesize{\textbf{Subset}}\\\footnotesize{\textbf{label}}}{\footnotesize{\textbf{Index}}\\\footnotesize{\textbf{subgroup}}}
&\makebox[3em]{\textbf{1}}&\makebox[3em]{\textbf{2}}&\makebox[3em]{\textbf{3}}\\
\hline\hline
$\mathcal{C}^{(1)}_{1}$ & $i_1,i_2$ & $i_3,i_4$ & $i_5,i_6$ \\
\hline
$\mathcal{C}^{(2)}_{1}$ & $i_1,i_2$ & $i_3,i_5$ & $i_4,i_6$ \\
\hline
$\mathcal{C}^{(3)}_{1}$ & $i_1,i_2$ & $i_3,i_6$ & $i_4,i_5$ \\
\hline
$\mathcal{C}^{(4)}_{1}$ & $i_1,i_3$ & $i_2,i_4$ & $i_5,i_6$ \\
\hline
$\mathcal{C}^{(5)}_{1}$ & $i_1,i_3$ & $i_2,i_5$ & $i_4,i_6$ \\
\hline
$\mathcal{C}^{(6)}_{1}$ & $i_1,i_3$ & $i_2,i_6$ & $i_4,i_5$ \\
\hline
$\mathcal{C}^{(7)}_{1}$ & $i_1,i_4$ & $i_2,i_3$ & $i_5,i_6$ \\
\hline
$\mathcal{C}^{(8)}_{1}$ & $i_1,i_4$ & $i_2,i_5$ & $i_3,i_6$ \\
\hline
$\mathcal{C}^{(9)}_{1}$ & $i_1,i_4$ & $i_2,i_6$ & $i_3,i_5$ \\
\hline
$\mathcal{C}^{(10)}_{1}$ & $i_1,i_5$ & $i_2,i_3$ & $i_4,i_6$ \\
\hline
$\mathcal{C}^{(11)}_{1}$ & $i_1,i_5$ & $i_2,i_4$ & $i_3,i_6$ \\
\hline
$\mathcal{C}^{(12)}_{1}$ & $i_1,i_5$ & $i_2,i_6$ & $i_3,i_4$ \\
\hline
$\mathcal{C}^{(13)}_{1}$ & $i_1,i_6$ & $i_2,i_3$ & $i_4,i_5$ \\
\hline
$\mathcal{C}^{(14)}_{1}$ & $i_1,i_6$ & $i_2,i_4$ & $i_3,i_5$ \\
\hline
$\mathcal{C}^{(15)}_{1}$ & $i_1,i_6$ & $i_2,i_5$ & $i_3,i_4$\\ 
\hline \hline
\end{tabular}
\end{center}
\end{table}

\item $\mathcal{L}_{2}:\{\boldsymbol{i}\in \{0,1,\dots,W-1\}^6,   i_{\kappa_1}=i_{\kappa_2}=i_{\kappa_3}, i_{\kappa_4}=i_{\kappa_5}=i_{\kappa_6}; \quad \kappa_1,\kappa_2,\dots,\kappa_6=1,2,\dots,6; \; \kappa_1\neq \kappa_2 \neq \kappa_3\neq\kappa_4\neq\kappa_5\neq\kappa_6\}$ which can be broken down in 10 subsets $\mathcal{C}_{2}^{(i)}, \; i=1,2,\dots,10$ listed in Table \ref{tab:L2}. Each index subgroup identifies a triplet of indices assuming the same value.

{\small
\begin{table}[tbp]
\begin{center}
\caption{List of all subsets in $\mathcal{L}_2$. For each subset, the index subgroups identify the corresponding triplets of indices assuming the same value.}\label{tab:L2}
\begin{tabular}{c|c|c}
\hline \hline
\backslashbox{\footnotesize{\textbf{Subset}}\\\footnotesize{\textbf{label}}}{\footnotesize{\textbf{Index}}\\\footnotesize{\textbf{subgroup}}}
&\makebox[3em]{\textbf{1}}&\makebox[3em]{\textbf{2}}\\\hline\hline

$\mathcal{C}^{(1)}_{2}$ & $i_1,i_2, i_3$ & $i_4,i_5,i_6$ \\
\hline
$\mathcal{C}^{(2)}_{2}$ & $i_1,i_2,i_4$, & $i_3,i_5,i_6$ \\
\hline
$\mathcal{C}^{(3)}_{2}$ & $i_1,i_2,i_5$ & $i_3,i_4,i_6$ \\
\hline
$\mathcal{C}^{(4)}_{2}$ & $i_1,i_2,i_6$ & $i_3,i_4,i_5$ \\
\hline
$\mathcal{C}^{(5)}_{2}$ & $i_1,i_3,i_4$ & $i_2,i_5,i_6$ \\
\hline
$\mathcal{C}^{(6)}_{2}$ & $i_1,i_3,i_5$ & $i_2,i_4,i_6$ \\
\hline
$\mathcal{C}^{(7)}_{2}$ & $i_1,i_3,i_6$ & $i_2,i_4,i_5$ \\
\hline
$\mathcal{C}^{(8)}_{2}$ & $i_1,i_4,i_5$ & $i_2,i_3,i_6$ \\
\hline
$\mathcal{C}^{(9)}_{2}$ & $i_1,i_4,i_6$ & $i_2,i_3,i_5$ \\
\hline
$\mathcal{C}^{(10)}_{2}$ & $i_1,i_5,i_6$ & $i_2,i_3,i_4$ \\
\hline \hline
\end{tabular}
\end{center}
\end{table}
}

\item $\mathcal{L}_{3}=\{\boldsymbol{i}\in \{0,1,\dots,W-1\}^6 : i_{\kappa_1}=i_{\kappa_2}, i_{\kappa_3}=i_{\kappa_4}=i_{\kappa_5}=i_{\kappa_6}; \;\; \kappa_1,\kappa_2,\dots,\kappa_6=1,2,\dots,6, \kappa_1\neq \kappa_2 \neq \kappa_3\neq\kappa_4\neq\kappa_5\neq\kappa_6\}$  which can be partitioned in 15 subsets $\mathcal{C}^{(i)}_{3}, \; i=1,2,\dots,15$ listed in Table \ref{tab:L3}. Each of the two index subgroups identifies the pair and the quadruple of indices assuming the same value.

{\small
\begin{table}[tbp]
\begin{center}
\caption{List of all subsets in $\mathcal{L}_3$. For each subset, the index subgroups identify the corresponding pair and quadruple of indices assuming the same value. The highlighted row corresponds to Example \ref{ex:degenerate_elements}.}\label{tab:L3}
\begin{tabular}{c|c|c}
\hline \hline
\backslashbox{\footnotesize{\textbf{Subset}}\\\footnotesize{\textbf{label}}}{\footnotesize{\textbf{Index}}\\\footnotesize{\textbf{subgroup}}}
&\makebox[3em]{\textbf{1}}&\makebox[3em]{\textbf{2}}\\\hline\hline

\rowcolor{gray!15} $\mathcal{C}^{(1)}_{3}$ & $i_1,i_2$ & $i_3,i_4,i_5,i_6$ \\
\hline
$\mathcal{C}^{(2)}_{3}$ & $i_1,i_3$, & $i_2,i_4,i_5,i_6$ \\
\hline
$\mathcal{C}^{(3)}_{3}$ & $i_1,i_4$ & $i_2,i_3,i_5,i_6$ \\
\hline
$\mathcal{C}^{(4)}_{3}$ & $i_1,i_5$ & $i_2,i_3,i_4,i_6$ \\
\hline
$\mathcal{C}^{(5)}_{3}$ & $i_1,i_6$ & $i_2,i_3,i_4,i_5$ \\
\hline
$\mathcal{C}^{(6)}_{3}$ & $i_2,i_3$ & $i_1,i_4,i_5,i_6$ \\
\hline
$\mathcal{C}^{(7)}_{3}$ & $i_2,i_4$ & $i_1,i_3,i_5,i_6$ \\
\hline
$\mathcal{C}^{(8)}_{3}$ & $i_2,i_5$ & $i_1,i_3,i_4,i_6$ \\
\hline
$\mathcal{C}^{(9)}_{3}$ & $i_2,i_6$ & $i_1,i_3,i_4,i_5$ \\
\hline
$\mathcal{C}^{(10)}_{3}$ & $i_3,i_4$ & $i_1,i_2,i_5,i_6$ \\
\hline
$\mathcal{C}^{(11)}_{3}$ & $i_3,i_5$ & $i_1,i_2,i_4,i_6$ \\
\hline
$\mathcal{C}^{(12)}_{3}$ & $i_3,i_6$ & $i_1,i_2,i_4,i_5$ \\
\hline
$\mathcal{C}^{(13)}_{3}$ & $i_4,i_5$ & $i_1,i_2,i_3,i_6$ \\
\hline
$\mathcal{C}^{(14)}_{3}$ & $i_4,i_6$ & $i_1,i_2,i_3,i_5$ \\
\hline
$\mathcal{C}^{(15)}_{3}$ & $i_5,i_6$ & $i_1,i_2,i_3,i_4$ \\
\hline \hline
\end{tabular}
\end{center}
\end{table}
}

\item $\mathcal{L}_{4}: \{\boldsymbol{i}\in \{0,1,\dots,W-1\}^6 : i_1=i_2=i_3=i_4=i_5=i_6 \}$. 

\end{enumerate}

\section{Evaluation of the $\mathcal{L}$-based contributions}\label{sec:evaluation}
% The subsets highlighted in Sec.~\ref{sec:subset_classification} give rise to different contributions weighted by different orders of cross-correlation 
% or moments of the modulation format $[a_{x},a_{y}]$.
In this section, we provide three examples for the computation of the contributions of a generic element in $\mathcal{L}_1$, $\mathcal{L}_2$ and $\mathcal{L}_3$. The full list of contributions in these three sets and the contributions in $\mathcal{L}_4$ are given in Secs.~\ref{subsec:L1contr}--\ref{subsec:L4contr}.
% As discussed in Sec.~\ref{sec:subset_classification}, all of the subsets of points in the 6D space of the summation in \eqref{eq:6D_sum} which belong to a category where at least one subgroup of indices contains only 1 element have been dropped. For these subsets the contribution in the summation \eqref{eq:6D_sum} can be written as
% \begin{equation}
% \sum_{\mathcal{L}_{1}}=\mathbb{E}\{a_{x/y}\}\times \dots
% \label{eq:contr_subset_1index}    
% \end{equation}
% and, thus, under the assumption made in Sec.~\ref{sec:6th-order_freq_to_time} on zero-mean 4D symbols, these terms bring no contribution to \eqref{eq:6D_sum}. 
We label each contribution as $\mathsf{M}_{g}^{(h)}(k,m,n,k^{\prime},m^{\prime},n^{\prime})$ and $\mathsf{N}_{g}^{(h)}(k,m,n,k^{\prime},m^{\prime},n^{\prime})$, where
\begin{align}
\begin{split}
\mathsf{M}_{g}^{(h)}(k,m,n,k^{\prime},m^{\prime},n^{\prime})&\triangleq \sum_{\boldsymbol{i}\in\mathcal{C}_{g}^{(h)}}\mathsf{S}_{\boldsymbol{i}}(k,m,n,k^{\prime},m^{\prime},n^{\prime}), \\
\mathsf{N}_{g}^{(h)}(k,m,n,k^{\prime},m^{\prime},n^{\prime})&\triangleq \sum_{\boldsymbol{i}\in\mathcal{C}_{g}^{(h)}}\mathsf{T}_{\boldsymbol{i}}(k,m,n,k^{\prime},m^{\prime},n^{\prime}),
\end{split}
\label{eq:MN_contr}
\end{align}
and the subsets $\mathcal{C}_{g}^{(h)}$ are taken form Tables \ref{tab:L1}, \ref{tab:L2} and \ref{tab:L3}.
% and 
% \begin{align}
% \begin{split}
% &\mathsf{S}_{i_1,i_2,...,i_6}(k,m,n,k^{\prime},m^{\prime},n^{\prime})\triangleq\Delta f^3\left[\mathbb{E}\left\{a_{x,i_1}a_{x,i_2}^*a_{x,i_3}a_{x,i_4}^*a_{x,i_5}a_{x,i_6}^*\right\}\right.\\
% &\left.+2\Re(\mathbb{E}\left\{a_{x,i_1}a_{x,i_2}^*a_{x,i_3}a_{y,i_4}^*a_{y,i_5}a_{x,i_6}^*\right\})+\mathbb{E}\left\{a_{y,i_1}a_{y,i_2}^*a_{x,i_3}a_{y,i_4}^*a_{y,i_5}a_{x,i_6}^*\right\}\right]\\ 
% &\times e^{-j\frac{2\pi}{W}(ki_1-mi_2+ni_3-k^{\prime}i_4+m^{\prime}i_5-n^{\prime}i_6)}. 
% \end{split}
% \end{align}

% In the following, we calculate the sum of the elements of the summation in \eqref{eq:6D_sum} over each partition defined in the previous section and for the 3 different 6th-order moments in \eqref{eq:PSD_x_2}.

\begin{example_a}[Contributions in $\mathcal{L}_1$]\label{ex:L1contr}
$\mathsf{M}_{1}^{(1)}$, i.e., one of the 2 contributions for the set $\mathcal{C}_1^{(1)}=\{\boldsymbol{i}\in\{0,1,\dots,W-1\}^6 : i_1=i_2,i_3=i_4,i_5=i_6, i_1\neq i_3, i_1\neq i_5, i_3\neq i_5\}$) is given by

\begin{align}
\begin{split}
\mathsf{M}_{1}^{(1)}&\triangleq\sum_{\boldsymbol{i}\in\mathcal{C}_1^{(1)}} \mathsf{S}_{\boldsymbol{i}}(k,m,n,k^{\prime},m^{\prime},n^{\prime})\\
&=\Delta_f^3\left[\mathbb{E}^3\left\{|a_{x}|^2\right\}+\mathbb{E}\left\{|a_{y}|^2\right\} |\mathbb{E}\left\{a_{x}a^*_{y}\right\}|^2\right]\sum_{i_1=0}^{W-1} e^{-j\frac{2\pi}{W}(k-m) i_1}\sum_{{i_3\neq i_1}}e^{-j\frac{2\pi}{W}(n-k^{\prime})i_3}\\
&\cdot\sum_{\substack{i_5\neq i_1,\\i_5\neq i_3}}e^{-j\frac{2\pi}{W}(m^{\prime}-n^{\prime})i_5}. 
\end{split}\label{eq:M11_example}
\end{align}

Since  
\begin{equation}\label{eq:compl_exp_sum}
\sum_{k=0}^{W-1} e^{jnk\frac{2\pi}{W}}=
\begin{cases}
W, & \text{for} \;\; n=pW,\; p\in\mathbb{Z} \\
0, & \text{elsewhere}
\end{cases},
\end{equation}
we can compute \eqref{eq:M11_example} using the following approach: 
\begin{enumerate}
\item We add up the terms for all $i_1, i_3, i_5$ values including all cases when $i_1$, $i_3$ and $i_5$ are equal among each other. Because of \eqref{eq:compl_exp_sum}, these terms sum up to $W^3$ only when $k=m+pW,\, n=k^{\prime}+pW,\, m^{\prime}=n^{\prime}+pW$, $p\in\mathbb{Z}$ otherwise they sum to 0. 
\item We subtract the terms corresponding to the cases: $i_1=i_3, i_1\neq i_5$; $i_1=i_5, i_1\neq i_3$; and $i_3=i_5, i_1\neq i_3$. As an example, the number of terms defined by $i_1=i_3, i_1\neq i_5$ are given by the difference between the number of all pairs $i_1, i_5 \in \{0,1,2,\dots,W-1\}$ and the number of terms for $i_1=i_5$. According to \eqref{eq:compl_exp_sum}, the former terms sum to $W^2$ only for $k-m+n-k^{\prime}=pW,\, m^{\prime}-n^{\prime}=pW$, whereas the latter sum to $W$ only for $k-m+n-k^{\prime}+m^{\prime}-n^{\prime}=pW$, with $p\in\mathbb{Z}$. In all other cases they all bring zero contribution. Similar results are obtained for $i_1=i_5, i_1\neq i_3$ and $i_3=i_5, i_1\neq i_3$.  
\item We finally subtract the terms $i_1=i_3=i_5$ which sum to $W$ only for $k-m+n-k^{\prime}+m^{\prime}-n^{\prime}=pW$, $p\in\mathbb{Z}$ otherwise they sum to 0 (see \eqref{eq:compl_exp_sum}).  
\end{enumerate}
Hence, we obtain
\begin{align*}
\mathsf{M}_{1}^{(1)}&=\Delta_f^3[\mathbb{E}^{3}\{|a_x|^2\}+2\mathbb{E}^2\{|a_x|^2\}|\mathbb{E}\{a_xa_y^*\}|^2+\mathbb{E}\{|a_y|^2\}|\mathbb{E}\{a_xa_y^*\}|^{2}][W^3\delta_{k-m-pW}\delta_{n-k^{\prime}-pW}\delta_{m^{\prime}-n^{\prime}-pW}
\\&-[W^2(\delta_{k-m+n-k^{\prime}-pW}\delta_{m^{\prime}-n^{\prime}-pW}+\delta_{k-m+m^{\prime}-n^{\prime}-pW}\delta_{n-k^{\prime}-pW}+\delta_{m^{\prime}-n^{\prime}+n-k^{\prime}-pW}\delta_{k-m-pW})\\
&-3W\delta_{k-m+m^{\prime}-n^{\prime}+n-k^{\prime}-pW}]-W\delta_{k-m+m^{\prime}-n^{\prime}+n-k^{\prime}-pW}]\\
&=[\mathbb{E}^{3}\{|a_x|^2\}+2\mathbb{E}^2\{|a_x|^2\}|\mathbb{E}\{a_xa_y^*\}|^2+\mathbb{E}\{|a_y|^2\}|\mathbb{E}\{a_xa_y^*\}|^{2}][R_s^3\delta_{k-m+pW}\delta_{n-k^{\prime}-pW}\delta_{m^{\prime}-n^{\prime}-pW}\\
&-R_s^2\Delta_f(\delta_{k-m+n-k^{\prime}-pW}\delta_{m^{\prime}-n^{\prime}-pW}+\delta_{k-m+m^{\prime}-n^{\prime}-pW}\delta_{n-k^{\prime}-pW}+\delta_{m^{\prime}-n^{\prime}+n-k^{\prime}-pW}\delta_{k-m-pW})\\
&+2R_s\Delta_f^2\delta_{k-m+m^{\prime}-n^{\prime}+n-k^{\prime}-pW}],
\end{align*}
where we have used $R_s=W\Delta_f$. 

The same approach can be followed to compute $\mathsf{N}_{1}^{(1)}$ which is, thus, given by
\begin{align*}
\mathsf{N}_{1}^{(1)}&\triangleq\sum_{\boldsymbol{i}\in\mathcal{C}_1^{(1)}} \mathsf{T}_{\boldsymbol{i}}(k,m,n,k^{\prime},m^{\prime},n^{\prime})=\mathbb{E}^2\{|a_x|^2\}|\mathbb{E}\{a_xa_y^*\}|^2[R_s^3\delta_{k-m-pW}\delta_{n-k^{\prime}-pW}\delta_{m^{\prime}-n^{\prime}-pW}\\&-R_s^2\Delta_f(\delta_{k-m+n-k^{\prime}-pW}\delta_{m^{\prime}-n^{\prime}-pW}+\delta_{k-m+m^{\prime}-n^{\prime}-pW}\delta_{n-k^{\prime}-pW}+\delta_{m^{\prime}-n^{\prime}+n-k^{\prime}-pW}\delta_{k-m-pW})\\
&+2R_s\Delta_f^2\delta_{k-m+m^{\prime}-n^{\prime}+n-k^{\prime}-pW}].
\end{align*}

\end{example_a}

\begin{example_a}[Contributions in $\mathcal{L}_2$]\label{ex:L2contr}
$\mathsf{M}_{2}^{(1)}$, i.e., the contribution for the set $\mathcal{C}_2^{(1)}=\{\boldsymbol{i}\in \{0,1,\dots,W-1\}^6 : i_1=i_2=i_3, i_4=i_5=i_6, i_1\neq i_4\}$ is given by

\begin{align}\label{eq:M21}
\begin{split}
\mathsf{M}_{2}^{(1)}&=\sum_{\boldsymbol{i}\in\mathcal{C}_2^{(1)}} \mathsf{S}_{\boldsymbol{i}}(k,m,n,k^{\prime},m^{\prime},n^{\prime})\\
& =\Delta_f^3[\mathbb{E}\{a_{x,i_1}|a_{x,i_1}|^2\}\mathbb{E}^*\{a_{x,i_4}|a_{x,i_4}|^2\}+\mathbb{E}\{a_{x,i_1} |a_{y,i_1}|^2\}\mathbb{E}^*\{a_{x,i_4} |a_{y,i_4}|^2\}]\\
&\cdot\sum_{i_1=0}^{W-1} e^{-j\frac{2\pi}{W}(k-m+n) i_1}\sum_{{i_4\neq i_1}}e^{-j\frac{2\pi}{W}(-k^{\prime}+m^{\prime}-n^{\prime})i_4}\\
& =\Delta_f^3[|\mathbb{E}\{a_x|a_x|^2\}|^2+|\mathbb{E}\{a_x|a_y|^2\}|^2]\sum_{i_1=0}^{W-1} e^{-j\frac{2\pi}{W}(k-m+n) i_1}\sum_{{i_4\neq i_1}}e^{-j\frac{2\pi}{W}(-k^{\prime}+m^{\prime}-n^{\prime})i_4}. 
\end{split}
%\label{eq:M21_ex}
\end{align}

Following a similar approach as in Example \ref{ex:L1contr}, we compute \eqref{eq:M21} by:
\begin{enumerate}
\item Adding up the terms for all $i_1$ and $i_4$ values including all cases when $i_1$, $i_4$ are equal to each other. These terms sum up to $W^2$ only when $k-m+n=pW$, and $-k^{\prime}+m^{\prime}-n^{\prime}=pW$, with $p\in\mathbb{Z}$ otherwise they sum to 0. 
\item Subtracting the terms corresponding to the cases $i_1=i_4$. These terms sum to $W$ only for $k-m+n-k^{\prime}+ m^{\prime}-n^{\prime}=pW$, $p\in\mathbb{Z}$ otherwise they sum to zero. 
\end{enumerate}
We, thus, obtain
\begin{equation*}
\mathsf{M}_{2}^{(1)}=[|\mathbb{E}\{a_x|a_x|^2\}|^2+|\mathbb{E}\{a_x|a_y|^2\}|^2][R_s^2\Delta_f\delta_{k-m+n-pW}\delta_{k^{\prime}-m^{\prime}+n^{\prime}-pW}-R_s\Delta_f^2\delta_{k-m+n-k^{\prime}+m^{\prime}-n^{\prime}-pW}],
\end{equation*}

Following the same approach for $\mathsf{N}_{2}^{(1)}$ we have 
\begin{align*}
\mathsf{N}_{2}^{(1)}&\triangleq\sum_{\boldsymbol{i}\in\mathcal{C}_1^{(1)}} \mathsf{T}_{\boldsymbol{i}}(k,m,n,k^{\prime},m^{\prime},n^{\prime})\\
&=\mathbb{E}\{a_x |a_x|^2\}\mathbb{E}\{a_x |a_y|^2\}[R_s^2\Delta_f\delta_{k-m+n}\delta_{k^{\prime}-m^{\prime}+n^{\prime}-pW}-R_s\Delta_f^2\delta_{k-m+n-k^{\prime}+m^{\prime}-n^{\prime}-pW}].
\end{align*}

\end{example_a}

\begin{example_a}[Contributions in $\mathcal{L}_3$]\label{ex:L3contr}
$\mathsf{M}_{3}^{(3)}$, i.e., the contribution for the values in the set $\mathcal{C}_3^{(3)}=\{\boldsymbol{i}\in \{0,1,\dots,W-1\}^6 :  i_1=i_4,i_2=i_3=i_5=i_6, i_1\neq i_4, i_2\neq i_3, i_5\neq i_6\}$ is given by

\begin{align}
\begin{split}
\mathsf{M}_{3}^{(3)}&=\sum_{\boldsymbol{i}\in\mathcal{C}_3^{(3)}} \mathsf{S}_{\boldsymbol{i}}(k,m,n,k^{\prime},m^{\prime},n^{\prime})\\
&=\Delta_f^3[\mathbb{E}\{|a_{x,i_1}|^2\}\mathbb{E}\{|a_{x,i_2}|^4\}+\mathbb{E}\{|a_{y,i_1}|^2\}\mathbb{E}\{|a_{x,i_2}|^2|a_{y,i_2}|^2\}]\\
&\cdot\sum_{i_1=1}^{W-1}e^{-j\frac{2\pi}{W}(k-k^{\prime})i_1}\sum_{i_2\neq i_1}e^{-j\frac{2\pi}{W}(-m+n+m^{\prime}-n^{\prime})i_2}\\
&=\Delta_f^3[\mathbb{E}\{|a_x|^2\}\mathbb{E}\{|a_x|^4\}+\mathbb{E}\{|a_y|^2\}\mathbb{E}\{|a_x|^2|a_y|^2\}]\sum_{i_1=1}^{W-1}e^{-j\frac{2\pi}{W}(k-k^{\prime})i_1}\sum_{i_2\neq i_1}e^{-j\frac{2\pi}{W}(-m+n+m^{\prime}-n^{\prime})i_2}. \label{eq:M33_ex}
\end{split}
\end{align}
As in the $\mathcal{L}_2$ case described in Example \ref{ex:L2contr}, in $\mathcal{L}_3$ each subset is characterized by 2 subgroups of indices. Hence, the approach followed to compute \eqref{eq:M33_ex} is identical to \eqref{eq:M21}, and gives 
\begin{align*}
\mathsf{M}_{3}^{(3)}&=[\mathbb{E}\{|a_x|^4\}\mathbb{E}\{|a_x|^2\}+\mathbb{E}\{|a_x|^2|a_y|^2\}\mathbb{E}\{|a_y|^2\}][R_s^2\Delta_f\delta_{k-k^{\prime}-pW}\delta_{m-n-m^{\prime}+n^{\prime}-pW}-R_s\Delta_f^2\delta_{k-m+n-k^{\prime}+m^{\prime}-n^{\prime}-pW}].
\end{align*}
Similarly,
\begin{equation*}
\mathsf{N}_{3}^{(3)}=\mathbb{E}\{a_x a_y^*\}\mathbb{E}\{a_x^*a_y|a_x|^2\}[R_s^2\Delta_f\delta_{k-k^{\prime}-pW}\delta_{m-n-m^{\prime}+n^{\prime}-pW}-R_s\Delta_f^2\delta_{k-m+n-k^{\prime}+m^{\prime}-n^{\prime}-pW}].    
\end{equation*}
\end{example_a}

As shown in the above examples, each contribution $\mathsf{M}_{g}^{(h)}$, $\mathsf{N}_{g}^{(h)}$ is nonzero only for a specific set of $(k,m,n,k^{\prime},m^{\prime},n^{\prime})$ values which is spanned by $p\in\mathbb{Z}$. However, the terms $(k,m,n,k^{\prime},m^{\prime},n^{\prime})$ arising for all $p\neq0$ bring a total contribution to \eqref{eq:6D_sum} that can be considered negligible. This is due to our assumption on $P(f)$ being strictly band-limited (see Sec.~\ref{sec:system_model}), and to the magnitude of the functions product $\eta_{k,m,n}\eta^*_{k^{\prime}, m^{\prime}, n^{\prime}}$ (see definitions \eqref{eq:eta_def} and \eqref{eq:P_def}). Thus, in the computations performed in the following subsections, we will restrict ourselves to the case $p=0$.

\subsection{Contributions  in $\mathcal{L}_{1}$}\label{subsec:L1contr}
In this section, the contributions  $\mathsf{M}_{1}^{(i)}$, $\mathsf{N}_{1}^{(i)}$ for $i=1,2,\dots,15$ are computed following Example~\ref{ex:L1contr}. These contributions are listed in Table \ref{tab:MN1contr}.

{\small
\begin{longtable}{ p{.03\textwidth} | p{.23\textwidth} | p{.22\textwidth} |
p{.4\textwidth}}
\caption{\scriptsize\uppercase{List of contributions} $\mathsf{M}_{1}^{(h)}$ \uppercase{and} $\mathsf{N}_{1}^{(h)}$ for $i=1,2,\dots,15$.}
\label{tab:MN1contr}\\
\hline \hline 
$h$ & \textbf{Corr. terms in}  $\mathsf{M}_{1}^{(h)}$ & \textbf{Corr. terms in} $\mathsf{N}_{1}^{(h)}$ &
\textbf{Delta products} \endhead \\
\hline \hline
1 & \makecell[l]{$\mathbb{E}^{3}\{|a_x|^2\}$\\$+|\mathbb{E}\{a_xa_y^*\}|^{2}\mathbb{E}\{|a_y|^2\}$} & $\mathbb{E}\{|a_x|^2\}|\mathbb{E}\{a_xa_y^*\}|^2$ & \makecell[l]{$R_s^3\delta_{k-m}\delta_{n-k^{\prime}}\delta_{m^{\prime}-n^{\prime}}-R_s^2\Delta_f(\delta_{m^{\prime}-n^{\prime}}\delta_{k-m+n-k^{\prime}}$\\$+\delta_{n-k^{\prime}}\delta_{k-m+m^{\prime}-n^{\prime}}+\delta_{k-m}\delta_{m^{\prime}-n^{\prime}+n-k^{\prime}})$\\
$+2R_s\Delta_f^2\delta_{k-m+n-k^{\prime}+m^{\prime}-n^{\prime}}$} \\
\hline

2 & \makecell[l]{$\mathbb{E}\{|a_x|^2\}|\mathbb{E}\{a_x^2\}|^2$\\$+|\mathbb{E}\{a_xa_y\}|^{2}\mathbb{E}\{|a_y|^2\}$} & $\mathbb{E}\{|a_x|^2\}|\mathbb{E}\{a_xa_y\}|^2$ & \makecell[l]{$R_s^3\delta_{k-m}\delta_{n+m^{\prime}}\delta_{k^{\prime}+n^{\prime}}-R_s^2\Delta_f(\delta_{k^{\prime}+n^{\prime}}\delta_{k-m+n+m^{\prime}}$\\
$+\delta_{n+m^{\prime}}\delta_{k-m-k^{\prime}-n^{\prime}}+\delta_{k-m}\delta_{n-k^{\prime}+m^{\prime}-n^{\prime}})$\\
$+2R_s\Delta_f^2\delta_{k-m+n-k^{\prime}+m^{\prime}-n^{\prime}}$} \\
\hline

3&\makecell[l]{$\mathbb{E}^{3}\{|a_x|^2\}$\\$+\mathbb{E}\{|a_x|^2\}\mathbb{E}^2\{|a_y|^2\}$} & $\mathbb{E}^2\{|a_x|^2\}\mathbb{E}\{|a_y|^2\}$&
\makecell[l]{$R_s^3\delta_{k-m}\delta_{n-n^{\prime}}\delta_{k^{\prime}-m^{\prime}}-R_s^2\Delta_f(\delta_{k^{\prime}-m^{\prime}}\delta_{k-m+n-n^{\prime}}$\\
$+\delta_{n-n^{\prime}}\delta_{k-m-k^{\prime}+m^{\prime}}+\delta_{k-m}\delta_{n-n^{\prime}-k^{\prime}+m^{\prime}})$\\$+2R_s\Delta_f^2\delta_{k-m+n-k^{\prime}+m^{\prime}-n^{\prime}}$}\\
\hline
4&\makecell[l]{$\mathbb{E}\{|a_x|^2\}|\mathbb{E}\{a_x^2\}|^2$\\$+\mathbb{E}\{a_xa_y\}\mathbb{E}\{a_x^*a_y\}\mathbb{E}^*\{a_y^2\}$} & \makecell[l]{$\mathbb{E}\{a_x^2\}\mathbb{E}^*\{a_xa_y\}\mathbb{E}\{a_x^*a_y\}$} & \makecell[l]{$-R_s^2\Delta_f(\delta_{m^{\prime}-n^{\prime}}\delta_{k+n-m-k^{\prime}}+\delta_{m+k^{\prime}}\delta_{k+n+m^{\prime}-n^{\prime}}$\\
$+\delta_{k+n}\delta_{m+k^{\prime}-m^{\prime}+n^{\prime}})+2R_s\Delta_f^2\delta_{k-m+n-k^{\prime}+m^{\prime}-n^{\prime}}$}\\
\hline
5& \makecell[l]{$\mathbb{E}\{|a_x|^2\}|\mathbb{E}\{a_x^2\}|^2$\\$\cdot|\mathbb{E}\{a_xa_y\}|^2\mathbb{E}\{|a_y|^2\}$} & \makecell[l]{$\mathbb{E}\{a_x^2\}\mathbb{E}^*\{a_xa_y\}\mathbb{E}\{a_x^*a_y\}$} & \makecell[l]{$R_s^3\delta_{k+n}\delta_{m-m^{\prime}}\delta_{k^{\prime}+n^{\prime}}-R_s^2\Delta_f(\delta_{k^{\prime}+n^{\prime}}\delta_{k+n-m+m^{\prime}}$\\$+\delta_{m-m^{\prime}}\delta_{k+n-k^{\prime}-n^{\prime}}+\delta_{k+n}\delta_{m-m^{\prime}+k^{\prime}+n^{\prime}})$\\
$+2R_s\Delta_f^2\delta_{k-m+n-k^{\prime}+m^{\prime}-n^{\prime}}$}\\
\hline 
6 & \makecell[l]{$\mathbb{E}\{|a_x|^2\}|\mathbb{E}\{a_x^2\}|^2$\\$+|\mathbb{E}\{a_xa_y\}|^2\mathbb{E}\{|a_y|^2\}$}  & $|\mathbb{E}\{a_x^2\}|^2\mathbb{E}\{|a_y|^2\}$ & \makecell[l]{$R_s^3\delta_{k+n}\delta_{m+n^{\prime}}\delta_{k^{\prime}-m^{\prime}}-R_s^2\Delta_f(\delta_{k^{\prime}-m^{\prime}}\delta_{k+n-m-n^{\prime}}$\\
$+\delta_{m+n^{\prime}}\delta_{k+n-k^{\prime}+m^{\prime}}+\delta_{k+n}\delta_{m+k^{\prime}-m^{\prime}+n^{\prime}})$\\
$+2R_s\Delta_f^2\delta_{k-m+n-k^{\prime}+m^{\prime}-n^{\prime}}$}\\
\hline
7 & \makecell[l]{$\mathbb{E}^3\{|a_x|^2\}$\\$+|\mathbb{E}\{a_xa_y^*\}|^2\mathbb{E}\{|a_y|^2\}$} & $\mathbb{E}\{|a_x|^2\}|\mathbb{E}\{a_x a_y^*\}|^2$ & \makecell[l]{$R_s^3\delta_{k-k^{\prime}}\delta_{m-n}\delta_{m^{\prime}-n^{\prime}}-R_s^2\Delta_f(\delta_{m^{\prime}-n^{\prime}}\delta_{k-k^{\prime}-m+n}$\\
$+\delta_{m-n}\delta_{k-k^{\prime}+m^{\prime}-n^{\prime}}+\delta_{k-k^{\prime}}\delta_{m-n-m^{\prime}+n^{\prime}})$\\
$+2R_s\Delta_f^2\delta_{k-m+n-k^{\prime}+m^{\prime}-n^{\prime}}$}\\
\hline 
8 & \makecell[l]{$\mathbb{E}^3\{|a_x|^2\}$\\$+\mathbb{E}\{|a_x|^2\}\mathbb{E}^2\{|a_y|^2\}$ }& $\mathbb{E}\{|a_x|^2\}|\mathbb{E}\{a_x a_y^*\}|^2$ & \makecell[l]{$R_s^3\delta_{k-k^{\prime}}\delta_{m-m^{\prime}}\delta_{n-n^{\prime}}-R_s^2\Delta_f(\delta_{n-n^{\prime}}\delta_{k-m-k^{\prime}+m^{\prime}}$\\
$+\delta_{m-m^{\prime}}\delta_{k+n-k^{\prime}-n^{\prime}}+\delta_{k-k^{\prime}}\delta_{m-n-m^{\prime}+n^{\prime}})$\\
$+2R_s\Delta_f^2\delta_{k-m+n-k^{\prime}+m^{\prime}-n^{\prime}}$}\\
\hline
9 & \makecell[l]{$\mathbb{E}\{|a_x|^2\}|\mathbb{E}\{a_x^2\}|^2$\\$+|\mathbb{E}\{a_xa_y\}|^2\mathbb{E}\{|a_y|^2\}$} & \makecell[l]{$\mathbb{E}^*\{a_x^2\}\mathbb{E}\{a_xa_y\}\mathbb{E}\{a_xa_y^*\}$} & \makecell[l]{$R_s^3\delta_{k-k^{\prime}}\delta_{m+n^{\prime}}\delta_{n+m^{\prime}}-R_s^2\Delta_f(\delta_{n+m^{\prime}}\delta_{k-m-k^{\prime}-n^{\prime}}$\\
$+\delta_{m+n^{\prime}}\delta_{k+n-k^{\prime}+m^{\prime}}+\delta_{k-k^{\prime}}\delta_{m-n-m^{\prime}+n^{\prime}})$\\
$+2R_s\Delta_f^2\delta_{k-m+n-k^{\prime}+m^{\prime}-n^{\prime}}$}\\
\hline
10 & \makecell[l]{$\mathbb{E}\{|a_x|^2\}|\mathbb{E}\{a_x^2\}|^2$\\$+\mathbb{E}^*\{a_xa_y\}\mathbb{E}\{a_x a_y^*\}\mathbb{E}\{a_y^2\}$} & $\mathbb{E}\{|a_x|^2\}|\mathbb{E}\{a_xa_y\}|^2$
& \makecell[l]{$R_s^3\delta_{k+m^{\prime}}\delta_{m-n}\delta_{k^{\prime}+n^{\prime}}-R_s^2\Delta_f(\delta_{k^{\prime}+n^{\prime}}\delta_{k-m+n+m^{\prime}}$\\
$+\delta_{m-n}\delta_{k-k^{\prime}+m^{\prime}-n^{\prime}}+\delta_{k+m^{\prime}}\delta_{m-n+k^{\prime}+n^{\prime}})$\\
$+2R_s\Delta_f^2\delta_{k-m+n-k^{\prime}+m^{\prime}-n^{\prime}}$}\\
\hline
11 & \makecell[l]{$\mathbb{E}\{|a_x|^2\}|\mathbb{E}\{a_x^2\}|^2$\\$+\mathbb{E}\{|a_x|^2\}|\mathbb{E}\{a_y^2\}|^2$} & $\mathbb{E}\{|a_x|^2\}|\mathbb{E}\{a_xa_y\}|^2$ & \makecell[l]{ $R_s^3\delta_{k+m^{\prime}}\delta_{m+k^{\prime}}\delta_{n-n^{\prime}}-R_s^2\Delta_f(\delta_{n-n^{\prime}}\delta_{k-m-k^{\prime}+m^{\prime}}$\\$+\delta_{m+k^{\prime}}\delta_{k+n+m^{\prime}-n^{\prime}}+\delta_{k+m^{\prime}}\delta_{m-n+k^{\prime}+n^{\prime}})$\\$+2R_s\Delta_f^2\delta_{k-m+n-k^{\prime}+m^{\prime}-n^{\prime}}$}\\
\hline
12 & \makecell[l]{$\mathbb{E}\{|a_x|^2\}|\mathbb{E}\{a_x^2\}|^2$\\$+\mathbb{E}^*\{a_xa_y\}\mathbb{E}\{a_x a_y^*\}\mathbb{E}\{a_y^2\}$} & \makecell[l]{$\mathbb{E}^*\{a_x^2\}\mathbb{E}\{a_xa_y\}\mathbb{E}\{a_x a_y^*\}$} & \makecell[l]{$R_s^3\delta_{k+m^{\prime}}\delta_{m+n^{\prime}}\delta_{n-k^{\prime}}-R_s^2\Delta_f(\delta_{n-k^{\prime}}\delta_{k-m+m^{\prime}-n^{\prime}}$\\$+\delta_{m+n^{\prime}}\delta_{k+n-k^{\prime}+m^{\prime}}+\delta_{k+m^{\prime}}\delta_{m-n+k^{\prime}+n^{\prime}})$\\
$+2R_s\Delta_f^2\delta_{k-m+n-k^{\prime}+m^{\prime}-n^{\prime}}$}\\
\hline
13 & \makecell[l]{$\mathbb{E}^3\{|a_x|^2\}$\\$+|\mathbb{E}\{a_x a_y^*\}|^2\mathbb{E}\{|a_y|^2\}$} & $\mathbb{E}^2\{|a_x|^2\}\mathbb{E}\{|a_y|^2\}$ &
\makecell[l]{$R_s^3\delta_{k-n^{\prime}}\delta_{m-n}\delta_{k^{\prime}-m^{\prime}}-R_s^2\Delta_f(\delta_{k^{\prime}-m^{\prime}}\delta_{k-m+n-n^{\prime}}$\\$+\delta_{m-n}\delta_{k-k^{\prime}+m^{\prime}-n^{\prime}}+\delta_{k-n^{\prime}}\delta_{m-n+k^{\prime}-m^{\prime}})$\\
$+2R_s\Delta_f^2\delta_{k-m+n-k^{\prime}+m^{\prime}-n^{\prime}}$}\\
\hline
14 & \makecell[l]{$\mathbb{E}\{|a_x|^2\}|\mathbb{E}\{a_x^2\}|^2$\\$+\mathbb{E}\{a_xa_y\}\mathbb{E}\{_xa^*a_y\}\mathbb{E}^*\{a_y^2\}$} & $\mathbb{E}\{|a_x|^2\}|\mathbb{E}\{a_xa_y\}|^2$ & \makecell[l]{$R_s^3\delta_{k-n^{\prime}}\delta_{m+k^{\prime}}\delta_{n+m^{\prime}}-R_s^2\Delta_f(\delta_{n+m^{\prime}}\delta_{k-m-k^{\prime}-n^{\prime}}$\\
$+\delta_{m+k^{\prime}}\delta_{k+n+m^{\prime}-n^{\prime}}+\delta_{k-n^{\prime}}\delta_{m-n+k^{\prime}-m^{\prime}})$\\
$+2R_s\Delta_f^2\delta_{k-m+n-k^{\prime}+m^{\prime}-n^{\prime}}$}\\
\hline
15 & \makecell[l]{$\mathbb{E}^3\{|a_x|^2\}$\\$+|\mathbb{E}\{a_x a_y^*\}|^2\mathbb{E}\{|a_y|^2\}$} & $\mathbb{E}\{|a_x|^2\}|\mathbb{E}\{a_x a_y^*\}|^2$ & \makecell[l]{$R_s^3\delta_{k-n^{\prime}}\delta_{m-m^{\prime}}\delta_{n-k^{\prime}}-R_s^2\Delta_f(\delta_{n-k^{\prime}}\delta_{k-m+m^{\prime}-n^{\prime}}$\\
$+\delta_{m-m^{\prime}}\delta_{k+n-k^{\prime}-n^{\prime}}+\delta_{k-n^{\prime}}\delta_{m-n+k^{\prime}-m^{\prime}})$\\
$+2R_s\Delta_f^2\delta_{k-m+n-k^{\prime}+m^{\prime}-n^{\prime}}$}\\
\hline\hline
\end{longtable}
}

\subsection{Contributions in $\mathcal{L}_{2}$}\label{subsec:L2contr}
Following Example \ref{ex:L2contr}, the contributions $\mathsf{M}_{2}^{(h)}, \mathsf{N}_{2}^{(h)}, h=1,2,\dots,10$ are computed and listed in Table \ref{tab:MN2contr}.

{\small
\begin{longtable}[tbp]{ p{.03\textwidth} | p{.36\textwidth} | p{.21\textwidth} |
p{.25\textwidth}}
\caption{\scriptsize\uppercase{List of contributions} $\mathsf{M}_{2}^{(h)}$ and $\mathsf{N}_{2}^{(h)}$ \uppercase{for} $i=1,2,\ldots,10$.} 
\label{tab:MN2contr}\\
\hline \hline
$h$ & \textbf{Corr. terms in  $\mathsf{M}_{2}^{(h)}$} & \textbf{Corr. terms in} $\mathsf{N}_{2}^{(h)}$ &
\textbf{Delta products} \\
\hline \hline 
1 & $|\mathbb{E}\{a_x|a_x|^2\}|^2+|\mathbb{E}\{a_x|a_y|^2\}|^2$ & $\mathbb{E}\{a_x |a_x|^2\}\mathbb{E}\{a_x |a_y|^2\}$ & \makecell[l]{$R_s^2\Delta_f\delta_{k-m+n}\delta_{k^{\prime}-m^{\prime}+n^{\prime}}$\\
$-R_s\Delta_f^2\delta_{k-m+n-k^{\prime}+m^{\prime}-n^{\prime}}$}\\
\hline
2 & $|\mathbb{E}\{a_x|a_x|^2\}|^2+\mathbb{E}\{a_y^* |a_y|^2\}\mathbb{E}\{|a_x|^2a_y\}$ & $|\mathbb{E}\{ |a_x|^2a_y\}|^2$ & \makecell[l]{$2R_s^2\Delta_f\delta_{k-m-k^{\prime}}\delta_{n+m^{\prime}-n^{\prime}}$\\
$-R_s\Delta_f^2\delta_{k-m+n-k^{\prime}+m^{\prime}-n^{\prime}}$}\\
\hline
3 & $|\mathbb{E}\{a_x|a_x|^2\}|^2+\mathbb{E}\{|a_x|^2a_y^*\}\mathbb{E}\{a_y |a_y|^2\}$ & $|\mathbb{E}\{ |a_x|^2a_y\}|^2$ & \makecell[l]{$R_s^2\Delta_f\delta_{k-m+m^{\prime}}\delta_{n-k^{\prime}-n^{\prime}}$\\
$-R_s\Delta_f^2\delta_{k-m+n-k^{\prime}+m^{\prime}-n^{\prime}}$} \\
\hline
4 & $|\mathbb{E}\{a_x|a_x|^2\}|^2+|\mathbb{E}\{a_x|a_y|^2\}|^2$ & $\mathbb{E}\{a_x^*|a_x|^2\}\mathbb{E}\{a_x |a_y|^2\}$ & \makecell[l]{$R_s^2\Delta_f\delta_{k-m-n^{\prime}}\delta_{n-k^{\prime}+m^{\prime}}$\\
$-R_s\Delta_f^2\delta_{k-m+n-k^{\prime}+m^{\prime}-n^{\prime}}$}\\
\hline
5& $|\mathbb{E}\{a_x|a_x|^2\}|^2+|\mathbb{E}\{a_x|a_y|^2\}|^2$ & $|\mathbb{E}\{a_x^2a_y^*\}|^2$ & \makecell[l]{$R_s^2\Delta_f\delta_{k+n-k^{\prime}}\delta_{m-m^{\prime}+n^{\prime}}$\\
$-R_s\Delta_f^2\delta_{k-m+n-k^{\prime}+m^{\prime}-n^{\prime}}$}\\
\hline
6& $|\mathbb{E}\{a_x^3\}|^2+|\mathbb{E}\{a_xa_y^2\}|^2$ & $|\mathbb{E}\{a_x^2a_y\}|^2$ & \makecell[l]{$R_s^2\Delta_f\delta_{k+n+m^{\prime}}\delta_{m+k^{\prime}+n^{\prime}}$\\
$-R_s\Delta_f^2\delta_{k-m+n-k^{\prime}+m^{\prime}-n^{\prime}}$}\\
\hline
7 & $|\mathbb{E}\{a_x|a_x|^2\}|^2+\mathbb{E}\{|a_x|^2a_y\}\mathbb{E}\{a_y^*|a_y|^2\}$ & $\mathbb{E}\{a_x |a_x|^2\}\mathbb{E}\{a_x^*|a_y|^2\}$ & \makecell[l]{$R_s^2\Delta_f\delta_{k+n-n^{\prime}}\delta_{m+k^{\prime}-m^{\prime}}$\\
$-R_s\Delta_f^2\delta_{k-m+n-k^{\prime}+m^{\prime}-n^{\prime}}$}\\
\hline 
8 & $|\mathbb{E}\{a_x|a_x|^2\}|^2+\mathbb{E}\{ |a_x|^2a_y^*\}\mathbb{E}\{a_y |a_y|^2\}$ & $\mathbb{E}\{a_x |a_y|^2\}\mathbb{E}\{a_x^*|a_x|^2\}$ & \makecell[l]{$R_s^2\Delta_f\delta_{k-k^{\prime}+m^{\prime}}\delta_{m-n+n^{\prime}}$\\
$-R_s\Delta_f^2\delta_{k-m+n-k^{\prime}+m^{\prime}-n^{\prime}}$}\\
\hline
9 & $|\mathbb{E}\{a_x|a_x|^2\}|^2+|\mathbb{E}\{a_x|a_y|^2\}|^2$ & $|\mathbb{E}\{|a_x|^2a_y\}|^2$ & \makecell[l]{$R_s^2\Delta_f\delta_{k-k^{\prime}-n^{\prime}}\delta_{m-n-m^{\prime}}$\\
$-R_s\Delta_f^2\delta_{k-m+n-k^{\prime}+m^{\prime}-n^{\prime}}$}\\

\hline 
10 & $|\mathbb{E}\{a_x|a_x|^2\}|^2+|\mathbb{E}\{a_x^*a_y^2\}|^2$ & $|\mathbb{E}\{|a_x|^2a_y\}|^2$ & \makecell[l]{$R_s^2\Delta_f\delta_{k+m^{\prime}-n^{\prime}}\delta_{m-n+k^{\prime}}$\\
$-R_s\Delta_f^2\delta_{k-m+n-k^{\prime}+m^{\prime}-n^{\prime}}$}\\
\hline\hline
\end{longtable}
}

\subsection{Contributions in $\mathcal{L}_{3}$}\label{subsec:L3contr}

Following Example \ref{ex:L3contr}, the contributions $\mathsf{M}_{3}^{(h)}, \mathsf{N}_{3}^{(h)}, h=1,2,\dots,15$, are computed and listed in Table \ref{tab:MN3contr}.
{\small
\begin{longtable}{ p{.02\textwidth} | p{.38\textwidth} | p{.21\textwidth} |
p{.23\textwidth}}
\caption{\scriptsize\uppercase{List of contributions} $\mathsf{M}_{3}^{(h)}$ \uppercase{and} $\mathsf{N}_{3}^{(h)}$ for $i=1,2,\dots,15$.} 
\label{tab:MN3contr}\\
\hline \hline
$h$ & \textbf{Corr. terms in}  $\mathsf{M}_{3}^{(h)}$ & \textbf{Corr. terms in} $\mathsf{N}_{3}^{(h)}$ &
\textbf{Delta products} \endhead \\
\hline \hline 
1 & $\mathbb{E}\{|a_x|^4\}\mathbb{E}\{|a_x|^2\}+\mathbb{E}\{|a_x|^2 |a_y|^2\}\mathbb{E}\{|a_y|^2\}$ & $\mathbb{E}\{|a_x|^2\}\mathbb{E}\{|a_x|^2 |a_y|^2\}$ & \makecell[l]{$R_s^2\Delta_f\delta_{k-m}\delta_{n-k^{\prime}+m^{\prime}-n^{\prime}}$\\
$-R_s\Delta_f^2\delta_{k-m+n-k^{\prime}+m^{\prime}-n^{\prime}}$}\\
\hline
2 & \makecell[l]{$\mathbb{E}^*\{a_x^2|a_x|^2\}\mathbb{E}\{a_x^2\}+\mathbb{E}\{a_xa_y\}\mathbb{E}^*\{a_xa_y|a_y|^2\}$} & $\mathbb{E}\{a_x^2\}\mathbb{E}^*\{a_x^2 |a_y|^2\}$ & \makecell[l]{$R_s^2\Delta_f\delta_{k+n}\delta_{m+k^{\prime}-m^{\prime}+n^{\prime}}$\\
$-R_s\Delta_f^2\delta_{k-m+n-k^{\prime}+m^{\prime}-n^{\prime}}$}\\
\hline
3 & $\mathbb{E}\{|a_x|^4\}\mathbb{E}\{|a_x|^2\}+\mathbb{E}\{|a_x|^2|a_y|^2\}\mathbb{E}\{|a_y|^2\}$ & $\mathbb{E}\{a_x a_y^*\}\mathbb{E}\{a_x^*a_y|a_x|^2\}$ & \makecell[l]{$R_s^2\Delta_f\delta_{k-k^{\prime}}\delta_{m-n-m^{\prime}+n^{\prime}}$\\
$-R_s\Delta_f^2\delta_{k-m+n-k^{\prime}+m^{\prime}-n^{\prime}}$}\\
\hline 
4 & $\mathbb{E}^*\{a_x^2|a_x|^2\}\mathbb{E}\{a_x^2\}+\mathbb{E}^*\{|a_x|^2a_y^2\}\mathbb{E}\{a_y^2\}$ & $\mathbb{E}\{a_xa_y\}\mathbb{E}^*\{a_xa_y|a_x|^2\}$ & \makecell[l]{$R_s^2\Delta_f\delta_{k+m^{\prime}}\delta_{m-n+k^{\prime}+n^{\prime}}$\\
$-R_s\Delta_f^2\delta_{k-m+n-k^{\prime}+m^{\prime}-n^{\prime}}$}\\
\hline 
5 & $\mathbb{E}\{|a_x|^4\}\mathbb{E}\{|a_x|^2\}+\mathbb{E}\{a_x^*a_y\}\mathbb{E}\{a_x a_y^*|a_y|^2\}$ & $\mathbb{E}\{|a_x|^2\}\mathbb{E}\{|a_x|^2|a_y|^2\}$ & \makecell[l]{$R_s^2\Delta_f\delta_{k-n^{\prime}}\delta_{m-n+k^{\prime}-m^{\prime}}$\\$-R_s\Delta_f^2\delta_{k-m+n-k^{\prime}+m^{\prime}-n^{\prime}}$}\\
\hline
6 & $\mathbb{E}\{|a_x|^4\}\mathbb{E}\{|a_x|^2\}+\mathbb{E}\{a_x a_y^*\}\mathbb{E}\{a_x^*a_y|a_y|^2\}$ & $\mathbb{E}\{|a_x|^2\}\mathbb{E}\{|a_x|^2|a_y|^2\}$ & \makecell[l]{$R_s^2\Delta_f\delta_{m-n}\delta_{k-k^{\prime}+m^{\prime}-n^{\prime}}$\\
$-R_s\Delta_f^2\delta_{k-m+n-k^{\prime}+m^{\prime}-n^{\prime}}$} \\
\hline 
7 & $\mathbb{E}\{a_x^2|a_x|^2\}\mathbb{E}^*\{a^2_x\}+\mathbb{E}\{|a_x|^2 a_y^2\}\mathbb{E}^*\{a_y^2\}$ & $\mathbb{E}^*\{a_xa_y\}\mathbb{E}\{a_xa_y|a_x|^2\}$ & \makecell[l]{$R_s^2\Delta_f\delta_{m+k^{\prime}}\delta_{k-n+m^{\prime}-n^{\prime}}$\\
$-R_s\Delta_f^2\delta_{k-m+n-k^{\prime}+m^{\prime}-n^{\prime}}$} \\
\hline 
8 & $\mathbb{E}\{|a_x|^4\}\mathbb{E}\{|a_x|^2\}+\mathbb{E}\{|a_x|^2 |a_y|^2\}\mathbb{E}\{ |a_y|^2\}$ & $\mathbb{E}\{a_x^*a_y\}\mathbb{E}\{a_x a_y^*|a_x|^2\}$ & \makecell[l]{$R_s^2\Delta_f\delta_{m-m^{\prime}}\delta_{k+n-k^{\prime}-n^{\prime}}$\\
$-R_s\Delta_f^2\delta_{k-m+n-k^{\prime}+m^{\prime}-n^{\prime}}$}\\
\hline
9 & \makecell[l]{$\mathbb{E}\{a_x^2|a_x|^2\}\mathbb{E}^*\{a_x^2\}+\mathbb{E}^*\{a_xa_y\}\mathbb{E}\{a_xa_y|a_y|^2\}$} & $\mathbb{E}^*\{a^2_x\}\mathbb{E}\{a^2_x |a_y|^2\}$ & \makecell[l]{$R_s^2\Delta_f\delta_{m+n^{\prime}}\delta_{k+n-k^{\prime}+m^{\prime}}$\\
$-R_s\Delta_f^2\delta_{k-m+n-k^{\prime}+m^{\prime}-n^{\prime}}$}\\
\hline
10 & $\mathbb{E}\{|a_x|^4\}\mathbb{E}\{|a_x|^2\}+\mathbb{E}\{a_x a_y^*\}\mathbb{E}\{a_x^*a_y|a_y|^2\}$ & $\mathbb{E}\{a_x a_y^*\}\mathbb{E}\{a_x^*a_y |a_x|^2\}$ & \makecell[l]{$R_s^2\Delta_f\delta_{n-k^{\prime}}\delta_{k-m+m^{\prime}-n^{\prime}}$\\
$-R_s\Delta_f^2\delta_{k-m+n-k^{\prime}+m^{\prime}-n^{\prime}}$}\\
\hline
11 & \makecell[l]{$\mathbb{E}^*\{a_x^2|a_x|^2\}\mathbb{E}\{a_x^2\}+\mathbb{E}\{a_xa_y\}\mathbb{E}^*\{a_xa_y|a_y|^2\}$} & $\mathbb{E}\{a_xa_y\}\mathbb{E}^*\{a_xa_y|a_x|^2\}$ & \makecell[l]{$R_s^2\Delta_f\delta_{n+m^{\prime}}\delta_{k-m-k^{\prime}-n^{\prime}}$\\
$-R_s\Delta_f^2\delta_{k-m+n-k^{\prime}+m^{\prime}-n^{\prime}}$}\\
\hline
12 & $\mathbb{E}\{|a_x|^4\}\mathbb{E}\{|a_x|^2\}+\mathbb{E}\{|a_x|^2\}\mathbb{E}\{|a_y|^4\}$ & $\mathbb{E}\{|a_x|^2\}\mathbb{E}\{|a_x|^2|a_y|^2\}$ & \makecell[l]{$R_s^2\Delta_f\delta_{n-n^{\prime}}\delta_{k-m-k^{\prime}+m^{\prime}}$\\
$-R_s\Delta_f^2\delta_{k-m+n-k^{\prime}+m^{\prime}-n^{\prime}}$}\\
\hline 
13 & $\mathbb{E}\{|a_x|^4\}\mathbb{E}\{|a_x|^2\}+\mathbb{E}\{|a_x|^2|a_y|^2\}\mathbb{E}\{|a_y|^2\}$ & $\mathbb{E}\{|a_x|^4\}\mathbb{E}\{|a_y|^2\}$ & \makecell[l]{$R_s^2\Delta_f\delta_{k^{\prime}-m^{\prime}}\delta_{k-m+n-n^{\prime}}$\\$-R_s\Delta_f^2\delta_{k-m+n-k^{\prime}+m^{\prime}-n^{\prime}}$}\\
\hline 
14 & \makecell[l]{$\mathbb{E}\{a_x^2|a_x|^2\}\mathbb{E}^*\{a_x^2\}+\mathbb{E}^*\{a_xa_y\}\mathbb{E}\{a_xa_y|a_y|^2\}$} & $\mathbb{E}^*\{a_xa_y\}\mathbb{E}\{a_xa_y|a_x|^2\}$ & \makecell[l]{$R_s^2\Delta_f\delta_{k^{\prime}+n^{\prime}}\delta_{k-m+n+m^{\prime}}$\\
$-R_s\Delta_f^2\delta_{k-m+n-k^{\prime}+m^{\prime}-n^{\prime}}$}\\
\hline
15 & $\mathbb{E}\{|a_x|^4\}\mathbb{E}\{|a_x|^2\}+\mathbb{E}\{a_x^*a_y\}\mathbb{E}\{a_xa_y^*|a_y|^2\}$ & $\mathbb{E}\{a_x^*a_y\}\mathbb{E}\{a_xa_y^*|a_x|^2$ & \makecell[l]{$R_s^2\Delta_f\delta_{m^{\prime}-n^{\prime}}\delta_{k-m+n-k^{\prime}}$\\
$-R_s\Delta_f^2\delta_{k-m+n-k^{\prime}+m^{\prime}-n^{\prime}}$}\\
\hline\hline
\end{longtable}
}

\subsection{Contributions in $\mathcal{L}_{4}$}\label{subsec:L4contr}
Since $\mathcal{L}_{4}$ comprises a single subset characterised by the single subgroup of all 6 indices (see Sec.~\ref{subsec:subset_part}), only one pair of contributions $\mathsf{M}_{4}^{(1)}$, $\mathsf{N}_{4}^{(1)}$ exists and it is given by
\begin{align*}
\mathsf{M}_{4}^{(1)}&\triangleq\sum_{\boldsymbol{i}\in\mathcal{C}_4^{(1)}} \mathsf{S}_{\boldsymbol{i}}(k,m,n,k^{\prime},m^{\prime},n^{\prime})\\
&=\sum_{i_1=0}^{W-1} [\mathbb{E}\{|a_x|^6\}+\mathbb{E}\{|a_x|^2|a_y|^4\}]e^{-j\frac{2\pi}{W}(k-m) i_1}\\
&=[\mathbb{E}\{|a_x|^6\}+\mathbb{E}\{|a_x|^2|a_y|^4\}]R_s\Delta_f^2\delta_{k-m+n-k^{\prime}+m^{\prime}-n^{\prime}},\\
\mathsf{N}_{4}^{(1)}&\triangleq\sum_{\boldsymbol{i}\in\mathcal{C}_4^{(1)}} \mathsf{S}_{\boldsymbol{i}}(k,m,n,k^{\prime},m^{\prime},n^{\prime})=\mathbb{E}\{|a_x|^4 |a_y|^2\}R_s\Delta_f^2\delta_{k-m+n-k^{\prime}+m^{\prime}-n^{\prime}}.
\end{align*}

\section{Sum of all contributions}\label{sec:sumallcontr}
In Sec.~\ref{sec:evaluation}, we evaluated all the contributions $\mathsf{M}_{g}^{(h)}$ and $\mathsf{N}_{g}^{(h)}$ to the PSD in \eqref{eq:6D_sum}. In particular, from \eqref{eq:6D_sum}, \eqref{eq:S_sum}, \eqref{eq:T_sum}, and \eqref{eq:MN_contr} we have
\begin{align}
\begin{split}
&S_x(f,N_s,L_s)=\left(\frac{8}{9}\right)^2\gamma^2\Delta_f^3\sum_{i=-\infty}^{\infty}\delta(f-i\Delta_f)\sum_{\substack{(k,m,n) \in \mathcal{S}_i \\ (k^{\prime},m^{\prime},n^{\prime}) \in \mathcal{S}_i}}\left[\mathsf{P}\sum_{g=1}^{4}\sum_{h=1}^{H(g)}\mathsf{M}_{g}^{(h)}+2\Re\biggl\{\mathsf{P}\sum_{g=1}^{4}\sum_{h=1}^{H(g)}\mathsf{N}_{g}^{(h)}\biggr\}\right],
\end{split}
\label{eq:psd_sumcontr}
\end{align}
where $H(g)$ is the number of subsets in the partitions of $\mathcal{L}_{g}$, $g=1,2,3,4$, described in Sec.~\ref{subsec:subset_part} ($H(g)=15,10,15,1$ for $g=1,2,3,4$, resp.) and 
\begin{equation}
\mathsf{P}\triangleq\mathcal{P}_{k,m,n,k^{\prime},m^{\prime},n^{\prime}}\eta_{k,m,n}\eta^*_{k^{\prime},m^{\prime},n^{\prime}}.
\label{eq:Psf_def}
\end{equation}
In this section, we evaluate $\sum_{g=1}^{4}\sum_{h=1}^{H(g)}\mathsf{M}_{g}^{(h)}$ and $\sum_{g=1}^{4}\sum_{h=1}^{H(g)}\mathsf{N}_{g}^{(h)}$, as well as compacting the resulting expression as much as possible. 

Before we proceed with computing the above mentioned summation, we remove the Kronecker deltas in $\mathsf{M}_{g}^{(h)}$ and $\mathsf{N}_{g}^{(h)}$ corresponding to contributions in the following subspaces: i) $k=m$; ii) $n=m$; iii) $k^{\prime}=m^{\prime}$; iv) $n^{\prime}=m^{\prime}$. These contributions correspond to so-called \emph{bias terms}, i.e., they arise from a component of the field $\boldsymbol{E}_{1}(f,z)$ which is fully correlated with the transmitted field $\boldsymbol{E}_{0}(t,0)$. This component, after CDC and MF, only results in a deterministic complex scaling of the received constellation which can be easily compensated for. Thus, it does not contribute to the power of the additive zero-mean interference component we observe at the output of the MF+sampling stage once the received constellation is synchronised (in phase and amplitude) with the transmitted one. A more detailed discussion on these bias terms can be found in \cite[Appendix C]{Johannisson2013}, \cite[Appendix A]{Poggiolini2012}. Moreover, also the component $\delta_{k-m+n}\delta_{k^{\prime}-m^{\prime}+n^{\prime}}$ in $\mathsf{M}_{g}^{(h)}$ and $\mathsf{N}_{g}^{(h)}$ is removed as it only gives nonzero contribution to the PSD for the frequency $f=i\Delta_f=0$, hence its effect on the total NLI variance vanishes as we let $\Delta_f\rightarrow 0$ (see Sec.~\ref{sec:final result}). A total of 23 terms from the last columns of Tables~\ref{tab:MN1contr}, \ref{tab:MN2contr}, and \ref{tab:MN3contr} are thus removed. The remaining contributions are given in Table \ref{tab:contr_no_bias}.

{\small
\begin{longtable}{ p{.02\textwidth} | p{.02\textwidth} | p{.38\textwidth} | p{.20\textwidth} |	p{.24\textwidth}}
\caption{\scriptsize\uppercase{List of the} $\mathsf{M}_{g}^{(h)}$ \uppercase{contributions computed in} Sec.~\ref{sec:subset_classification} \uppercase{without the bias terms.}}\label{tab:contr_no_bias}\\
	\hline \hline 
	$g$ &$h$ & \textbf{Corr. terms in}  $\mathsf{M}_{g}^{(h)}$ & \textbf{Corr. terms in} $\mathsf{N}_{g}^{(h)}$ &
	\textbf{Delta products}  \\
	\hline \hline
\multirow{15}{*}{1} 
& 1 & $\mathbb{E}^{3}\{|a_x|^2\}+|\mathbb{E}\{a_xa_y^*\}|^{2}\mathbb{E}\{|a_y|^2\}$ & $\mathbb{E}\{|a_x|^2\}|\mathbb{E}\{a_xa_y^*\}|^2$& \makecell[l]{$-R_s^2\Delta_f\delta_{n-k^{\prime}}\delta_{k-m+m^{\prime}-n^{\prime}}$\\
$+2R_s\Delta_f^2\delta_{k-m+n-k^{\prime}+m^{\prime}-n^{\prime}}$} \\ \cline{2-5}

& 2 & \makecell[l]{$\mathbb{E}\{|a_x|^2\}|\mathbb{E}\{a_x^2\}|^2+|\mathbb{E}\{a_xa_y\}|^{2}\mathbb{E}\{|a_y|^2\}$} & $\mathbb{E}\{|a_x|^2\}|\mathbb{E}\{a_xa_y\}|^2$ & 
\makecell[l]{$-R_s^2\Delta_f(\delta_{k^{\prime}+n^{\prime}}\delta_{k-m+n+m^{\prime}}$ \\ $+\delta_{n+m^{\prime}}\delta_{k-m-k^{\prime}-n^{\prime}})$\\
$+2R_s\Delta_f^2\delta_{k-m+n-k^{\prime}+m^{\prime}-n^{\prime}}$}\\ \cline{2-5}

& 3 & $\mathbb{E}^{3}\{|a_x|^2\}+\mathbb{E}\{|a_x|^2\}\mathbb{E}^2\{|a_y|^2\}$ & $\mathbb{E}^2\{|a_x|^2\}\mathbb{E}\{|a_y|^2\}$ & \makecell[l]{$-R_s^2\Delta_f\delta_{n-n^{\prime}}\delta_{k-m-k^{\prime}+m^{\prime}}$\\
$+2R_s\Delta_f^2\delta_{k-m+n-k^{\prime}+m^{\prime}-n^{\prime}}$} \\ \cline{2-5}

& 4 & \makecell[l]{$\mathbb{E}\{|a_x|^2\}|\mathbb{E}\{a_x^2\}|^2+\mathbb{E}\{a_xa_y\}\mathbb{E}\{a_x^*a_y\}\mathbb{E}^*\{a_y^2\}$} & \makecell[l]{$\mathbb{E}\{a_x^2\}\mathbb{E}^*\{a_xa_y\}\mathbb{E}\{a_x^*a_y\}$} & \makecell[l]{$-R_s^2\Delta_f(\delta_{m+k^{\prime}}\delta_{k+n+m^{\prime}-n^{\prime}} $\\ $+\delta_{k+n}\delta_{m+k^{\prime}-m^{\prime}+n^{\prime}})$\\ $+2R_s\Delta_f^2\delta_{k-m+n-k^{\prime}+m^{\prime}-n^{\prime}}$} \\ \cline{2-5}

& 5 & \makecell[l]{$\mathbb{E}\{|a_x|^2\}|\mathbb{E}\{a_x^2\}|^2+|\mathbb{E}\{a_xa_y\}|^2\mathbb{E}\{|a_y|^2\}$} & \makecell[l]{$\mathbb{E}\{a_x^2\}\mathbb{E}^*\{a_xa_y\}\mathbb{E}\{a_x^*a_y\}$}
& \makecell[l]{$R_s^3\delta_{k+n}\delta_{m-m^{\prime}}\delta_{k^{\prime}+n^{\prime}}$\\
$-R_s^2\Delta_f(\delta_{k^{\prime}+n^{\prime}}\delta_{k-m+n+m^{\prime}}$\\
$+\delta_{m-m^{\prime}}\delta_{k+n-k^{\prime}-n^{\prime}}$\\
$+\delta_{k+n}\delta_{m-m^{\prime}+k^{\prime}+n^{\prime}})$\\ $+2R_s\Delta_f^2\delta_{k-m+n-k^{\prime}+m^{\prime}-n^{\prime}}$} \\ \cline{2-5}

& 6 & \makecell[l]{$\mathbb{E}\{|a_x|^2\}|\mathbb{E}\{a_x^2\}|^2+|\mathbb{E}\{a_xa_y\}|^2\mathbb{E}\{|a_y|^2\}$} & $|\mathbb{E}\{a_x^2\}|^2\mathbb{E}\{|a_y|^2\}$ 
& \makecell[l]{$-R_s^2\Delta_f(\delta_{m+n^{\prime}}\delta_{k+n-k^{\prime}+m^{\prime}}$\\ $+\delta_{k+n}\delta_{m+k^{\prime}-m^{\prime}+n^{\prime}})$ \\
$+2R_s\Delta_f^2\delta_{k-m+n-k^{\prime}+m^{\prime}-n^{\prime}}$ }\\ \cline{2-5}

& 7 & $\mathbb{E}^3\{|a_x|^2\}+|\mathbb{E}\{a_xa_y^*\}|^2\mathbb{E}\{|a_y|^2\}$ & $\mathbb{E}\{|a_x|^2\}|\mathbb{E}\{a_x a_y^*\}|^2$ 
& \makecell[l]{$-R_s^2\Delta_f\delta_{k-k^{\prime}}\delta_{m-n-m^{\prime}+n^{\prime}}$\\
$+2R_s\Delta_f^2\delta_{k-m+n-k^{\prime}+m^{\prime}-n^{\prime}}$}\\ \cline{2-5}

& 8 & $\mathbb{E}^3\{|a_x|^2\}+\mathbb{E}\{|a_x|^2\}\mathbb{E}^2\{|a_y|^2\}$ & $\mathbb{E}\{|a_x|^2\}|\mathbb{E}\{a_x a_y^*\}|^2$ 
& \makecell[l]{$R_s^3\delta_{k-k^{\prime}}\delta_{m-m^{\prime}}\delta_{n-n^{\prime}}$\\
$-R_s^2\Delta_f(\delta_{n-n^{\prime}}\delta_{k-m-k^{\prime}+m^{\prime}}$\\
$+\delta_{m-m^{\prime}}\delta_{k+n-k^{\prime}-n^{\prime}}$\\ $+\delta_{k-k^{\prime}}\delta_{m-n-m^{\prime}+n^{\prime}})$\\
$+2R_s\Delta_f^2\delta_{k-m+n-k^{\prime}+m^{\prime}-n^{\prime}}$} \\ \cline{2-5}

& 9 & \makecell[l]{$\mathbb{E}\{|a_x|^2\}|\mathbb{E}\{a_x^2\}|^2+|\mathbb{E}\{a_xa_y\}|^2\mathbb{E}\{|a_y|^2\}$} & \makecell[l]{$\mathbb{E}^*\{a_x^2\}\mathbb{E}\{a_xa_y\}\mathbb{E}\{a_xa_y^*\}$}
& \makecell[l]{$R_s^3\delta_{k-k^{\prime}}\delta_{m+n^{\prime}}\delta_{n+m^{\prime}}$\\
$-R_s^2\Delta_f(\delta_{n+m^{\prime}}\delta_{k-m-k^{\prime}-n^{\prime}}$\\
$+\delta_{m+n^{\prime}}\delta_{k+n-k^{\prime}+m^{\prime}}$\\
$+\delta_{k-k^{\prime}}\delta_{m-n-m^{\prime}+n^{\prime}})$\\
$+2R_s\Delta_f^2\delta_{k-m+n-k^{\prime}+m^{\prime}-n^{\prime}}$}\\ \cline{2-5}

& 10 & \makecell[l]{$\mathbb{E}\{|a_x|^2\}|\mathbb{E}\{a_x^2\}|^2+\mathbb{E}^*\{a_xa_y\}\mathbb{E}\{a_x a_y^*\}\mathbb{E}\{a_y^2\}$} & $\mathbb{E}\{|a_x|^2\}|\mathbb{E}\{a_xa_y\}|^2$
& \makecell[l]{$-R_s^2\Delta_f(\delta_{k^{\prime}+n^{\prime}}\delta_{k-m+n+m^{\prime}}$ \\
$+\delta_{k+m^{\prime}}\delta_{m-n+k^{\prime}+n^{\prime}})$\\
$+2R_s\Delta_f^2\delta_{k-m+n-k^{\prime}+m^{\prime}-n^{\prime}}$}\\ \cline{2-5}

& 11 & \makecell[l]{$\mathbb{E}\{|a_x|^2\}|\mathbb{E}\{a_x^2\}|^2+\mathbb{E}\{|a_x|^2\}|\mathbb{E}\{a_y^2\}|^2$} & $\mathbb{E}\{|a_x|^2\}|\mathbb{E}\{a_xa_y\}|^2$
& \makecell[l]{$R_s^3\delta_{k+m^{\prime}}\delta_{m+k^{\prime}}\delta_{n-n^{\prime}}$\\
$-R_s^2\Delta_f(\delta_{n-n^{\prime}}\delta_{k-m-k^{\prime}+m^{\prime}}$\\
$+\delta_{m+k^{\prime}}\delta_{k+n+m^{\prime}-n^{\prime}}$\\
$+\delta_{k+m^{\prime}}\delta_{m-n+k^{\prime}+n^{\prime}})$\\
$+2R_s\Delta_f^2\delta_{k-m+n-k^{\prime}+m^{\prime}-n^{\prime}}$}\\ \cline{2-5}

& 12 & \makecell[l]{$\mathbb{E}\{|a_x|^2\}|\mathbb{E}\{a_x^2\}|^2+\mathbb{E}^*\{a_xa_y\}\mathbb{E}\{a_x a_y^*\}\mathbb{E}\{a_y^2\}$} & \makecell[l]{$\mathbb{E}^*\{a_x^2\}\mathbb{E}\{a_xa_y\}\mathbb{E}\{a_x a_y^*\}$} 
& \makecell[l]{$R_s^3\delta_{k+m^{\prime}}\delta_{m+n^{\prime}}\delta_{n-k^{\prime}}$\\
$-R_s^2\Delta_f(\delta_{n-k^{\prime}}\delta_{k-m+m^{\prime}-n^{\prime}}$\\
$+\delta_{m+n^{\prime}}\delta_{k+n-k^{\prime}+m^{\prime}}$\\
$+\delta_{k+m^{\prime}}\delta_{m-n+k^{\prime}+n^{\prime}})$\\ 
$+2R_s\Delta_f^2\delta_{k-m+n-k^{\prime}+m^{\prime}-n^{\prime}}$} \\ \cline{2-5}

& 13 & $\mathbb{E}^3\{|a_x|^2\}+|\mathbb{E}\{a_x a_y^*\}|^2\mathbb{E}\{|a_y|^2\}$ & $\mathbb{E}^2\{|a_x|^2\}\mathbb{E}\{|a_y|^2\}$
& \makecell[l]{$-R_s^2\Delta_f\delta_{k-n^{\prime}}\delta_{m-n+k^{\prime}-m^{\prime}}$\\
$+2R_s\Delta_f^2\delta_{k-m+n-k^{\prime}+m^{\prime}-n^{\prime}}$} \\ \cline{2-5}

& 14 & \makecell[l]{$\mathbb{E}\{|a_x|^2\}|\mathbb{E}\{a_x^2\}|^2+\mathbb{E}\{a_xa_y\}\mathbb{E}\{a_x^*a_y\}\mathbb{E}^*\{a_y^2\}$} & $\mathbb{E}\{|a_x|^2\}|\mathbb{E}\{a_xa_y\}|^2$ 
& \makecell[l]{$R_s^3\delta_{k-n^{\prime}}\delta_{m+k^{\prime}}\delta_{n+m^{\prime}}$\\ 
$-R_s^2\Delta_f(\delta_{n+m^{\prime}}\delta_{k-m-k^{\prime}-n^{\prime}}$\\
$+\delta_{m+k^{\prime}}\delta_{k+n+m^{\prime}-n^{\prime}}$\\
$+\delta_{k-n^{\prime}}\delta_{m-n+k^{\prime}-m^{\prime}})$\\
$+2R_s\Delta_f^2\delta_{k-m+n-k^{\prime}+m^{\prime}-n^{\prime}}$}\\ \cline{2-5}

&15 & $\mathbb{E}^3\{|a_x|^2\}+|\mathbb{E}\{a_x a_y^*\}|^2\mathbb{E}\{|a_y|^2\}$ & $\mathbb{E}\{|a_x|^2\}|\mathbb{E}\{a_x a_y^*\}|^2$ 
& \makecell[l]{$R_s^3\delta_{k-n^{\prime}}\delta_{m-m^{\prime}}\delta_{n-k^{\prime}}$\\
$-R_s^2\Delta_f(\delta_{n-k^{\prime}}\delta_{k-m+m^{\prime}-n^{\prime}}$\\
$+\delta_{m-m^{\prime}}\delta_{k+n-k^{\prime}-n^{\prime}}$\\
$+\delta_{k-n^{\prime}}\delta_{m-n+k^{\prime}-m^{\prime}})$\\
$+2R_s\Delta_f^2\delta_{k-m+n-k^{\prime}+m^{\prime}-n^{\prime}}$}\\
\hline\hline
\multirow{10}{*}{2}
& 1& $|\mathbb{E}\{a_x|a_x|^2\}|^2+|\mathbb{E}\{a_x|a_y|^2\}|^2$ & $\mathbb{E}\{a_x |a_x|^2\}\mathbb{E}\{a_x |a_y|^2\}$ & 
$-R_s\Delta_f^2\delta_{k-m+n-k^{\prime}+m^{\prime}-n^{\prime}}$\\ \cline{2-5}

& 2 & \makecell[l]{$|\mathbb{E}\{a_x|a_x|^2\}|^2+\mathbb{E}\{a_y^* |a_y|^2\}\mathbb{E}\{|a_x|^2a_y\}$} & $|\mathbb{E}\{ |a_x|^2a_y\}|^2$ & 
\makecell[l]{$R_s^2\Delta_f\delta_{k-m-k^{\prime}}\delta_{n+m^{\prime}-n^{\prime}}$\\$-R_s\Delta_f^2\delta_{k-m+n-k^{\prime}+m^{\prime}-n^{\prime}}$ } \\ \cline{2-5}

& 3 & \makecell[l]{$|\mathbb{E}\{a_x|a_x|^2\}|^2+\mathbb{E}\{|a_x|^2a_y^*\}\mathbb{E}\{a_y |a_y|^2\}$}& $|\mathbb{E}\{ |a_x|^2a_y\}|^2$ & 
\makecell[l]{$R_s^2\Delta_f\delta_{k-m+m^{\prime}}\delta_{n-k^{\prime}-n^{\prime}}$\\$-R_s\Delta_f^2\delta_{k-m+n-k^{\prime}+m^{\prime}-n^{\prime}}$} \\ \cline{2-5}

& 4 &$|\mathbb{E}\{a_x|a_x|^2\}|^2+|\mathbb{E}\{a_x|a_y|^2\}|^2$ & $\mathbb{E}\{a_x^*|a_x|^2\}\mathbb{E}\{a_x |a_y|^2\}$ & 
\makecell[l]{$R_s^2\Delta_f\delta_{k-m-n^{\prime}}\delta_{n-k^{\prime}+m^{\prime}}$\\ $-R_s\Delta_f^2\delta_{k-m+n-k^{\prime}+m^{\prime}-n^{\prime}}$} \\ \cline{2-5}

& 5 & $|\mathbb{E}\{a_x|a_x|^2\}|^2+|\mathbb{E}\{a_x|a_y|^2\}|^2$ & $|\mathbb{E}\{a_x^2a_y^*\}|^2$ & 
\makecell[l]{$R_s^2\Delta_f\delta_{k+n-k^{\prime}}\delta_{m-m^{\prime}+n^{\prime}}$\\$-R_s\Delta_f^2\delta_{k-m+n-k^{\prime}+m^{\prime}-n^{\prime}}$} \\ \cline{2-5}

& 6 & $|\mathbb{E}\{a_x^3\}|^2+|\mathbb{E}\{a_xa_y^2\}|^2$ & $|\mathbb{E}\{a_x^2a_y\}|^2$ & 
\makecell[l]{$R_s^2\Delta_f\delta_{k+n+m^{\prime}}\delta_{m+k^{\prime}+n^{\prime}}$\\$-R_s\Delta_f^2\delta_{k-m+n-k^{\prime}+m^{\prime}-n^{\prime}}$} \\ \cline{2-5}

& 7 & \makecell[l]{$|\mathbb{E}\{a_x|a_x|^2\}|^2+\mathbb{E}\{|a_x|^2a_y\}\mathbb{E}\{a_y^*|a_y|^2\}$ }& $\mathbb{E}\{a_x |a_x|^2\}\mathbb{E}\{a_x^*|a_y|^2\}$ & 
\makecell[l]{$R_s^2\Delta_f\delta_{k+n-n^{\prime}}\delta_{m+k^{\prime}-m^{\prime}}$\\$-R_s\Delta_f^2\delta_{k-m+n-k^{\prime}+m^{\prime}-n^{\prime}}$} \\ \cline{2-5}

& 8 & \makecell[l]{$|\mathbb{E}\{a_x|a_x|^2\}|^2+\mathbb{E}\{ |a_x|^2a_y^*\}\mathbb{E}\{a_y |a_y|^2\}$} & $\mathbb{E}\{a_x |a_y|^2\}\mathbb{E}\{a_x^*|a_x|^2\}$ & 
\makecell[l]{$R_s^2\Delta_f\delta_{k-k^{\prime}+m^{\prime}}\delta_{m-n+n^{\prime}}$\\$-R_s\Delta_f^2\delta_{k-m+n-k^{\prime}+m^{\prime}-n^{\prime}}$} \\ \cline{2-5}

& 9 &$|\mathbb{E}\{a_x|a_x|^2\}|^2+|\mathbb{E}\{a_x|a_y|^2\}|^2$ & $|\mathbb{E}\{|a_x|^2a_y\}|^2$ & 
\makecell[l]{$R_s^2\Delta_f\delta_{k-k^{\prime}-n^{\prime}}\delta_{m-n-m^{\prime}}$\\$-R_s\Delta_f^2\delta_{k-m+n-k^{\prime}+m^{\prime}-n^{\prime}}$} \\ \cline{2-5}

& 10 & $|\mathbb{E}\{a_x|a_x|^2\}|^2+|\mathbb{E}\{a_x^*a_y^2\}|^2$ & $|\mathbb{E}\{|a_x|^2a_y\}|^2$ & 
\makecell[l]{$R_s^2\Delta_f\delta_{k+m^{\prime}-n^{\prime}}\delta_{m-n+k^{\prime}}$\\$-R_s\Delta_f^2\delta_{k-m+n-k^{\prime}+m^{\prime}-n^{\prime}}$} \\ 
\hline\hline
\multirow{15}{*}{3}
& 1 & \makecell[l]{$\mathbb{E}\{|a_x|^4\}\mathbb{E}\{|a_x|^2\}+\mathbb{E}\{|a_x|^2 |a_y|^2\}\mathbb{E}\{|a_y|^2\}$} 
& $\mathbb{E}\{|a_x|^2\}\mathbb{E}\{|a_x|^2 |a_y|^2\}$ & 
$-R_s\Delta_f^2\delta_{k-m+n-k^{\prime}+m^{\prime}-n^{\prime}}$ \\ \cline{2-5}

& 2 & \makecell[l]{$\mathbb{E}^*\{a_x^2|a_x|^2\}\mathbb{E}\{a_x^2\}+\mathbb{E}\{a_xa_y\}\mathbb{E}^*\{a_xa_y|a_y|^2\}$} & $\mathbb{E}\{a_x^2\}\mathbb{E}^*\{a_x^2 |a_y|^2\}$ & 
\makecell[l]{$R_s^2\Delta_f\delta_{k+n}\delta_{m+k^{\prime}-m^{\prime}+n^{\prime}}$\\$-R_s\Delta_f^2\delta_{k-m+n-k^{\prime}+m^{\prime}-n^{\prime}}$} \\ \cline{2-5}

& 3 & \makecell[l]{$\mathbb{E}\{|a_x|^4\}\mathbb{E}\{|a_x|^2\}+\mathbb{E}\{|a_x|^2|a_y|^2\}\mathbb{E}\{|a_y|^2\}$} & $\mathbb{E}\{a_x a_y^*\}\mathbb{E}\{a_x^*a_y|a_x|^2\}$  & 
\makecell[l]{$R_s^2\Delta_f\delta_{k-k^{\prime}}\delta_{m-n-m^{\prime}+n^{\prime}}$\\$-R_s\Delta_f^2\delta_{k-m+n-k^{\prime}+m^{\prime}-n^{\prime}}$} \\ \cline{2-5}

& 4 & \makecell[l]{$\mathbb{E}^*\{a_x^2|a_x|^2\}\mathbb{E}\{a_x^2\}+\mathbb{E}^*\{|a_x|^2a_y^2\}\mathbb{E}\{a_y^2\}$} & $\mathbb{E}\{a_xa_y\}\mathbb{E}^*\{a_xa_y|a_x|^2\}$ & 
\makecell[l]{$R_s^2\Delta_f\delta_{k+m^{\prime}}\delta_{m-n+k^{\prime}+n^{\prime}}$\\$-R_s\Delta_f^2\delta_{k-m+n-k^{\prime}+m^{\prime}-n^{\prime}}$} \\ \cline{2-5}

& 5 & \makecell[l]{$\mathbb{E}\{|a_x|^4\}\mathbb{E}\{|a_x|^2\}+\mathbb{E}\{a_x^*a_y\}\mathbb{E}\{a_x a_y^*|a_y|^2\}$ } & $\mathbb{E}\{|a_x|^2\}\mathbb{E}\{|a_x|^2|a_y|^2\}$ & 
\makecell[l]{$R_s^2\Delta_f\delta_{k-n^{\prime}}\delta_{m-n+k^{\prime}-m^{\prime}}$\\$-R_s\Delta_f^2\delta_{k-m+n-k^{\prime}+m^{\prime}-n^{\prime}}$} \\ \cline{2-5}

& 6 & \makecell[l]{$\mathbb{E}\{|a_x|^4\}\mathbb{E}\{|a_x|^2\}+\mathbb{E}\{a_x a_y^*\}\mathbb{E}\{a_x^*a_y|a_y|^2\}$} & $\mathbb{E}\{|a_x|^2\}\mathbb{E}\{|a_x|^2|a_y|^2\}$ & 
$-R_s\Delta_f^2\delta_{k-m+n-k^{\prime}+m^{\prime}-n^{\prime}}$\\ \cline{2-5}

& 7 & \makecell[l]{$\mathbb{E}\{a_x^2|a_x|^2\}\mathbb{E}^*\{a^2_x\}+\mathbb{E}\{|a_x|^2 a_y^2\}\mathbb{E}^*\{a_y^2\}$ } & $\mathbb{E}^*\{a_xa_y\}\mathbb{E}\{a_xa_y|a_x|^2\}$ & 
\makecell[l]{$R_s^2\Delta_f\delta_{m+k^{\prime}}\delta_{k-n+m^{\prime}-n^{\prime}}$\\$-R_s\Delta_f^2\delta_{k-m+n-k^{\prime}+m^{\prime}-n^{\prime}}$} \\ \cline{2-5}

& 8 & \makecell[l]{$\mathbb{E}\{|a_x|^4\}\mathbb{E}\{|a_x|^2\}+\mathbb{E}\{|a_x|^2 |a_y|^2\}\mathbb{E}\{ |a_y|^2\}$} & $\mathbb{E}\{a_x^*a_y\}\mathbb{E}\{a_x a_y^*|a_x|^2\}$ & 
\makecell[l]{$R_s^2\Delta_f\delta_{m-m^{\prime}}\delta_{k+n-k^{\prime}-n^{\prime}}$\\$-R_s\Delta_f^2\delta_{k-m+n-k^{\prime}+m^{\prime}-n^{\prime}}$} \\ \cline{2-5}

& 9 & \makecell[l]{$\mathbb{E}\{a_x^2|a_x|^2\}\mathbb{E}^*\{a_x^2\}+\mathbb{E}^*\{a_xa_y\}\mathbb{E}\{a_xa_y|a_y|^2\}$} & $\mathbb{E}^*\{a^2_x\}\mathbb{E}\{a^2_x |a_y|^2\}$ & 
\makecell[l]{$R_s^2\Delta_f\delta_{m+n^{\prime}}\delta_{k+n-k^{\prime}+m^{\prime}}$\\$-R_s\Delta_f^2\delta_{k-m+n-k^{\prime}+m^{\prime}-n^{\prime}}$}  \\ \cline{2-5}

& 10 & \makecell[l]{$\mathbb{E}\{|a_x|^4\}\mathbb{E}\{|a_x|^2\}+\mathbb{E}\{a_x a_y^*\}\mathbb{E}\{a_x^*a_y|a_y|^2\}$} &
$\mathbb{E}\{a_x a_y^*\}\mathbb{E}\{a_x^*a_y |a_x|^2\}$ & 
\makecell[l]{$R_s^2\Delta_f\delta_{n-k^{\prime}}\delta_{k-m+m^{\prime}-n^{\prime}}$\\$-R_s\Delta_f^2\delta_{k-m+n-k^{\prime}+m^{\prime}-n^{\prime}}$} \\ \cline{2-5}

& 11 & \makecell[l]{$\mathbb{E}^*\{a_x^2|a_x|^2\}\mathbb{E}\{a_x^2\}+\mathbb{E}\{a_xa_y\}\mathbb{E}^*\{a_xa_y|a_y|^2\}$} & $\mathbb{E}\{a_xa_y\}\mathbb{E}^*\{a_xa_y|a_x|^2\}$  & 
\makecell[l]{$R_s^2\Delta_f\delta_{n+m^{\prime}}\delta_{k-m-k^{\prime}-n^{\prime}}$\\$-R_s\Delta_f^2\delta_{k-m+n-k^{\prime}+m^{\prime}-n^{\prime}}$} \\ \cline{2-5}

& 12 & \makecell[l]{$\mathbb{E}\{|a_x|^4\}\mathbb{E}\{|a_x|^2\}+\mathbb{E}\{|a_x|^2\}\mathbb{E}\{|a_y|^4\}$} & $\mathbb{E}\{|a_x|^2\}\mathbb{E}\{|a_x|^2|a_y|^2\}$& 
\makecell[l]{$R_s^2\Delta_f\delta_{n-n^{\prime}}\delta_{k-m-k^{\prime}+m^{\prime}}$\\$-R_s\Delta_f^2\delta_{k-m+n-k^{\prime}+m^{\prime}-n^{\prime}}$} \\ \cline{2-5}

& 13 & \makecell[l]{$\mathbb{E}\{|a_x|^4\}\mathbb{E}\{|a_x|^2\}+\mathbb{E}\{|a_x|^2|a_y|^2\}\mathbb{E}\{|a_y|^2\}$ } & $\mathbb{E}\{|a_x|^4\}\mathbb{E}\{|a_y|^2\}$ & 
$-R_s\Delta_f^2\delta_{k-m+n-k^{\prime}+m^{\prime}-n^{\prime}}$ \\ \cline{2-5}

& 14 & \makecell[l]{$\mathbb{E}\{a_x^2|a_x|^2\}\mathbb{E}^*\{a_x^2\}+\mathbb{E}^*\{a_xa_y\}\mathbb{E}\{a_xa_y|a_y|^2\}$} & $\mathbb{E}^*\{a_xa_y\}\mathbb{E}\{a_xa_y|a_x|^2\}$ & 
\makecell[l]{$R_s^2\Delta_f\delta_{k^{\prime}+n^{\prime}}\delta_{k-m+n+m^{\prime}}$\\$-R_s\Delta_f^2\delta_{k-m+n-k^{\prime}+m^{\prime}-n^{\prime}}$} \\ \cline{2-5}

& 15 & \makecell[l]{$\mathbb{E}\{|a_x|^4\}\mathbb{E}\{|a_x|^2\}+\mathbb{E}\{a_x^*a_y\}\mathbb{E}\{a_xa_y^*|a_y|^2\}$} & $\mathbb{E}\{a_x^*a_y\}\mathbb{E}\{a_xa_y^*|a_x|^2$ & 
$-R_s\Delta_f^2\delta_{k-m+n-k^{\prime}+m^{\prime}-n^{\prime}}$  \\ 
\hline\hline
4 & 1 & $\mathbb{E}\{|a_x|^6\}+\mathbb{E}\{|a_x|^2|a_y|^4\}$ & $\mathbb{E}\{|a_x|^4 |a_y|^2\}$ & $R_s\Delta_f^2\delta_{k-m+n-k^{\prime}+m^{\prime}-n^{\prime}}$\\
\hline \hline
\end{longtable}
}

We now compact the contributions in Table \ref{tab:contr_no_bias} by grouping the Kronecker delta products based on each correlation term they multiply. We use three pairs of curly brackets $\{\cdot\}$ to denote the terms multiplying $R_s^3$, $R_s^2$, and $R_s$. The list of all Kronecker delta products multiplying each correlation term is shown in Table~\ref{tab:kron_delta}. The correlation terms are divided into intra-polarisation (expectations containing only $a_x$) and cross-polarisation terms (expectations containing both $a_x$ and $a_y$). Moreover, the correlations are categorised based on the specific contribution (either $\mathsf{M}$ or $\mathsf{N}$) in \eqref{eq:psd_sumcontr} they belong to. 

{\small
\begin{longtable}{p{.2\textwidth} | p{.75\textwidth} }
\caption{\scriptsize\uppercase{List of Kronecker delta contributions ordered by the corresponding correlation term.}}\label{tab:kron_delta}\\
\hline \hline
\textbf{Correlation terms} & \textbf{Kronecker delta products} \\ 
\hline \hline
\multicolumn{2}{c}{\textbf{Intra-polarisation terms}} \\
%Intra-polarisation terms &         \\ 
\hline 
\multicolumn{1}{c|}{\textbf{In $\mathsf{M}_{g}^{(h)}$}} \\
\hline 
$\mathbb{E}^3\{|a_x|^2\}$  &      \{$\delta_{k-k^{\prime}}\delta_{m-m^{\prime}}\delta_{n-n^{\prime}},\delta_{k-n^{\prime}}\delta_{m-m^{\prime}}\delta_{n-k^{\prime}}$\},\{$-2\delta_{n-k^{\prime}}\delta_{k-m+m^{\prime}-n^{\prime}},-2\delta_{n-n^{\prime}}\delta_{k-m-k^{\prime}+m^{\prime}},$\\
&$-2\delta_{k-k^{\prime}}\delta_{m-n-m^{\prime}+n^{\prime}},-2\delta_{m-m^{\prime}}\delta_{k+n-k^{\prime}-n^{\prime}},-2\delta_{k-n^{\prime}}\delta_{m-n+k^{\prime}-m^{\prime}}$\},\{$12\delta_{k-m+n-k^{\prime}+m^{\prime}-n^{\prime}}$\} \\
\hline

$\mathbb{E}\{|a_x|^2\}|\mathbb{E}\{a_x^2\}|^2$ & \{$\delta_{k+n}\delta_{m-m^{\prime}}\delta_{k^{\prime}+n^{\prime}},\delta_{k-k^{\prime}}\delta_{m+n^{\prime}}\delta_{n+m^{\prime}},\delta_{k+m^{\prime}}\delta_{m+k^{\prime}}\delta_{n-n^{\prime}},\delta_{k+m^{\prime}}\delta_{m+n^{\prime}}\delta_{n-k^{\prime}},$\\
& $\delta_{k-n^{\prime}}\delta_{m+k^{\prime}}\delta_{n+m^{\prime}}$\},\{$-\delta_{n-k^{\prime}}\delta_{k-m+m^{\prime}-n^{\prime}},-3\delta_{m+k^{\prime}}\delta_{k+n+m^{\prime}-n^{\prime}},-3\delta_{k+n}\delta_{m+k^{\prime}-m^{\prime}+n^{\prime}}$\\
& $-3\delta_{k^{\prime}+n^{\prime}}\delta_{k+n-m+m^{\prime}},-\delta_{m-m^{\prime}}\delta_{k+n-k^{\prime}-n^{\prime}},-3\delta_{m+n^{\prime}}\delta_{k+n-k^{\prime}+m^{\prime}},-3\delta_{n+m^{\prime}}\delta_{k-m-k^{\prime}-n^{\prime}},$\\
&$-\delta_{k-k^{\prime}}\delta_{m-n-m^{\prime}+n^{\prime}},-3\delta_{k+m^{\prime}}\delta_{m-n+k^{\prime}+n^{\prime}},-\delta_{n-n^{\prime}}\delta_{k-m-k^{\prime}+m^{\prime}},-\delta_{k-n^{\prime}}\delta_{m-n+k^{\prime}-m^{\prime}},$\}\\
&\{$18\delta_{k-m+n-k^{\prime}+m^{\prime}-n^{\prime}}$\} \\
\hline

$|\mathbb{E}\{a_x|a_x|^2\}|^2$                 & \{\}, \{$\delta_{k-m-k^{\prime}}\delta_{n+m^{\prime}-n^{\prime}},\delta_{k-m+m^{\prime}}\delta_{n-k^{\prime}-n^{\prime}},\delta_{k-m-n^{\prime}}\delta_{n-k^{\prime}+m^{\prime}},\delta_{k+n-k^{\prime}}\delta_{m-m^{\prime}+n^{\prime}},$\\
&$\delta_{k+n-n^{\prime}}\delta_{m+k^{\prime}-m^{\prime}},\delta_{k-k^{\prime}+m^{\prime}}\delta_{m-n+n^{\prime}},\delta_{k-k^{\prime}-n^{\prime}}\delta_{m-n+m^{\prime}},\delta_{k+m^{\prime}-n^{\prime}}\delta_{m-n+k^{\prime}}$\},\\
&\{$-9\delta_{k-m+n-k^{\prime}+m^{\prime}-n^{\prime}}$\} \\
\hline

$|\mathbb{E}\{a_x^3\}|^2$ & \{\}, $\{\delta_{k+n+m^{\prime}}\delta_{m+k^{\prime}+n^{\prime}}\},\{-\delta_{k-m+n-k^{\prime}+m^{\prime}-n^{\prime}}\}$ \\
\hline

$\mathbb{E}\{|a_x|^4\}\mathbb{E}\{|a_x|^2\}$ & \{\}, $\{\delta_{k-k^{\prime}}\delta_{m-n-m^{\prime}+n^{\prime}},\delta_{k-n^{\prime}}\delta_{m-n+k^{\prime}-m^{\prime}},\delta_{m-m^{\prime}}\delta_{k+n-k^{\prime}-n^{\prime}},\delta_{n-k^{\prime}}\delta_{k-m+m^{\prime}-n^{\prime}},$ \\& $\delta_{n-n^{\prime}}\delta_{k-m-k^{\prime}+m^{\prime}}\},\{-9\delta_{k-m+n-k^{\prime}+m^{\prime}-n^{\prime}}\}$ \\
\hline

$\mathbb{E}^*\{a_x^2|a_x|^2\}\mathbb{E}\{a_x^2\}$  & \{\}, $\{\delta_{k+n}\delta_{m+k^{\prime}-m^{\prime}+n^{\prime}},\delta_{k+m^{\prime}}\delta_{m-n+k^{\prime}+n^{\prime}},\delta_{n+m^{\prime}}\delta_{k-m-k^{\prime}-n^{\prime}}\}$,\\
&$\{-3\delta_{k-m+n-k^{\prime}+m^{\prime}-n^{\prime}}\}$\\
\hline

$\mathbb{E}\{a_x^2|a_x|^2\}\mathbb{E}^*\{a_x^2\}$  & \{\}, $\{\delta_{m+k^{\prime}}\delta_{k-n+m^{\prime}-n^{\prime}},\delta_{m+n^{\prime}}\delta_{k+n-k^{\prime}+m^{\prime}},\delta_{k^{\prime}+n^{\prime}}\delta_{k-m+n+m^{\prime}}\},\{-3\delta_{k-m+n-k^{\prime}+m^{\prime}-n^{\prime}}\}$ \\
\hline

$\mathbb{E}\{|a_x|^6\}$ & $\{\}$, $\{\}$, $\{\delta_{k-m+n-k^{\prime}+m^{\prime}-n^{\prime}}\}$ \\
\hline \hline

\multicolumn{2}{c}{\textbf{Cross-polarisation terms}}     \\
\hline
\multicolumn{1}{c|}{\textbf{In $\mathsf{M}_{g}^{(h)}$}} \\
\hline 

$\mathbb{E}\{|a_x|^2\}\mathbb{E}^2\{|a_y|^2\}$ & $\{\delta_{k-k^{\prime}}\delta_{m-m^{\prime}}\delta_{n-n^{\prime}}\},\{-2\delta_{n-n^{\prime}}\delta_{k-m-k^{\prime}+m^{\prime}},-\delta_{m-m^{\prime}}\delta_{k+n-k^{\prime}-n^{\prime}},-\delta_{k-k^{\prime}}\delta_{m-n-m^{\prime}+n^{\prime}}\},$\\
&$\{4\delta_{k-m+n-k^{\prime}+m^{\prime}-n^{\prime}}\}$ \\
\hline

$\mathbb{E}\{|a_x|^2\}|\mathbb{E}\{a_y^2\}|^2$ & $\{\delta_{k+m^{\prime}}\delta_{m+k^{\prime}}\delta_{n-n^{\prime}}\},\{-\delta_{n-n^{\prime}}\delta_{k-m-k^{\prime}+m^{\prime}},-\delta_{m+k^{\prime}}\delta_{k+n+m^{\prime}-n^{\prime}},-\delta_{k+m^{\prime}}\delta_{m-n+k^{\prime}+n^{\prime}}\},$ \\
& $\{2\delta_{k-m+n-k^{\prime}+m^{\prime}-n^{\prime}}\} $\\
\hline

$\mathbb{E}\{|a_x|^2\}\mathbb{E}\{|a_y|^4\}$    & \{\}, \{$\delta_{n-n^{\prime}}\delta_{k-m-k^{\prime}+m^{\prime}}\},\{-\delta_{k-m+n-k^{\prime}+m^{\prime}-n^{\prime}}\}$ \\
\hline

$\mathbb{E}\{|a_x|^2a_y^*\}\mathbb{E}\{a_y |a_y|^2\}$ &  \{\}, $\{\delta_{k-m+m^{\prime}}\delta_{n-k^{\prime}-n^{\prime}},\delta_{k-k^{\prime}+m^{\prime}}\delta_{m-n+n^{\prime}}\},\{-2\delta_{k-m+n-k^{\prime}+m^{\prime}-n^{\prime}}\}$\\
\hline

$\mathbb{E}\{|a_x|^2|a_y|^2\}\mathbb{E}\{|a_y|^2\}$ &  \{\}, $\{\delta_{k-k^{\prime}}\delta_{m-n-m^{\prime}+n^{\prime}},\delta_{m-m^{\prime}}\delta_{k+n-k^{\prime}-n^{\prime}}\},\{-4\delta_{k-m+n-k^{\prime}+m^{\prime}-n^{\prime}}\}$\\
\hline

$\mathbb{E}\{a_xa_y\}\mathbb{E}^*\{a_xa_y|a_x|^2\}$ &  \{\}, $\{\delta_{k+m^{\prime}}\delta_{m-n+k^{\prime}+n^{\prime}},\delta_{m+k^{\prime}}\delta_{k-n+m^{\prime}-n^{\prime}},\delta_{n+m^{\prime}}\delta_{k-m-k^{\prime}-n^{\prime}},\delta_{k^{\prime}+n^{\prime}}\delta_{k-m+n+m^{\prime}}\},$\\
& $\{-4\delta_{k-m+n-k^{\prime}+m^{\prime}-n^{\prime}}\}$\\
\hline

$\mathbb{E}^*\{|a_x|^2a_y^2\}\mathbb{E}\{a_y^2\}$ &  \{\}, $\{\delta_{k+m^{\prime}}\delta_{m-n+k^{\prime}+n^{\prime}}\},\{-\delta_{k-m+n-k^{\prime}+m^{\prime}-n^{\prime}}\}$ \\
\hline

$\mathbb{E}\{|a_x|^2 a_y^2\}\mathbb{E}^*\{a_y^2\}$  &  \{\},  $\{\delta_{m+k^{\prime}}\delta_{k-n+m^{\prime}-n^{\prime}}\},\{-\delta_{k-m+n-k^{\prime}+m^{\prime}-n^{\prime}}\}$ \\
\hline

$\mathbb{E}\{|a_x|^2|a_y|^4\}$ & \{\}, \{\}, $\{\delta_{k-m+n-k^{\prime}+m^{\prime}-n^{\prime}}\}$ \\
\hline

$|\mathbb{E}\{a_xa_y^*\}|^{2}\mathbb{E}\{|a_y|^2\}$ & $\{\delta_{k-n^{\prime}}\delta_{m-m^{\prime}}\delta_{n-k^{\prime}}\},\{-2\delta_{n-k^{\prime}}\delta_{k-m+m^{\prime}-n^{\prime}},-2\delta_{k-n^{\prime}}\delta_{m-n+k^{\prime}-m^{\prime}},-\delta_{k-k^{\prime}}\delta_{m-n-m^{\prime}+n^{\prime}},$\\
& $-\delta_{m-m^{\prime}}\delta_{k+n-k^{\prime}-n^{\prime}}\},\{8\delta_{k-m+n-k^{\prime}+m^{\prime}-n^{\prime}}\}$\\
\hline

$\mathbb{E}\{a_xa_y\}\mathbb{E}\{a_x^*a_y\}\mathbb{E}^*\{a_y^2\}$ & $\{\delta_{k-n^{\prime}}\delta_{m+k^{\prime}}\delta_{n+m^{\prime}}\},\{-2\delta_{m+k^{\prime}}\delta_{k+n+m^{\prime}-n^{\prime}},-\delta_{k+n}\delta_{m+k^{\prime}-m^{\prime}+n^{\prime}},-\delta_{n+m^{\prime}}\delta_{k-m-k^{\prime}-n^{\prime}},$ \\
& $-\delta_{k-n^{\prime}}\delta_{m-n+k^{\prime}-m^{\prime}}\},\{4\delta_{k-m+n-k^{\prime}+m^{\prime}-n^{\prime}}\}$\\
\hline

$\mathbb{E}^*\{a_xa_y\}\mathbb{E}\{a_x a_y^*\}\mathbb{E}\{a_y^2\}$ & $\{\delta_{k+m^{\prime}}\delta_{m+n^{\prime}}\delta_{n-k^{\prime}}\},\{-2\delta_{k+m^{\prime}}\delta_{m-n+k^{\prime}+n^{\prime}},-\delta_{k^{\prime}+n^{\prime}}\delta_{k-m+n+m^{\prime}},-\delta_{n-k^{\prime}}\delta_{k-m+m^{\prime}-n^{\prime}},$\\
& $-\delta_{m+n^{\prime}}\delta_{k+n-k^{\prime}+m^{\prime}}\},\{4\delta_{k-m+n-k^{\prime}+m^{\prime}-n^{\prime}}\}$\\
\hline

$|\mathbb{E}\{a_xa_y\}|^2\mathbb{E}\{|a_y|^2\}$ & \{$\delta_{k+n}\delta_{m-m^{\prime}}\delta_{k^{\prime}+n^{\prime}},\delta_{k-k^{\prime}}\delta_{m+n^{\prime}}\delta_{n+m^{\prime}}\},\{-2\delta_{k^{\prime}+n^{\prime}}\delta_{k-m+n+m^{\prime}},-2\delta_{k+n}\delta_{m-m^{\prime}+k^{\prime}+n^{\prime}},$ \\ 
& $-2\delta_{n+m^{\prime}}\delta_{k-m-k^{\prime}-n^{\prime}},-2\delta_{m+n^{\prime}}\delta_{k+n-k^{\prime}+m^{\prime}},-\delta_{m-m^{\prime}}\delta_{k+n-k^{\prime}-n^{\prime}},-\delta_{k-k^{\prime}}\delta_{m-n-m^{\prime}+n^{\prime}}\}$,\\
& $\{8\delta_{k-m+n-k^{\prime}+m^{\prime}-n^{\prime}}\}$
\\
\hline

$|\mathbb{E}\{a_x|a_y|^2\}|^2$ &  \{\}, $\{\delta_{k-m-n^{\prime}}\delta_{n-k^{\prime}+m^{\prime}},\delta_{k+n-k^{\prime}}\delta_{m-m^{\prime}+n^{\prime}},\delta_{k-k^{\prime}-n^{\prime}}\delta_{m-n+m^{\prime}}\},\{-4\delta_{k-m+n-k^{\prime}+m^{\prime}-n^{\prime}}\}$
\\
\hline

$|\mathbb{E}\{a_xa_y^2\}|^2$ &  \{\}, $\{\delta_{k+n+m^{\prime}}\delta_{m+k^{\prime}+n^{\prime}}\},\{-\delta_{k-m+n-k^{\prime}+m^{\prime}-n^{\prime}}\}$\\
\hline

$|\mathbb{E}\{a_x^*a_y^2\}|^2$ &  \{\}, $\{\delta_{k+m^{\prime}-n^{\prime}}\delta_{m-n+k^{\prime}}\},\{-\delta_{k-m+n-k^{\prime}+m^{\prime}-n^{\prime}}\}$\\
\hline

\multicolumn{1}{c|}{\textbf{In $\mathsf{N}_{g}^{(h)}$} } \\
\hline 

$\mathbb{E}\{|a_x|^2\}|\mathbb{E}\{a_xa_y^*\}|^2$                 & \{$\delta_{k-k^{\prime}}\delta_{m-m^{\prime}}\delta_{n-n^{\prime}},\delta_{k-n^{\prime}}\delta_{m-m^{\prime}}\delta_{n-k^{\prime}}\},\{-2\delta_{n-k^{\prime}}\delta_{k-m+m^{\prime}-n^{\prime}},-2\delta_{k-k^{\prime}}\delta_{m-n-m^{\prime}+n^{\prime}},$\\
&$-2\delta_{m-m^{\prime}}\delta_{k+n-k^{\prime}-n^{\prime}},-\delta_{n-n^{\prime}}\delta_{k-m-k^{\prime}+m^{\prime}},-\delta_{k-n^{\prime}}\delta_{m-n+k^{\prime}-m^{\prime}}\},\{8\delta_{k-m+n-k^{\prime}+m^{\prime}-n^{\prime}}\}$ \\
\hline

$\mathbb{E}\{|a_x|^2\}|\mathbb{E}\{a_xa_y\}|^2$ & $\{\delta_{k+m^{\prime}}\delta_{m+k^{\prime}}\delta_{n-n^{\prime}},\delta_{k-n^{\prime}}\delta_{m+k^{\prime}}\delta_{n+m^{\prime}}\},\{-2\delta_{k^{\prime}+n^{\prime}}\delta_{k-m+n+m^{\prime}},-2\delta_{n+m^{\prime}}\delta_{k-m-k^{\prime}-n^{\prime}},$\\
&$-2\delta_{k+m^{\prime}}\delta_{m-n+k^{\prime}+n^{\prime}},-2\delta_{m+k^{\prime}}\delta_{k+n+m^{\prime}-n^{\prime}},-\delta_{n-n^{\prime}}\delta_{k-m-k^{\prime}+m^{\prime}},-\delta_{k-n^{\prime}}\delta_{m-n+k^{\prime}-m^{\prime}}\}$, \\
&$\{8\delta_{k-m+n-k^{\prime}+m^{\prime}-n^{\prime}}\}.$ \\
\hline

$\mathbb{E}^2\{|a_x|^2\}\mathbb{E}\{|a_y|^2\}$ & \{\}, $\{-\delta_{n-n^{\prime}}\delta_{k-m-k^{\prime}+m^{\prime}},-\delta_{k-n^{\prime}}\delta_{m-n+k^{\prime}-m^{\prime}}\},\{4\delta_{k-m+n-k^{\prime}+m^{\prime}-n^{\prime}}\}$\\
\hline

$\mathbb{E}\{|a_x|^2\}\mathbb{E}\{|a_x|^2 |a_y|^2\}$ & \{\}, $\{\delta_{k-n^{\prime}}\delta_{m-n+k^{\prime}-m^{\prime}},\delta_{n-n^{\prime}}\delta_{k-m-k^{\prime}+m^{\prime}}\},\{-4\delta_{k-m+n-k^{\prime}+m^{\prime}-n^{\prime}}$\} \\
\hline

$\mathbb{E}\{|a_x|^4\}\mathbb{E}\{|a_y|^2\}$ & \{\},  \{\}, $\{-\delta_{k-m+n-k^{\prime}+m^{\prime}-n^{\prime}}\}$\\
\hline

$\mathbb{E}\{a_x |a_x|^2\}\mathbb{E}\{a_x |a_y|^2\}$ & \{\},  \{\}, $\{-\delta_{k-m+n-k^{\prime}+m^{\prime}-n^{\prime}}\}$\\
\hline

$\mathbb{E}\{a_y^* |a_y|^2\}\mathbb{E}\{|a_x|^2a_y\}$ & \{\},  $\{\delta_{k-m-k^{\prime}}\delta_{n+m^{\prime}-n^{\prime}},\delta_{k+n-n^{\prime}}\delta_{m+k^{\prime}-m^{\prime}}\},\{-2\delta_{k-m+n-k^{\prime}+m^{\prime}-n^{\prime}}\}$\\
\hline

$\mathbb{E}\{a_x^* |a_x|^2\}\mathbb{E}\{a_x |a_y|^2\}$ & \{\}, $\{\delta_{k-m-n^{\prime}}\delta_{n-k^{\prime}+m^{\prime}},\delta_{k+n-n^{\prime}}\delta_{m+k^{\prime}-m^{\prime}}\}, \{-2\delta_{k-m+n-k^{\prime}+m^{\prime}-n^{\prime}}\}$\\
\hline

$|\mathbb{E}\{|a_x|^2a_y\}|^2$ &  \{\}, $\{\delta_{k-m-k^{\prime}}\delta_{n+m^{\prime}-n^{\prime}},\delta_{k-m+m^{\prime}}\delta_{n-k^{\prime}-n^{\prime}},\delta_{k-k^{\prime}-n^{\prime}}\delta_{m-n+m^{\prime}},\delta_{k+m^{\prime}-n^{\prime}}\delta_{m-n+k^{\prime}}\},$\\
&$\{-4\delta_{k-m+n-k^{\prime}+m^{\prime}-n^{\prime}}\}$\\
\hline

$\mathbb{E}\{a_x a_y^*\}\mathbb{E}\{a_x^*a_y|a_x|^2\}$ &  \{\}, $\{\delta_{k-k^{\prime}}\delta_{m-n-m^{\prime}+n^{\prime}},\delta_{m-m^{\prime}}\delta_{k+n-k^{\prime}-n^{\prime}},\delta_{n-k^{\prime}}\delta_{k-m+m^{\prime}-n^{\prime}}\},\{-4\delta_{k-m+n-k^{\prime}+m^{\prime}-n^{\prime}}\}$\\
\hline

$\mathbb{E}\{|a_x|^4 |a_y|^2\}$ &  \{\}, \{\}, $\{\delta_{k-m+n-k^{\prime}+m^{\prime}-n^{\prime}}\}$ \\
\hline

$\mathbb{E}\{a_x^2\}\mathbb{E}^*\{a_xa_y\}\mathbb{E}\{a_x^*a_y\}$ & $\{\delta_{k+n}\delta_{m-m^{\prime}}\delta_{k^{\prime}+n^{\prime}}\},\{-2\delta_{k+n}\delta_{m+k^{\prime}-m^{\prime}+n^{\prime}},-\delta_{m+k^{\prime}}\delta_{k+n+m^{\prime}-n^{\prime}},-\delta_{k^{\prime}+n^{\prime}}\delta_{k+n-m+m^{\prime}},$\\
&  $-\delta_{m-m^{\prime}}\delta_{k+n-k^{\prime}-n^{\prime}}\},\{4\delta_{k-m+n-k^{\prime}+m^{\prime}-n^{\prime}}\}$\\
\hline

$\mathbb{E}^*\{a_x^2\}\mathbb{E}\{a_xa_y\}\mathbb{E}\{a_xa^*_y\}$ & $\{\delta_{k-k^{\prime}}\delta_{m+n^{\prime}}\delta_{n+m^{\prime}},\delta_{k+m^{\prime}}\delta_{m+n^{\prime}}\delta_{n-k^{\prime}}\},\{-2\delta_{m+n^{\prime}}\delta_{k+n-k^{\prime}+m^{\prime}},-\delta_{n+m^{\prime}}\delta_{k-m-k^{\prime}-n^{\prime}},$\\
&  $-\delta_{k-k^{\prime}}\delta_{m-n-m^{\prime}+n^{\prime}},-\delta_{n-k^{\prime}}\delta_{k-m+m^{\prime}-n^{\prime}},-\delta_{k+m^{\prime}}\delta_{m-n+k^{\prime}+n^{\prime}}\},\{4\delta_{k-m+n-k^{\prime}+m^{\prime}-n^{\prime}}\}$\\
\hline

$|\mathbb{E}\{a_x^2\}|^2\mathbb{E}\{|a_y|^2\}$ &  \{\}, $\{-\delta_{m+n^{\prime}}\delta_{k+n-k^{\prime}+m^{\prime}},-\delta_{k+n}\delta_{m+k^{\prime}-m^{\prime}+n^{\prime}}\},\{2\delta_{k-m+n-k^{\prime}+m^{\prime}-n^{\prime}}\}$\\
\hline

$|\mathbb{E}\{a_x^2a_y^*\}|^2$ &  \{\}, $\{\delta_{k+n-k^{\prime}}\delta_{m-m^{\prime}+n^{\prime}}\},\{-\delta_{k-m+n-k^{\prime}+m^{\prime}-n^{\prime}}\}$ \\
\hline

$|\mathbb{E}\{a_x^2a_y\}|^2$ &  \{\}, $\{\delta_{k+n+m^{\prime}}\delta_{m+k^{\prime}+n^{\prime}}\}, \{-\delta_{k-m+n-k^{\prime}+m^{\prime}-n^{\prime}}\}$\\
\hline

$\mathbb{E}\{a_x^2\}\mathbb{E}^*\{a_x^2 |a_y|^2\}$ &  \{\}, $\{\delta_{k+n}\delta_{m+k^{\prime}-m^{\prime}+n^{\prime}},\delta_{m+n^{\prime}}\delta_{k+n-k^{\prime}+m^{\prime}}\},\{-2\delta_{k-m+n-k^{\prime}+m^{\prime}-n^{\prime}}\}$ \\
\hline

$\mathbb{E}\{a_xa_y\}\mathbb{E}^*\{a_xa_y|a_y|^2\}$ &  \{\}, $\{\delta_{k+n}\delta_{m+k^{\prime}-m^{\prime}+n^{\prime}},\delta_{n+m^{\prime}}\delta_{k-m-k^{\prime}-n^{\prime}}\},\{-2\delta_{k-m+n-k^{\prime}+m^{\prime}-n^{\prime}}\}$\\
\hline

$\mathbb{E}^*\{a_xa_y\}\mathbb{E}\{a_xa_y|a_y|^2\}$ &  \{\}, $\{\delta_{m+n^{\prime}}\delta_{k+n-k^{\prime}+m^{\prime}},\delta_{k^{\prime}+n^{\prime}}\delta_{k-m+n+m^{\prime}}\},\{-2\delta_{k-m+n-k^{\prime}+m^{\prime}-n^{\prime}}\}$\\
\hline

$\mathbb{E}\{a_x^*a_y\}\mathbb{E}\{a_x a_y^*|a_y|^2\}$ &  \{\}, $\{\delta_{k-n^{\prime}}\delta_{m-n+k^{\prime}-m^{\prime}}\},\{-\delta_{k-m+n-k^{\prime}+m^{\prime}-n^{\prime}}\}$ \\
\hline

$\mathbb{E}\{a_x a_y^*\}\mathbb{E}\{a_x^*a_y|a_y|^2\}$ &  \{\}, $\{\delta_{n-k^{\prime}}\delta_{k-m+m^{\prime}-n^{\prime}}\},\{-\delta_{k-m+n-k^{\prime}+m^{\prime}-n^{\prime}}\}$\\

\hline \hline 
\end{longtable}
}

As it can be observed in Table~\ref{tab:kron_delta}, each correlation term is associated with different delta functions. To compact these terms we exploit a property introduced in the following proposition.
\begin{proposition}\label{prop:1}
Let $D_1(k,m,n,k^{\prime},m^{\prime},n^{\prime})$ and $D_2(k,m,n,k^{\prime},m^{\prime},n^{\prime})$ be two Kronecker delta products of the kind shown in Table \ref{tab:kron_delta}. If 
\begin{equation}
D_1(k,m,n,k^{\prime},m^{\prime},n^{\prime})=D_2(n,m,k,n^{\prime},m^{\prime},k^{\prime}), 
\label{eq:transf_D1D2}
\end{equation}
then
\begin{align}
\begin{split}
&\sum_{\substack{(k,m,n) \in \mathcal{S}_i \\ (k^{\prime},m^{\prime},n^{\prime}) \in \mathcal{S}_i}}\mathcal{P}_{k,m,n,k^{\prime},m^{\prime},n^{\prime}}\eta_{k,m,n}\eta^*_{k^{\prime},m^{\prime},n^{\prime}}D_1(k,m,n,k^{\prime},m^{\prime},n^{\prime})\\
&=\sum_{\substack{(k,m,n) \in \mathcal{S}_i \\ (k^{\prime},m^{\prime},n^{\prime}) \in \mathcal{S}_i}}\mathcal{P}_{k,m,n,k^{\prime},m^{\prime},n^{\prime}}\eta_{k,m,n}\eta^*_{k^{\prime},m^{\prime},n^{\prime}}D_2(k,m,n,k^{\prime},m^{\prime},n^{\prime}).
\end{split}
\label{eq:prop_1}
\end{align}
This property also holds when applying the transformations $k=n$, $n=k$ and $k^{\prime}= n^{\prime}$, $n^{\prime}=k^{\prime}$ individually.
\begin{proof}
See Appendix \ref{app:C}.
\end{proof}
% This can be proved by showing that both the product $\mathcal{P}_{n,m,k,n^{\prime},m^{\prime},k^{\prime}}\eta_{n,m,k}\eta^*_{n^{\prime},m^{\prime},k^{\prime}}$ and the set $\{(k,m,n,k^{\prime},m^{\prime},n^{\prime})\in\{0,1,\dots,W-1\}^6 : (k,m,n) \in \mathcal{S}_i,\, (k^{\prime},m^{\prime},n^{\prime}) \in \mathcal{S}_i \}$ 
% are invariant to the transformation\footnote{With slight abuse of notation, we give the new coordinates the same name as the original ones.}  $k=n$, $n=k$, $k^{\prime}=n^{\prime}$, $n^{\prime}=k^{\prime}$. 
\end{proposition}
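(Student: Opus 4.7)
The plan is to prove Proposition \ref{prop:1} by showing that the composite weight $\mathcal{P}_{k,m,n,k^{\prime},m^{\prime},n^{\prime}}\,\eta_{k,m,n}\,\eta^*_{k^{\prime},m^{\prime},n^{\prime}}$, together with the index set $\mathcal{S}_i\times\mathcal{S}_i$, is invariant under the involution $\tau:(k,m,n,k^{\prime},m^{\prime},n^{\prime})\mapsto(n,m,k,n^{\prime},m^{\prime},k^{\prime})$. Once this is established, applying $\tau$ as a change of summation variables to the left-hand side of \eqref{eq:prop_1} replaces $D_1$ by $D_1(n,m,k,n^{\prime},m^{\prime},k^{\prime})$, which by hypothesis \eqref{eq:transf_D1D2} equals $D_2(k,m,n,k^{\prime},m^{\prime},n^{\prime})$, yielding the right-hand side.

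First I would verify that $\tau$ maps $\mathcal{S}_i\times\mathcal{S}_i$ onto itself. This is immediate from the defining constraint $k-m+n=i$ in \eqref{eq:setSi}, which is symmetric under $k\leftrightarrow n$ (and analogously for the primed triple). Hence $\tau$ is a bijection on the summation domain.

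Next I would check the invariance of the weights. The factor $\mathcal{P}_{k,m,n,k^{\prime},m^{\prime},n^{\prime}}$ in \eqref{eq:P_def} is a product of six scalar values of $P(\cdot)$ and $P^*(\cdot)$ at $k\Delta_f$, $m\Delta_f$, $n\Delta_f$ (and the primed counterparts), in which the two unconjugated factors $P(k\Delta_f)$ and $P(n\Delta_f)$ clearly commute, so $\mathcal{P}$ is invariant under $k\leftrightarrow n$; the same argument applies to $k^{\prime}\leftrightarrow n^{\prime}$. For $\eta_{k,m,n}$, inspection of \eqref{eq:eta_def} shows it depends on $(k,m,n)$ only through the product $(n-m)(m-k)$ (in both the rational factor and the inner sum of complex exponentials), and under $k\leftrightarrow n$ one has $(n-m)(m-k)\mapsto(k-m)(m-n)=(n-m)(m-k)$. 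Thus $\eta_{k,m,n}$ is invariant, and likewise $\eta^*_{k^{\prime},m^{\prime},n^{\prime}}$. Combining these facts, the full kernel is $\tau$-invariant, which completes the argument. The claim that the same identity holds when applying only $k\leftrightarrow n$ or only $k^{\prime}\leftrightarrow n^{\prime}$ follows by restricting $\tau$ to act on a single triple.

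I do not anticipate a genuine obstacle: the proof is a pure change-of-variables/symmetry observation, and the only thing to be mildly careful about is the verification that $\eta_{k,m,n}$ depends on $(k,n)$ only through the symmetric combination $(n-m)(m-k)$, which is then written out explicitly using \eqref{eq:eta_def}.
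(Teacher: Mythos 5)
Your proof is correct and follows essentially the same route as the paper's Appendix C: a change of summation variables under the involution $k\leftrightarrow n$, $k^{\prime}\leftrightarrow n^{\prime}$, combined with the observations that $\mathcal{S}_i$, $\mathcal{P}_{k,m,n,k^{\prime},m^{\prime},n^{\prime}}$, and $\eta_{k,m,n}$ (through the symmetric combination $(n-m)(m-k)$) are all invariant under that swap. The only cosmetic difference is that you transform the left-hand side while the paper transforms the right-hand side, which is immaterial since the map is an involution.
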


The property in \eqref{eq:prop_1} allows us to group many of the Kronecker function products in Table \ref{tab:kron_delta} under a single term. Namely, the Kronecker delta products in Table \ref{tab:kron_delta} can be grouped in subsets that are closed to property \eqref{eq:transf_D1D2}, since they all result in the same value of the summations in \eqref{eq:prop_1}. In particular, 14 distinct subsets can be identified for the list of Kronecker delta products in Table \ref{tab:kron_delta}. We label these subsets as $\mathcal{D}_{l}$ for $l=1,2,\dots,14$, which are shown in Table~\ref{tab:kron_delta_sets}, and where we have dropped the separation in cross-polarisation and intra-polarisation terms.
%These subsets are given by

{\small
%\begin{longtable}{ p{.10\textwidth} | p{.63\textwidth} }
\begin{table}[h]
\centering
\caption{Subsets of Kronecker delta products which are closed to property \eqref{eq:transf_D1D2}. The terms in boldface are the ones used to group all the other elements within each set.}\label{tab:kron_delta_sets}
\begin{tabular}{c|c}
\hline \hline
\textbf{Set name} & \multicolumn{1}{c}{\textbf{Set elements}}  \\
\hline \hline
$\mathcal{D}_1$ & $\boldsymbol{\delta_{k-k^{\prime}}\delta_{m-m^{\prime}}\delta_{n-n^{\prime}}},\delta_{k-n^{\prime}}\delta_{m-m^{\prime}}\delta_{n-k^{\prime}}$\\
\hline
$\mathcal{D}_2$ & $\boldsymbol{\delta_{k-k^{\prime}}\delta_{m+n^{\prime}}\delta_{n+m^{\prime}}},\delta_{k+m^{\prime}}\delta_{m+k^{\prime}}\delta_{n-n^{\prime}},\delta_{k+m^{\prime}}\delta_{m+n^{\prime}}\delta_{n-k^{\prime}},\delta_{k-n^{\prime}}\delta_{m+k^{\prime}}\delta_{n+m^{\prime}}$  \\
\hline
$\mathcal{D}_3$ & $\boldsymbol{\delta_{k+n}\delta_{m-m^{\prime}}\delta_{k^{\prime}+n^{\prime}}}$  \\
\hline
$\mathcal{D}_4$ & $\boldsymbol{\delta_{k-m-k^{\prime}}\delta_{n+m^{\prime}-n^{\prime}}},\delta_{k-m-n^{\prime}}\delta_{n-k^{\prime}+m^{\prime}},\delta_{k+m^{\prime}-n^{\prime}}\delta_{m-n+k^{\prime}},\delta_{k-k^{\prime}+m^{\prime}}\delta_{m-n+n^{\prime}}$ \\
\hline
$\mathcal{D}_5$ & $\boldsymbol{\delta_{k-m+m^{\prime}}\delta_{n-k^{\prime}-n^{\prime}}},\delta_{k-k^{\prime}-n^{\prime}}\delta_{m-n-m^{\prime}}$\\
\hline
$\mathcal{D}_6$ & $\boldsymbol{\delta_{k+n-k^{\prime}}\delta_{m-m^{\prime}+n^{\prime}}},\delta_{k+n-n^{\prime}}\delta_{m+k^{\prime}-m^{\prime}}$ \\
\hline
$\mathcal{D}_7$ & $\boldsymbol{\delta_{k+n+m^{\prime}}\delta_{m+k^{\prime}+n^{\prime}}}$ \\

\hline
$\mathcal{D}_8$ & $\boldsymbol{\delta_{k+n}\delta_{m+k^{\prime}-m^{\prime}+n^{\prime}}}$ \\
\hline
$\mathcal{D}_{9}$ & $\boldsymbol{\delta_{k-k^{\prime}}\delta_{m-n-m^{\prime}+n^{\prime}}},\delta_{k-n^{\prime}}\delta_{m-n+k^{\prime}-m^{\prime}},\delta_{n-k^{\prime}}\delta_{k-m+m^{\prime}-n^{\prime}},\delta_{n-n^{\prime}}\delta_{k-m-k^{\prime}+m^{\prime}}$ \\
\hline
$\mathcal{D}_{10}$ & $\boldsymbol{\delta_{k+m^{\prime}}\delta_{m-n+k^{\prime}+n^{\prime}}},\delta_{n+m^{\prime}}\delta_{k-m-k^{\prime}-n^{\prime}}$ \\
\hline
$\mathcal{D}_{11}$ & $\boldsymbol{\delta_{m-m^{\prime}}\delta_{k+n-k^{\prime}-n^{\prime}}}$ \\
\hline
$\mathcal{D}_{12}$ & $\boldsymbol{\delta_{m+k^{\prime}}\delta_{k+n+m^{\prime}-n^{\prime}}},\delta_{m+n^{\prime}}\delta_{k+n-k^{\prime}+m^{\prime}}$ \\
\hline
$\mathcal{D}_{13}$ & $\boldsymbol{\delta_{k^{\prime}+n^{\prime}}\delta_{k-m+n+m^{\prime}}}$ \\
\hline
$\mathcal{D}_{14}$ & $\boldsymbol{\delta_{k-m+n-k^{\prime}+m^{\prime}-n^{\prime}}}$ \\
\hline \hline 
\end{tabular}
\end{table}
}

Summing all the contributions in Table \ref{tab:kron_delta}, using Proposition \ref{prop:1} for the elements in the subsets listed in Table \ref{tab:kron_delta}, and finally ordering by Kronecker delta product, we obtain from \eqref{eq:psd_sumcontr}

\begin{align}
\begin{split}
S_x(f,N_s&,L_s)=\biggl(\frac{8}{9}\biggr)^2\gamma^2\Delta_f\sum_{i=-\infty}^{\infty}\delta(f-i\Delta_f)\\
&\cdot\sum_{\substack{(k,m,n) \in \mathcal{S}_i \\ (k^{\prime},m^{\prime},n^{\prime}) \in \mathcal{S}_i}}\bigl[R_s^3\Delta_f^2\bigl[(\mathsf{a}_1\mathsf{P}+2\Re\{\mathsf{a}_1^{\prime}\mathsf{P}\})\delta_{k-k^{\prime}}\delta_{m-m^{\prime}}\delta_{n-n^{\prime}}+(\mathsf{a}_2\mathsf{P}+2\Re\{\mathsf{a}_2^{\prime}\mathsf{P}\})\delta_{k-k^{\prime}}\delta_{m+n^{\prime}}\delta_{n+m^{\prime}}\\
&+(\mathsf{a}_3\mathsf{P}+2\Re\{\mathsf{a}_3^{\prime}\mathsf{P}\})\delta_{k+n}\delta_{m-m^{\prime}}\delta_{k^{\prime}+n^{\prime}}\bigr]+R_s^2\Delta_f^3\bigl[(\mathsf{b}_1\mathsf{P}+2\Re\{\mathsf{b}_1^{\prime}\mathsf{P}\})\delta_{k-m-k^{\prime}}\delta_{n+m^{\prime}-n^{\prime}}\\
&+(\mathsf{b}_2\mathsf{P}+2\Re\{\mathsf{b}_2^{\prime}\mathsf{P}\})\delta_{k-m+m^{\prime}}\delta_{n-k^{\prime}-n^{\prime}}+(\mathsf{b}_2^*\mathsf{P}+2\Re\{\mathsf{b}_3^{\prime}\mathsf{P}\})\delta_{k+n-k^{\prime}}\delta_{m-m^{\prime}+n^{\prime}}\\
&+(\mathsf{b}_4\mathsf{P}+2\Re\{\mathsf{b}_4^{\prime}\mathsf{P}\})\delta_{k+n+m^{\prime}}\delta_{m+k^{\prime}+n^{\prime}}+(\mathsf{c}_1\mathsf{P}+2\Re\{\mathsf{c}_1^{\prime}\mathsf{P}\})\delta_{k+n}\delta_{m+k^{\prime}-m^{\prime}+n^{\prime}}\\
&+(\mathsf{c}_2\mathsf{P}+2\Re\{\mathsf{c}_2^{\prime}\mathsf{P}\})\delta_{k-k^{\prime}}\delta_{m-n-m^{\prime}+n^{\prime}}+(\mathsf{c}_3\mathsf{P}+2\Re\{\mathsf{c}_3^{\prime}\mathsf{P}\})\delta_{k+m^{\prime}}\delta_{m-n+k^{\prime}+n^{\prime}}\\
&+(\mathsf{c}_4\mathsf{P}+2\Re\{\mathsf{c}_4^{\prime}\mathsf{P}\})\delta_{m-m^{\prime}}\delta_{k+n-k^{\prime}-n^{\prime}}+(\mathsf{c}_3^*\mathsf{P}+2\Re\{\mathsf{c}_5^{\prime}\mathsf{P}\})\delta_{m+k^{\prime}}\delta_{k+n+m^{\prime}-n^{\prime}}\\
&+(\mathsf{c}_1^*\mathsf{P}+2\Re\{\mathsf{c}_6^{\prime}\mathsf{P}\})\delta_{k^{\prime}+n^{\prime}}\delta_{k-m+n+m^{\prime}}\bigr]+R_s\Delta_f^4(\mathsf{d}_1\mathsf{P}+2\Re\{\mathsf{d}_1^{\prime}\mathsf{P}\})\delta_{k-m+n-k^{\prime}+m^{\prime}-n^{\prime}}\bigr],\
\end{split}\label{eq:final_exp1}
\end{align}
where the coefficients multiplying $\mathsf{P}$ are listed in Table \ref{tab:final_coeff0} and where we have occupied the coset leaders in Table \ref{tab:kron_delta_sets}.

{\small
\begin{table}[tbp]
\caption{Expressions for the coefficients in \eqref{eq:final_exp1}.}\label{tab:final_coeff0}
\begin{tabular}{c|l||c|l}
\hline \hline
\textbf{Name} &\textbf{Value} & \textbf{Name} &\textbf{Value}\\
\hline
$\mathsf{a}_1$ &\makecell[l]{$2\mathbb{E}^3\{|a_x|^2\}+\mathbb{E}\{|a_x|^2\}\mathbb{E}^2\{|a_y|^2\}$\\$+|\mathbb{E}\{a_xa_y^*\}|^2\mathbb{E}\{|a_y|^2\}$} & $\mathsf{a}^{\prime}_1$ &
$2\mathbb{E}\{|a_x|^2\}|\mathbb{E}\{a_xa_y^*\}|^2$ \\
\hline

$\mathsf{a}_2$ &\makecell[l]{$4\mathbb{E}\{|a_x|^2\}|\mathbb{E}\{a_x^2\}|^2+\mathbb{E}\{|a_x|^2\}|\mathbb{E}\{a_y^2\}|^2$ \\ $+2\Re\{\mathbb{E}\{a_xa_y\}\mathbb{E}\{a_x^*a_y\}\mathbb{E}^*\{a_y^2\}\}+|\mathbb{E}\{a_xa_y\}|^2\mathbb{E}\{|a_y|^2\}$} & $\mathsf{a}^{\prime}_2$& \makecell[l]{$2\mathbb{E}\{|a_x|^2\}|\mathbb{E}\{a_xa_y\}|^2+2\mathbb{E}^*\{a_x^2\}\mathbb{E}\{a_xa_y\}\mathbb{E}\{a_xa_y^*\}$} \\
\hline

$\mathsf{a}_3$ &$\mathbb{E}\{|a_x|^2\}|\mathbb{E}\{a_x^2\}|^2+|\mathbb{E}\{a_xa_y\}|^2\mathbb{E}\{|a_y|^2\}$ &$\mathsf{a}^{\prime}_3$ &
$\mathbb{E}\{a_x^2\}\mathbb{E}^*\{a_xa_y\}\mathbb{E}\{a_x^*a_y\}$ \\
\hline

$\mathsf{b}_1$ &\makecell[l]{$4|\mathbb{E}\{a_x|a_x|^2\}|^2+\mathbb{E}\{|a_x|^2a_y\}\mathbb{E}\{a_y^*|a_y|^2\}$\\$+\mathbb{E}\{|a_x|^2a_y^*\}\mathbb{E}\{a_y|a_y|^2\}+|\mathbb{E}\{a_x|a_y|^2\}|^2+|\mathbb{E}\{a_x^*a_y^2\}|^2$} &$\mathsf{b}^{\prime}_1$& 
\makecell[l]{$\mathbb{E}\{a_x^*|a_x|^2\}\mathbb{E}\{a_x|a_y|^2\}+2|\mathbb{E}\{|a_x|^2a_y\}|^2$} \\
\hline

$\mathsf{b}_2$ &\makecell[l]{$2|\mathbb{E}\{a_x|a_x|^2\}|^2+\mathbb{E}\{|a_x|^2a_y^*\}\mathbb{E}\{a_y|a_y|^2\}+|\mathbb{E}\{a_x|a_y|^2\}|^2$} & $\mathsf{b}_2^{\prime}$& $2|\mathbb{E}\{|a_x|^2a_y\}|^2$ \\
\hline

\multicolumn{2}{c||}{} &$\mathsf{b}^{\prime}_3$& $\mathbb{E}\{a_x^*|a_x|^2\}\mathbb{E}\{a_x|a_y|^2\}+|\mathbb{E}\{a_x^2a_y^*\}|^2$
\\
\hline

$\mathsf{b}_4$ &$|\mathbb{E}\{a_x^3\}|^2+|\mathbb{E}\{a_xa_y^2\}|^2$&$\mathsf{b}^{\prime}_4$&
$|\mathbb{E}\{a_x^2a_y\}|^2$ \\
\hline

$\mathsf{c}_1$ &\makecell[l]{$-3\mathbb{E}\{|a_x|^2\}|\mathbb{E}\{a_x^2\}|^2+\mathbb{E}^*\{a_x^2|a_x|^2\}\mathbb{E}\{a_x^2\}$\\$-\mathbb{E}\{a_xa_y\}\mathbb{E}\{a_x^*a_y\}\mathbb{E}^*\{a_y^2\}-2|\mathbb{E}\{a_xa_y\}|^2\mathbb{E}\{|a_y|^2\}$\\$+\mathbb{E}\{a_xa_y\}\mathbb{E}^*\{a_xa_y|a_y|^2\}$} &$\mathsf{c}^{\prime}_1$& \makecell[l]{$-2\mathbb{E}\{a_x^2\}\mathbb{E}^*\{a_xa_y\}\mathbb{E}\{a_x^*a_y\}-|\mathbb{E}\{a_x^2\}|^2\mathbb{E}\{|a_y|^2\}$\\$+\mathbb{E}\{a_x^2\}\mathbb{E}^*\{a_x^2|a_y|^2\}$} \\
\hline

$\mathsf{c}_2$ &\makecell[l]{$-8\mathbb{E}^3\{|a_x|^2\}-4\mathbb{E}\{|a_x|^2\}|\mathbb{E}\{a_x^2\}|^2+4\mathbb{E}\{|a_x|^4\}\mathbb{E}\{|a_x|^2\}$\\$-3\mathbb{E}\{|a_x|^2\}\mathbb{E}^2\{|a_y|^2\}-\mathbb{E}\{|a_x|^2\}|\mathbb{E}\{a_y^2\}|^2$\\$+\mathbb{E}\{|a_x|^2\}\mathbb{E}\{|a_y|^4\}+\mathbb{E}\{|a_x|^2|a_y|^2\}\mathbb{E}\{|a_y|^2\}$\\$-5|\mathbb{E}\{a_xa_y^*\}|^2\mathbb{E}\{|a_y|^2\}-|\mathbb{E}\{a_xa_y\}|^2\mathbb{E}\{|a_y|^2\}$\\$-2\Re\{\mathbb{E}\{a_xa_y\}\mathbb{E}\{a_x^*a_y\}\mathbb{E}^*\{a_y^2\}\}$\\$+2\Re\{\mathbb{E}\{a_x^*a_y\}\mathbb{E}\{a_xa_y^*|a_y|^2\}\}$} & $\mathsf{c}^{\prime}_2$& \makecell[l]{$-6\mathbb{E}\{|a_x|^2\}|\mathbb{E}\{a_xa_y^*\}|^2-2\mathbb{E}\{|a_x|^2\}|\mathbb{E}\{a_xa_y\}|^2$\\$-2\mathbb{E}^2\{|a_x|^2\}\mathbb{E}\{|a_y|^2\}+2\mathbb{E}\{|a_x|^2\}\mathbb{E}\{|a_x|^2|a_y|^2\}$\\$+2\mathbb{E}\{a_xa_y^*\}\mathbb{E}\{a_x^*a_y|a_x|^2\}$\\$-2\mathbb{E}^*\{a_x^2\}\mathbb{E}\{a_xa_y\}\mathbb{E}\{a_xa_y^*\}$} \\
\hline

$\mathsf{c}_3$ &\makecell[l]{$-6\mathbb{E}\{|a_x|^2\}|\mathbb{E}\{a_x^2\}|^2+2\mathbb{E}^*\{a_x^2|a_x|^2\}\mathbb{E}\{a_x^2\}$\\$-\mathbb{E}\{|a_x|^2\}|\mathbb{E}\{a_y^2\}|^2+\mathbb{E}^*\{|a_x|^2a_y^2\}\mathbb{E}\{a_y^2\}
$\\$-2\Re\{\mathbb{E}^*\{a_xa_y\}\mathbb{E}\{a_xa^*_y\}\mathbb{E}\{a_y^2\}\}$\\$-\mathbb{E}^*\{a_xa_y\}\mathbb{E}\{a_xa^*_y\}\mathbb{E}\{a_y^2\}-2|\mathbb{E}\{a_xa_y\}|^2\mathbb{E}\{|a_y|^2\}$\\$+\mathbb{E}\{a_xa_y\}\mathbb{E}^*\{a_xa_y|a_y|^2\}$} & $\mathsf{c}^{\prime}_3$&
\makecell[l]{$4\mathbb{E}\{|a_x|^2\}|\mathbb{E}\{a_xa_y\}|^2+2\mathbb{E}\{a_xa_y\}\mathbb{E}^*\{a_x|a_x|^2a_y\}$\\$-2\mathbb{E}^*\{a_x^2\}\mathbb{E}\{a_xa_y\}\mathbb{E}\{a_xa_y^*\}$} \\
\hline

$\mathsf{c}_4$ &\makecell[l]{$2\mathbb{E}^3\{|a_x|^2\}-\mathbb{E}\{|a_x|^2\}|\mathbb{E}\{a_x^2\}|^2-\mathbb{E}\{|a_x|^2\}\mathbb{E}^2\{|a_y|^2\}$\\$+\mathbb{E}\{|a_x|^2|a_y|^2\}\mathbb{E}\{|a_y|^2\}-|\mathbb{E}\{a_xa_y^*\}|^2\mathbb{E}\{|a_y|^2\}$\\$-|\mathbb{E}\{a_xa_y\}|^2\mathbb{E}\{|a_y|^2\}+\mathbb{E}\{|a_x|^4\}\mathbb{E}\{|a_x|^2\}$} &$\mathsf{c}^{\prime}_4$& \makecell[l]{$-2\mathbb{E}\{|a_x|^2\}|\mathbb{E}\{a_xa_y^*\}|^2+\mathbb{E}\{a_xa_y^*\}\mathbb{E}\{a_x^*a_y|a_x|^2\}$\\$-\mathbb{E}\{a_x^2\}\mathbb{E}^*\{a_xa_y\}\mathbb{E}\{a_x^*a_y\}$} \\
\hline

\multicolumn{2}{c||}{}&$\mathsf{c}^{\prime}_5$& \makecell[l]{$2\mathbb{E}\{|a_x|^2\}|\mathbb{E}\{a_xa_y\}|^2+\mathbb{E}\{a_xa_y\}\mathbb{E}^*\{a_xa_y|a_x|^2\}$\\$-2\Re\{\mathbb{E}\{a_x^2\}\mathbb{E}^*\{a_xa_y\}\mathbb{E}\{a_x^*a_y\}\}$\\$-\mathbb{E}^*\{a_x^2\}\mathbb{E}\{a_xa_y\}\mathbb{E}\{a_xa_y^*\}-|\mathbb{E}\{a_x^2\}|^2\mathbb{E}\{|a_y|^2\}$} \\
\hline

\multicolumn{2}{c||}{}&$\mathsf{c}^{\prime}_6$ & \makecell[l]{$-2\mathbb{E}\{|a_x|^2\}|\mathbb{E}\{a_xa_y\}|^2+\mathbb{E}\{a_xa_y\}\mathbb{E}^*\{a_xa_y|a_x|^2\}$\\$ -\mathbb{E}\{a_x^2\}\mathbb{E}^*\{a_xa_y\}\mathbb{E}\{a_x^*a_y\}$} \\
\hline

$\mathsf{d}_1$ &\makecell[l]{$12\mathbb{E}^3\{|a_x|^2\}+18\mathbb{E}\{|a_x|^2\}|\mathbb{E}\{a_x^2\}|^2-|\mathbb{E}\{a_x^3\}|^2$\\$-9|\mathbb{E}\{a_x|a_x|^2\}|^2-9\mathbb{E}\{|a_x|^4\}\mathbb{E}\{|a_x|^2$\}\\$-6\Re\{\mathbb{E}\{a_x^2|a_x|^2\}\mathbb{E}^*\{a_x^2\}\}+\mathbb{E}\{|a_x|^6\}$\\$+4\mathbb{E}\{|a_x|^2\}\mathbb{E}^2\{|a_y|^2\}+2\mathbb{E}\{|a_x|^2\}|\mathbb{E}\{a_y^2\}|^2$\\$-\mathbb{E}\{|a_x|^2\}\mathbb{E}\{|a_y|^4\}-4\mathbb{E}\{|a_x|^2|a_y|^2\}\mathbb{E}\{|a_y|^2\}$\\$-4\Re\{\mathbb{E}\{|a_x|^2a_y\}\mathbb{E}\{a_y^* |a_y|^2\}\}$\\$-2\Re\{\mathbb{E}\{|a_x|^2 a_y^2\}\mathbb{E}^*\{ a_y^2\}\}-4|\mathbb{E}\{a_x|a_y|^2\}|^2$\\$+8|\mathbb{E}\{a_xa_y^*\}|^{2}\mathbb{E}\{|a_y|^2\}+8|\mathbb{E}\{a_xa_y\}|^2\mathbb{E}\{|a_y|^2\}$\\$+8\Re\{\mathbb{E}\{a_xa_y\}\mathbb{E}\{a_x^*a_y\}\mathbb{E}^*\{a_y^2\}\}$\\$-|\mathbb{E}\{a_xa_y^2\}|^2-|\mathbb{E}\{a_x^*a_y^2\}|^2+\mathbb{E}\{|a_x|^2|a_y|^4\}$\\$-4\Re\{\mathbb{E}\{a_xa_y\}\mathbb{E}^*\{a_xa_y|a_y|^2\}\}$\\$-2\Re\{\mathbb{E}\{a_x a_y^*\}\mathbb{E}\{a_x^*a_y|a_y|^2\}\}$} & $\mathsf{d}^{\prime}_1$& \makecell[l]{$8\mathbb{E}\{|a_x|^2\}|\mathbb{E}\{a_xa_y^*\}|^2-|\mathbb{E}\{a_x^2a_y^*\}|^2$\\$+8\mathbb{E}\{|a_x|^2\}|\mathbb{E}\{a_xa_y\}|^2+4\mathbb{E}^2\{|a_x|^2\}\mathbb{E}\{|a_y|^2\}$\\$-4\mathbb{E}\{|a_x|^2\}\mathbb{E}\{|a_x|^2 |a_y|^2\}-\mathbb{E}\{|a_x|^4\}\mathbb{E}\{|a_y|^2\}$\\$-2\mathbb{E}\{a_x^* |a_x|^2\}\mathbb{E}\{a_x |a_y|^2\}-2\mathbb{E}\{a_x^2\}\mathbb{E}^*\{a_x^2 |a_y|^2$\\$-\mathbb{E}\{a_x |a_x|^2\}\mathbb{E}\{a_x |a_y|^2\}$\\$-4|\mathbb{E}\{|a_x|^2a_y\}|^2+\mathbb{E}\{|a_x|^4 |a_y|^2\}$\\$-4\mathbb{E}\{a_x a_y^*\}\mathbb{E}\{a_x^*a_y|a_x|^2\}$\\$-4\mathbb{E}\{a_xa_y\}\mathbb{E}^*\{a_xa_y|a_x|^2\}$\\$+8\Re\{\mathbb{E}\{a_x^2\}\mathbb{E}^*\{a_xa_y\}\mathbb{E}\{a_x^*a_y\}\}$\\$+2|\mathbb{E}\{a_x^2\}|^2\mathbb{E}\{|a_y|^2\}-|\mathbb{E}\{a_x^2a_y\}|^2$} \\
\hline \hline
\end{tabular}
\end{table}
}

Eq.~\eqref{eq:final_exp1} can be further manipulated using the following proposition.   
\begin{proposition}\label{prop:4}
Let  $D_1(k,m,n,k^{\prime},m^{\prime},n^{\prime})$ and $D_2(k,m,n,k^{\prime},m^{\prime},n^{\prime})$ be two Kronecker delta products of the kind shown in the second column of Table \ref{tab:kron_delta}. If
$D_1(k,m,n,k^{\prime},m^{\prime},n^{\prime})=D_2(k^{\prime},m^{\prime},n^{\prime},k,m,n)$ then
\begin{align}
\begin{split}
&\sum_{\substack{(k,m,n) \in \mathcal{S}_i \\ (k^{\prime},m^{\prime},n^{\prime}) \in \mathcal{S}_i}}\mathcal{P}_{k,m,n,k^{\prime},m^{\prime},n^{\prime}}\eta_{k,m,n}\eta^*_{k^{\prime},m^{\prime},n^{\prime}}D_1(k,m,n,k^{\prime},m^{\prime},n^{\prime})\\
&=\left(\sum_{\substack{(k,m,n) \in \mathcal{S}_i \\ (k^{\prime},m^{\prime},n^{\prime}) \in \mathcal{S}_i}}\mathcal{P}_{k,m,n,k^{\prime},m^{\prime},n^{\prime}}\eta_{k,m,n}\eta^*_{k^{\prime},m^{\prime},n^{\prime}}D_2(k,m,n,k^{\prime},m^{\prime},n^{\prime})\right)^*.
\end{split}\label{eq:prop2}
\end{align}
\end{proposition}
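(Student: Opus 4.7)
The plan is to exploit the fact that the double summation in \eqref{eq:prop2} runs over $\mathcal{S}_i \times \mathcal{S}_i$, a set that is manifestly symmetric under the interchange of the primed and unprimed index triples $(k,m,n)\leftrightarrow(k^{\prime},m^{\prime},n^{\prime})$. The first step I would take is simply to relabel the dummy summation variables on the left-hand side of \eqref{eq:prop2}, swapping every primed index with its unprimed counterpart. This leaves the value of the sum unchanged but rewrites the summand in terms of $\mathcal{P}_{k^{\prime},m^{\prime},n^{\prime},k,m,n}$, $\eta_{k^{\prime},m^{\prime},n^{\prime}}\eta^*_{k,m,n}$, and $D_1(k^{\prime},m^{\prime},n^{\prime},k,m,n)$.

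The next step is to observe how the complex weighting factor transforms under this swap. Inspecting \eqref{eq:P_def}, the product $\mathcal{P}_{k^{\prime},m^{\prime},n^{\prime},k,m,n}$ differs from $\mathcal{P}_{k,m,n,k^{\prime},m^{\prime},n^{\prime}}$ only by exchanging the conjugated and unconjugated factors at each of the six positions, so $\mathcal{P}_{k^{\prime},m^{\prime},n^{\prime},k,m,n}=\mathcal{P}^*_{k,m,n,k^{\prime},m^{\prime},n^{\prime}}$. Likewise, the pair $\eta_{k^{\prime},m^{\prime},n^{\prime}}\eta^*_{k,m,n}$ is by construction the complex conjugate of $\eta_{k,m,n}\eta^*_{k^{\prime},m^{\prime},n^{\prime}}$. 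Combining these two observations, the entire factor $\mathcal{P}\eta\eta^*$ is mapped to its own complex conjugate under the relabeling.

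For the kernel, the hypothesis $D_1(k,m,n,k^{\prime},m^{\prime},n^{\prime})=D_2(k^{\prime},m^{\prime},n^{\prime},k,m,n)$ implies, after the swap of summation variables, that $D_1(k^{\prime},m^{\prime},n^{\prime},k,m,n)=D_2(k,m,n,k^{\prime},m^{\prime},n^{\prime})$. Since $D_2$ is a product of Kronecker deltas and is therefore real-valued, the complex conjugation accumulated on the weighting factor may be factored out of the entire double sum, yielding exactly the right-hand side of \eqref{eq:prop2}.

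I do not anticipate any substantive obstacle: the proof is essentially a bookkeeping argument based on three ingredients, namely the symmetry of $\mathcal{S}_i\times\mathcal{S}_i$ under the triple swap, the conjugation identity $\mathcal{P}_{k^{\prime},m^{\prime},n^{\prime},k,m,n}=\mathcal{P}^*_{k,m,n,k^{\prime},m^{\prime},n^{\prime}}$, and the analogous identity for $\eta\eta^*$. The only place where care is needed is in checking these two algebraic identities directly from the definitions \eqref{eq:P_def} and \eqref{eq:eta_def}, but both are immediate once the pattern of asterisks is written out. A fully analogous argument was already used to establish Proposition \ref{prop:conj} in Appendix \ref{app:B}, which I would follow as a template.
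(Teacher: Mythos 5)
Your proposal is correct and follows essentially the same route as the paper's proof in Appendix~\ref{app:D}: swap the primed and unprimed summation triples (legitimate because the domain is $\mathcal{S}_i\times\mathcal{S}_i$), invoke the conjugation identity $\mathcal{P}_{k^{\prime},m^{\prime},n^{\prime},k,m,n}=\mathcal{P}^*_{k,m,n,k^{\prime},m^{\prime},n^{\prime}}$ together with the analogous one for $\eta\eta^*$, apply the hypothesis on $D_1,D_2$, and pull the conjugation out of the real-valued sum of deltas. The only difference is the order in which you apply the hypothesis versus the relabelling, which is immaterial.
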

\begin{proof}
See Appendix \ref{app:D}.
\end{proof}

\begin{corollary}\label{cor:1}
Let $\mathcal{D}$ be a set of delta products $\mathcal{D}$ closed to the transformation in Proposition \ref{prop:4}, i.e. $\forall D_1 \in \mathcal{D},\; \exists \; D_2 \in \mathcal{D}:D_2(k,m,n,k^{\prime},m^{\prime},n^{\prime})=D_1(k^{\prime},m^{\prime},n^{\prime},k,m,n)$, then 
\begin{equation}
\sum_{\substack{(k,m,n) \in \mathcal{S}_i \\ (k^{\prime},m^{\prime},n^{\prime}) \in \mathcal{S}_i}}\mathcal{P}_{k,m,n,k^{\prime},m^{\prime},n^{\prime}}\eta_{k,m,n}\eta^*_{k^{\prime},m^{\prime},n^{\prime}}D_1(k,m,n,k^{\prime},m^{\prime},n^{\prime}) \in \mathbb{R}, \;\; \forall D_1 \in
\mathcal{D}.
\label{eq:realval_sum}
\end{equation}
This clearly includes the case $D_1=D_2$.
\end{corollary}
\begin{proof}
This corollary directly follows from the fact that for any $D_1$ and $D_2$ in the set $\mathcal{D}$ in the hypothesis, both \eqref{eq:prop_1} and \eqref{eq:prop2} must hold.
\end{proof}

Since the sets $\mathcal{D}_i$ for $i=1,2,3,4,7,9,11,14$, are closed to the transformation in Corollary \ref{cor:1}, following \eqref{eq:realval_sum} we can rewrite  \eqref{eq:final_exp1} as
\begin{align*}
\begin{split}
S_x(f,N_s,&L_s)=\biggl(\frac{8}{9}\biggr)^2\gamma^2\Delta_f\sum_{i=-\infty}^{\infty}\delta(f-i\Delta_f)\\
&\cdot\sum_{\substack{(k,m,n) \in \mathcal{S}_i \\ (k^{\prime},m^{\prime},n^{\prime}) \in \mathcal{S}_i}}\bigl[R_s^3\Delta_f^2\bigl[(\mathsf{a}_1+2\Re\{\mathsf{a}_1^{\prime}\})\mathsf{P}\delta_{k-k^{\prime}}\delta_{m-m^{\prime}}\delta_{n-n^{\prime}}+(\mathsf{a}_2+2\Re\{\mathsf{a}_2^{\prime}\})\mathsf{P}\delta_{k-k^{\prime}}\delta_{m+n^{\prime}}\delta_{n+m^{\prime}}\\
&+(\mathsf{a}_3+2\Re\{\mathsf{a}_3^{\prime}\})\mathsf{P}\delta_{k+n}\delta_{m-m^{\prime}}\delta_{k^{\prime}+n^{\prime}}\bigr]+R_s^2\Delta_f^3\bigl[(\mathsf{b}_1+2\Re\{\mathsf{b}_1^{\prime}\})\mathsf{P}\delta_{k-m-k^{\prime}}\delta_{n+m^{\prime}-n^{\prime}}\\
&+(\mathsf{b}_2\mathsf{P}+2\Re\{\mathsf{b}_2^{\prime}\mathsf{P}\})\delta_{k-m+m^{\prime}}\delta_{n-k^{\prime}-n^{\prime}}+(\mathsf{b}_2^*\mathsf{P}+2\Re\{\mathsf{b}_3^{\prime}\mathsf{P}\})\delta_{k+n-k^{\prime}}\delta_{m-m^{\prime}+n^{\prime}}\\
&+(\mathsf{b}_4+2\Re\{\mathsf{b}_4^{\prime}\})\mathsf{P}\delta_{k+n+m^{\prime}}\delta_{m+k^{\prime}+n^{\prime}}+(\mathsf{c}_1\mathsf{P}+2\Re\{\mathsf{c}_1^{\prime}\mathsf{P}\})\delta_{k+n}\delta_{m+k^{\prime}-m^{\prime}+n^{\prime}}\\
&+(\mathsf{c}_2+2\Re\{\mathsf{c}_2^{\prime}\})\mathsf{P}\delta_{k-k^{\prime}}\delta_{m-n-m^{\prime}+n^{\prime}}+(\mathsf{c}_3\mathsf{P}+2\Re\{\mathsf{c}_3^{\prime}\mathsf{P}\})\delta_{k+m^{\prime}}\delta_{m-n+k^{\prime}+n^{\prime}}\\
&+(\mathsf{c}_4+2\Re\{\mathsf{c}_4^{\prime}\})\mathsf{P}\delta_{m-m^{\prime}}\delta_{k+n-k^{\prime}-n^{\prime}}+(\mathsf{c}_5\mathsf{P}+2\Re\{\mathsf{c}_5^{\prime}\mathsf{P}\})\delta_{m+k^{\prime}}\delta_{k+n+m^{\prime}-n^{\prime}}\\
&+(\mathsf{c}_6\mathsf{P}+2\Re\{\mathsf{c}_6^{\prime}\mathsf{P}\})\delta_{k^{\prime}+n^{\prime}}\delta_{k-m+n+m^{\prime}}\bigr]+R_s\Delta_f^4(\mathsf{d}_1+2\Re\{\mathsf{d}_1^{\prime}\})\mathsf{P}\delta_{k-m+n-k^{\prime}+m^{\prime}-n^{\prime}}\bigr].
\end{split}
\end{align*}
Furthermore, we note that $(\mathcal{D}_5,\mathcal{D}_6)$, $(\mathcal{D}_8,\mathcal{D}_{13})$ and $(\mathcal{D}_{10},\mathcal{D}_{12})$ represent pairs of complementary sets under the transformation in Proposition \ref{prop:4}, hence their delta product elements can be grouped to form pairs of complex conjugate summations. This finally leads to
\begin{align}
\begin{split}
S_x(f,N_s,L_s)&=\left(\frac{8}{9}\right)^2\gamma^2\Delta_f\sum_{i=-\infty}^{\infty}\delta(f-i\Delta_f)\\
&\cdot\left[R_s^3\Delta_f^2\left[\Phi_1\mathsf{Q}_1+\Phi_2\mathsf{Q}_2+\Phi_3\mathsf{Q}_3\right]+R_s^2\Delta_f^3\left[\Psi_1\mathsf{Q}_4+2\Re\{\Psi_2\mathsf{Q}_5+\Psi_3\mathsf{Q}_5^*\}+\Psi_4\mathsf{Q}_6\right.\right.\\
&\left.\left.+2\Re\{\Lambda_1\mathsf{Q}_7+\Lambda_2\mathsf{Q}_7^*\}+\Lambda_3\mathsf{Q}_8+2\Re\{\Lambda_4\mathsf{Q}_9+\Lambda_5\mathsf{Q}_9^*\}+\Lambda_6\mathsf{Q}_{10}\right]+R_s\Delta_f^4\Xi_1\mathsf{Q}_{11}\right],
\end{split}
\label{eq:psd_periodic} \end{align}
where 
\begin{equation}
\mathsf{Q}_l\triangleq\sum_{\substack{(k,m,n) \in \mathcal{S}_i \\ (k^{\prime},m^{\prime},n^{\prime}) \in \mathcal{S}_i}}\mathsf{P}D^{(l)}=\sum_{\mathcal{T}_{l,i}}\mathsf{P} \qquad l=1,2,\dots,11,
\label{eq:Qdef}
\end{equation}
the coefficients $\Phi_i$, $i=1,2,3$, $\Psi_i$, $i=1,...,4$, $\Lambda_i$, $i=1,...,6$, and $\Xi_1$ in \eqref{eq:psd_periodic} are given in Table \ref{tab:final_coeff1}, the sets $\mathcal{S}_i$ are defined in \eqref{eq:setSi}, $D^{(l)}$ are the coset leaders highlighted in boldface in Table \ref{tab:kron_delta_sets} and listed in Table \ref{tab:final_coeff2} with their corresponding set $\mathcal{D}$. Finally, the sets $\mathcal{T}_{l,i}$ are defined as
\begin{equation*}
\mathcal{T}_{l,i}\triangleq\{(k,m,n,k^{\prime},m^{\prime},n^{\prime}) \in \{0,1,\ldots,W-1\}^6:(k,m,n)\in \mathcal{S}_i,\,  (k^{\prime},m^{\prime},n^{\prime}) \in \mathcal{S}_i,\, D^{(l)}=1\}.
\end{equation*}
Note how in the second equality of \eqref{eq:Qdef} we have accounted for the multiplication by $D^{(l)}$ by restricting the summation set to $\mathcal{T}_{l,i}$.

\begin{table}[tbp]
\parbox{.1\linewidth}{
\hspace{5ex}% To center the tables
}
\parbox{.4\linewidth}{
\centering
\caption{Delta products $D^{(l)}$ in the $\mathsf{Q}_l$ terms $l=1,2,\dots,11,$ in \eqref{eq:Qdef}  with their corresponding $\mathcal{D}$ set in Table~\ref{tab:kron_delta_sets}.}\label{tab:final_coeff2}
\begin{tabular}{c|l|l}
\hline\hline
$l$ & $D^{(l)}$ & \textbf{Set} $\mathcal{D}$\\
\hline\hline
1 & $\delta_{k-k^{\prime}}\delta_{m-m^{\prime}}\delta_{n-n^{\prime}}$ & $\mathcal{D}_1$ \\
\hline
2 & $\delta_{k-k^{\prime}}\delta_{m+n^{\prime}}\delta_{n+m^{\prime}}$ & $\mathcal{D}_2$ \\
\hline
3 & $\delta_{k+n}\delta_{m-m^{\prime}}\delta_{k^{\prime}+n^{\prime}}$ & $\mathcal{D}_3$\\
\hline
4 & $\delta_{k-m-k^{\prime}}\delta_{n+m^{\prime}-n^{\prime}}$ & $\mathcal{D}_4$\\
\hline
5 & $\delta_{k-m+m^{\prime}}\delta_{n-k^{\prime}-n^{\prime}}$ & $\mathcal{D}_5$ \\
\hline
6 & $\delta_{k+n+m^{\prime}}\delta_{m+k^{\prime}+n^{\prime}}$ & $\mathcal{D}_7$\\
\hline
7 & $\delta_{k+n}\delta_{m+k^{\prime}-m^{\prime}+n^{\prime}}$ & $\mathcal{D}_8$\\
\hline
8 & $\delta_{k-k^{\prime}}\delta_{m-n-m^{\prime}+n^{\prime}}$ & $\mathcal{D}_9$ \\
\hline
9 & $\delta_{k+m^{\prime}}\delta_{m-n+k^{\prime}+n^{\prime}}$ & $\mathcal{D}_{10}$ \\
\hline
10 & $\delta_{m-m^{\prime}}\delta_{k+n-k^{\prime}-n^{\prime}}$ & $\mathcal{D}_{11}$\\
\hline
11 & $\delta_{k-m+n-k^{\prime}+m^{\prime}-n^{\prime}}$ & $\mathcal{D}_{14}$\\
\hline\hline
%\end{longtable}
\end{tabular}
\vspace{8ex} % manual adjustment for tables (horizontal)
}
\hfill
\parbox{.35\linewidth}{
\centering
\caption{Correlation coefficients in \eqref{eq:psd_periodic}. The values of $\mathsf{a}_1,\mathsf{a}_1',\mathsf{b}_1,\mathsf{b}_1',\ldots$ are given in Table \ref{tab:final_coeff0}.}\label{tab:final_coeff1}
\begin{tabular}{c|c}
\hline\hline
%\multicolumn{2}{c}{\textbf{Correlation coefficients}}\\ 
%\hline\hline
\textbf{Name} & \textbf{Value} \\
\hline 
$\Phi_1$ & $\mathsf{a}_1+2\Re\{\mathsf{a}_1^{\prime}\}$ \\
\hline
$\Phi_2$ & $\mathsf{a}_2+2\Re\{\mathsf{a}_2^{\prime}\}$ \\
\hline
$\Phi_3$ & $\mathsf{a}_3+2\Re\{\mathsf{a}_3^{\prime}\}$\\
\hline
$\Psi_1$ & $\mathsf{b}_1+2\Re\{\mathsf{b}_1^{\prime}\}$\\
\hline
$\Psi_2$ & $\mathsf{b}_2+\mathsf{b}_2^{\prime}$\\
\hline
$\Psi_3$ & $\mathsf{b}_3^{\prime}$ \\
\hline
$\Psi_4$ & $\mathsf{b}_4+2\Re\{\mathsf{b}_4^{\prime}\}$\\
\hline
$\Lambda_1$ & $\mathsf{c}_1+\mathsf{c}_1^{\prime}$\\
\hline
$\Lambda_2$ & $\mathsf{c}_6^{\prime}$\\
\hline
$\Lambda_3$ & $\mathsf{c}_2+2\Re\{\mathsf{c}_2^{\prime}\}$\\
\hline
$\Lambda_4$ & $\mathsf{c}_3+\mathsf{c}_3^{\prime}$\\
\hline
$\Lambda_5$ & $\mathsf{c}_5^{\prime}$\\
\hline
$\Lambda_6$ & $\mathsf{c}_4+2\Re\{\mathsf{c}_4^{\prime}\}$\\
\hline
$\Xi_1$ & $\mathsf{d}_1+2\Re\{\mathsf{d}_1^{\prime}\}$\\
\hline\hline
\end{tabular}
}
\label{tab:delta_prod}
\parbox{.1\linewidth}{
\hspace{5ex}% To center the tables
}
\end{table}

\section{Final result}\label{sec:final result}
Eq.~\eqref{eq:psd_periodic} expresses the NLI PSD for a periodic signal of period $T=1/\Delta_f$ as a function of the statistical moments and cross-polarisation correlations of a generic 4D modulation format. To generalise this result to aperiodic signals we take the same approach in \cite{Poggiolini2012, Carena2012}, i.e., we let the period $T$ go to infinity, or equivalently, $\Delta_f \rightarrow 0$ (see Fig.~\ref{fig:time_sketch}).

The limit of \eqref{eq:psd_periodic} for $\Delta_f\rightarrow 0$ is a limit of a distribution (a Dirac's delta comb) which is parametric in $\Delta_f$. To rigorously evaluate such a limit we use Lemma \ref{lm:T_dim} and Theorem \ref{th:keyresult} presented in the following. In particular, Theorem \ref{th:keyresult} presents the \emph{final result} of this work.

\begin{lemma}[Dimensionality of the sets $\mathcal{T}_{l,i}$] The sets $\mathcal{T}_{l,i}$, for $l=1,2,3$, for $l=4,...,10$, and for $l=11$, have dimensionality 2, 3 and 4, respectively, $\forall i\in \mathbb{Z}$.  

\label{lm:T_dim}
\end{lemma}
\begin{proof}
See Appendix \ref{app:E}.
\end{proof}

\begin{theorem}[Limit of the distribution $S_x(f,N_s,L_s)$]\label{th:keyresult}
For an arbitrary aperiodic transmitted signal, the PSD \\$\bar{S}_{x}(f,N_s,L_s)\triangleq \lim_{\Delta_f\rightarrow 0}S_x(f,N_s,L_s)$, where $S_x(f,N_s,L_s)$ is given in \eqref{eq:psd_periodic}, is

% \vspace{.3cm}
% \fbox{
% \addtolength{\linewidth}{-2\fboxsep}%
% \addtolength{\linewidth}{-2\fboxrule}%
% \begin{minipage}{\linewidth}
\begin{align}
\begin{split}
\bar{S}_{x}(f,N_s,L_s)&=\left(\frac{8}{9}\right)^2\gamma^2\left[R_s^3\left(\Phi_{1}\rchi_1(f)+\Phi_{2}\rchi_2(f)+\Phi_{3} \rchi_3(f)\right)+R_s^2\left(\Psi_1\rchi_4(f)+2\Re\{\Psi_2\rchi_5(f)+\Psi_3\rchi_5^*(f)\}\right.\right.\\
&\left.+\Psi_4\rchi_6(f)
+2\Re\{\Lambda_1\rchi_7(f)+\Lambda_2\rchi_7^*(f)\}+\Lambda_3\rchi_8(f)+2\Re\{\Lambda_4\rchi_9(f)+\Lambda_5\rchi_9^*(f)\}+\Lambda_6\rchi_{10}(f)\right)\\
&\left.+R_s\Xi_1\rchi_{11}(f)\right],
\label{eq:final_result}
\end{split}
\end{align}
\\
%\vspace{.8cm}
where the coefficients $\Phi_i$, $i=1,2,3$, $\Psi_i$, $i=1,2,\ldots,4$, $\Lambda_i$, $i=1,2,\ldots,6$, and $\Xi_1$ as well as the integrals $\rchi_i(f)$, $i=1,2,\ldots,11$ are given in Table \ref{tab:final_result}. As discussed at the end of Sec.~\ref{sec:PSD_periodic}, $\bar{S}_y(f)$ can be obtained applying the transformation $x\rightarrow y$, $y\rightarrow x$ to \eqref{eq:final_result}.

The NLI power vector $\boldsymbol{\Sigma}_{\text{NLI}}$ can be obtained from the PSDs in $x$ and $y$ as
\begin{equation}
\boldsymbol{\Sigma}_{\text{NLI}}\triangleq [\sigma^2_{\text{NLI},x},\sigma^2_{\text{NLI},y}]^{T}=\left[\int_{-\infty}^{\infty}\bar{S}_x(f,N_s,L_s)|P(f)|^2df,\int_{-\infty}^{\infty}\bar{S}_y(f,N_s,L_s)|P(f)|^2df\right]^{T},
\label{eq:Sigma_NLI}
\end{equation}
where $P(f)$ is the transmitted pulse spectrum.
% \end{minipage}
% }
% \vspace{.3cm}

\end{theorem}
\begin{proof}
See Appendix~\ref{app:F}.
\end{proof}

The dependency on $N_s$ and $L_s$ of the $\rchi_l(f)$ was removed to make \eqref{eq:final_result} more compact. To derive the expressions for $\rchi_l(f)$ in Table~\ref{tab:final_result}, we used the definition in \eqref{eq:Chi_def} and the property $P(-f)=P^*(f)$, which stems from the fact that $p(t)$ is assumed to be real valued (see Sec.~\ref{sec:system_model}).

{\small
%\begin{longtable}{p{.03\textwidth} | p{.89\textwidth}}
\begin{table}[tbp]
\caption{Table of high-order moments, correlation coefficients, and integrals appearing in \eqref{eq:final_result}. The function $\eta(f_1,f_2,f)$ is defined in \eqref{eq:fwm_efficiency}. }\label{tab:final_result}
\begin{tabular}{c|l}
\hline\hline
%\multicolumn{2}{c}{\textbf{Correlation coefficients}} \\
%\hline\hline
$\Phi_1$ & \makecell[l]{$2\mathbb{E}^3\{|a_x|^2\}+4\mathbb{E}\{|a_x|^2\}|\mathbb{E}\{a_xa_y^*\}|^2+\mathbb{E}\{|a_x|^2\}\mathbb{E}^2\{|a_y|^2\}+|\mathbb{E}\{a_xa_y^*\}|^2\mathbb{E}\{|a_y|^2\}$} \\ 
\hline
$\Phi_2$ &\makecell[l]{$4\mathbb{E}\{|a_x|^2\}|\mathbb{E}\{a_x^2\}|^2+\mathbb{E}\{|a_x|^2\}|\mathbb{E}\{a_y^2\}|^2+4\mathbb{E}\{|a_x|^2\}|\mathbb{E}\{a_xa_y\}|^2+|\mathbb{E}\{a_xa_y\}|^2\mathbb{E}\{|a_y|^2\}+2\Re\{\mathbb{E}\{a_xa_y\}\mathbb{E}\{a_x^*a_y\}\mathbb{E}^*\{a_y^2\}$\\$+2\mathbb{E}^*\{a_x^2\}\mathbb{E}\{a_xa_y\}\mathbb{E}\{a_xa_y^*\}\}$}\\ 
\hline
$\Phi_3$ & \makecell[l]{$\mathbb{E}\{|a_x|^2\}|\mathbb{E}\{a_x^2\}|^2+|\mathbb{E}\{a_xa_y\}|^2\mathbb{E}\{|a_y|^2\}+2\Re\{\mathbb{E}\{a_x^2\}\mathbb{E}^*\{a_xa_y\}\mathbb{E}\{a_x^*a_y\}\}$}\\ 
\hline
$\Psi_1$ & \makecell[l]{$4|\mathbb{E}\{a_x|a_x|^2\}|^2+4|\mathbb{E}\{|a_x|^2a_y\}|^2+\mathbb{E}\{|a_x|^2a_y\}\mathbb{E}\{a_y^*|a_y|^2\}+\mathbb{E}\{|a_x|^2a_y^*\}\mathbb{E}\{a_y|a_y|^2\}+|\mathbb{E}\{a_x|a_y|^2\}|^2+|\mathbb{E}\{a_x^*a_y^2\}|^2$\\$+2\Re\{\mathbb{E}\{a_x^*|a_x|^2\}\mathbb{E}\{a_x|a_y|^2\}\}$}   \\
\hline
$\Psi_2$ & \makecell[l]{$2|\mathbb{E}\{a_x|a_x|^2\}|^2+2|\mathbb{E}\{|a_x|^2a_y\}|^2+\mathbb{E}\{|a_x|^2a_y^*\}\mathbb{E}\{a_y|a_y|^2\}+|\mathbb{E}\{a_x|a_y|^2\}|^2$}  \\
\hline
$\Psi_3$ & $\mathbb{E}\{a_x^*|a_x|^2\}\mathbb{E}\{a_x|a_y|^2\}+|\mathbb{E}\{a_x^2a_y^*\}|^2$  \\
\hline
$\Psi_4$ & $|\mathbb{E}\{a_x^3\}|^2+2|\mathbb{E}\{a_x^2a_y\}|^2+|\mathbb{E}\{a_xa_y^2\}|^2
$ \\
\hline
$\Lambda_1$ & \makecell[l]{$-3\mathbb{E}\{|a_x|^2\}|\mathbb{E}\{a_x^2\}|^2+\mathbb{E}^*\{a_x^2|a_x|^2\}\mathbb{E}\{a_x^2\}-|\mathbb{E}\{a_x^2\}|^2\mathbb{E}\{|a_y|^2\}-2|\mathbb{E}\{a_xa_y\}|^2\mathbb{E}\{|a_y|^2\}+\mathbb{E}\{a_x^2\}\mathbb{E}^*\{a_x^2|a_y|^2\}$\\$-2\mathbb{E}\{a_x^2\}\mathbb{E}^*\{a_xa_y\}\mathbb{E}\{a_x^*a_y\}+\mathbb{E}\{a_xa_y\}\mathbb{E}^*\{a_xa_y|a_y|^2\}-\mathbb{E}\{a_xa_y\}\mathbb{E}\{a_x^*a_y\}\mathbb{E}^*\{a_y^2\}$}\\
\hline
$\Lambda_2$ & \makecell[l]{$-2\mathbb{E}\{|a_x|^2\}|\mathbb{E}\{a_xa_y\}|^2+\mathbb{E}\{a_xa_y\}\mathbb{E}^*\{a_xa_y|a_x|^2\} -\mathbb{E}\{a_x^2\}\mathbb{E}^*\{a_xa_y\}\mathbb{E}\{a_x^*a_y\}$}  \\
\hline
$\Lambda_3$ & \makecell[l]{$4\mathbb{E}\{|a_x|^4\}\mathbb{E}\{|a_x|^2\}-4\mathbb{E}\{|a_x|^2\}|\mathbb{E}\{a_x^2\}|^2-8\mathbb{E}^3\{|a_x|^2\}+4\mathbb{E}\{|a_x|^2\}\mathbb{E}\{|a_x|^2|a_y|^2\}-12\mathbb{E}\{|a_x|^2\}|\mathbb{E}\{a_xa_y^*\}|^2$\\$-4\mathbb{E}\{|a_x|^2\}|\mathbb{E}\{a_xa_y\}|^2-4\mathbb{E}^2\{|a_x|^2\}\mathbb{E}\{|a_y|^2\}-3\mathbb{E}\{|a_x|^2\}\mathbb{E}^2\{|a_y|^2\}-\mathbb{E}\{|a_x|^2\}|\mathbb{E}\{a_y^2\}|^2+\mathbb{E}\{|a_x|^2|a_y|^2\}\mathbb{E}\{|a_y|^2\}$\\$+\mathbb{E}\{|a_x|^2\}\mathbb{E}\{|a_y|^4\}-5|\mathbb{E}\{a_xa_y^*\}|^2\mathbb{E}\{|a_y|^2\}-|\mathbb{E}\{a_xa_y\}|^2\mathbb{E}\{|a_y|^2\}-\mathbb{E}^*\{a_x^2\}\mathbb{E}\{a_xa_y\}\mathbb{E}\{a_xa_y^*\}\}$\\$+2\Re\{2\mathbb{E}\{a_xa_y^*\}\mathbb{E}\{a_x^*a_y|a_x|^2\}-\mathbb{E}\{a_xa_y\}\mathbb{E}\{a_x^*a_y\}\mathbb{E}^*\{a_y^2\}+\mathbb{E}\{a_x^*a_y\}\mathbb{E}\{a_xa_y^*|a_y|^2\}\}$}  \\
\hline
$\Lambda_4$ & \makecell[l]{$-6\mathbb{E}\{|a_x|^2\}|\mathbb{E}\{a_x^2\}|^2+2\mathbb{E}^*\{a_x^2|a_x|^2\}\mathbb{E}\{a_x^2\}+4\mathbb{E}\{|a_x|^2\}|\mathbb{E}\{a_xa_y\}|^2
-\mathbb{E}\{|a_x|^2\}|\mathbb{E}\{a_y^2\}|^2+\mathbb{E}^*\{|a_x|^2a_y^2\}\mathbb{E}\{a_y^2\}$\\$+2\mathbb{E}\{a_xa_y\}\mathbb{E}^*\{a_x|a_x|^2a_y\}-2|\mathbb{E}\{a_xa_y\}|^2\mathbb{E}\{|a_y|^2\}-2\mathbb{E}^*\{a_x^2\}\mathbb{E}\{a_xa_y\}\mathbb{E}\{a_xa_y^*\}+\mathbb{E}\{a_xa_y\}\mathbb{E}^*\{a_xa_y|a_y|^2\}$\\$-\mathbb{E}^*\{a_xa_y\}\mathbb{E}\{a_xa^*_y\}\mathbb{E}\{a_y^2\}-2\Re\{\mathbb{E}^*\{a_xa_y\}\mathbb{E}\{a_xa^*_y\}\mathbb{E}\{a_y^2\}\}$}\\
\hline
$\Lambda_5$ & \makecell[l]{$2\mathbb{E}\{|a_x|^2\}|\mathbb{E}\{a_xa_y\}|^2+\mathbb{E}\{a_xa_y\}\mathbb{E}^*\{a_xa_y|a_x|^2\}-|\mathbb{E}\{a_x^2\}|^2\mathbb{E}\{|a_y|^2\}-\mathbb{E}^*\{a_x^2\}\mathbb{E}\{a_xa_y\}\mathbb{E}\{a_xa_y^*\}$\\$-2\Re\{\mathbb{E}\{a_x^2\}\mathbb{E}^*\{a_xa_y\}\mathbb{E}\{a_x^*a_y\}\}$}\\
\hline
$\Lambda_6$ & \makecell[l]{$2\mathbb{E}^3\{|a_x|^2\}+\mathbb{E}\{|a_x|^4\}\mathbb{E}\{|a_x|^2\}-\mathbb{E}\{|a_x|^2\}|\mathbb{E}\{a_x^2\}|^2-4\mathbb{E}\{|a_x|^2\}|\mathbb{E}\{a_xa_y^*\}|^2-\mathbb{E}\{|a_x|^2\}\mathbb{E}^2\{|a_y|^2\}$\\$+\mathbb{E}\{|a_x|^2|a_y|^2\}\mathbb{E}\{|a_y|^2\}-|\mathbb{E}\{a_xa_y^*\}|^2\mathbb{E}\{|a_y|^2\}-|\mathbb{E}\{a_xa_y\}|^2\mathbb{E}\{|a_y|^2\}-\mathbb{E}\{a_x^2\}\mathbb{E}^*\{a_xa_y\}\mathbb{E}\{a_x^*a_y\}$\\$+2\Re\{\mathbb{E}\{a_xa_y^*\}\mathbb{E}\{a_x^*a_y|a_x|^2\}$} \\
\hline
$\Xi_1$ & \makecell[l]{$\mathbb{E}\{|a_x|^6\}-9\mathbb{E}\{|a_x|^4\}\mathbb{E}\{|a_x|^2\}+12\mathbb{E}^3\{|a_x|^2\}-2\mathbb{E}\{|a_x|^4\}\mathbb{E}\{|a_y|^2\}+\mathbb{E}\{|a_x|^2|a_y|^4\}-8\mathbb{E}\{|a_x|^2\}\mathbb{E}\{|a_x|^2 |a_y|^2\}$\\$-4\mathbb{E}\{|a_x|^2|a_y|^2\}\mathbb{E}\{|a_y|^2\}+2\mathbb{E}\{|a_x|^4 |a_y|^2\}-\mathbb{E}\{|a_x|^2\}\mathbb{E}\{|a_y|^4\}+4\mathbb{E}\{|a_x|^2\}\mathbb{E}^2\{|a_y|^2\}+8\mathbb{E}^2\{|a_x|^2\}\mathbb{E}\{|a_y|^2\}$\\$+18\mathbb{E}\{|a_x|^2\}|\mathbb{E}\{a_x^2\}|^2-|\mathbb{E}\{a_x^3\}|^2-9|\mathbb{E}\{a_x|a_x|^2\}|^2+2\mathbb{E}\{|a_x|^2\}|\mathbb{E}\{a_y^2\}|^2-4|\mathbb{E}\{a_x|a_y|^2\}|^2+8|\mathbb{E}\{a_xa_y^*\}|^{2}\mathbb{E}\{|a_y|^2\}$\\$+8|\mathbb{E}\{a_xa_y\}|^2\mathbb{E}\{|a_y|^2\}-|\mathbb{E}\{a_xa_y^2\}|^2-|\mathbb{E}\{a_x^*a_y^2\}|^2+16\mathbb{E}\{|a_x|^2\}|\mathbb{E}\{a_xa_y^*\}|^2-2|\mathbb{E}\{a_x^2a_y^*\}|^2+16\mathbb{E}\{|a_x|^2\}|\mathbb{E}\{a_xa_y\}|^2$\\$-\mathbb{E}\{a_xa_y\}\mathbb{E}^*\{a_xa_y|a_x|^2\}\}+4|\mathbb{E}\{a_x^2\}|^2\mathbb{E}\{|a_y|^2\}-2|\mathbb{E}\{a_x^2a_y\}|^2-2\Re\{3\mathbb{E}\{a_x^2|a_x|^2\}\mathbb{E}^*\{a_x^2\}+2\mathbb{E}\{|a_x|^2a_y\}\mathbb{E}\{a_y^* |a_y|^2\}$\\$+\mathbb{E}\{|a_x|^2 a_y^2\}\mathbb{E}^*\{ a_y^2\}-4\mathbb{E}\{a_xa_y\}\mathbb{E}\{a_x^*a_y\}\mathbb{E}^*\{a_y^2\}+2\mathbb{E}\{a_xa_y\}\mathbb{E}^*\{a_xa_y|a_y|^2\}+\mathbb{E}\{a_x a_y^*\}\mathbb{E}\{a_x^*a_y|a_y|^2\}$\\$+2\mathbb{E}\{a_x^* |a_x|^2\}\mathbb{E}\{a_x |a_y|^2\}+2\mathbb{E}\{a_x^2\}\mathbb{E}^*\{a_x^2 |a_y|^2\}+\mathbb{E}\{a_x |a_x|^2\}\mathbb{E}\{a_x |a_y|^2\}+4\mathbb{E}\{a_x a_y^*\}\mathbb{E}\{a_x^*a_y|a_x|^2\}$\\$-8\mathbb{E}\{a_x^2\}\mathbb{E}^*\{a_xa_y\}\mathbb{E}\{a_x^*a_y\}\}$} \\
\hline\hline
% \end{tabular}
% \end{table}
% }

% {\small
% \setlength\extrarowheight{2pt}
% \begin{table}[tbp]
% \caption{Table of integrals appearing in \eqref{eq:final_result}.}\label{tab:final_result_2}
% \begin{tabular}{c|l}
% \hline\hline
$\rchi_1(f)$ & $\int_{-R_s/2}^{R_s/2}\int_{-R_s/2}^{R_s/2}|P(f_1)|^2|P(f_2)|^2|P(f-f_1+f_2)|^2|\eta(f_1,f_2,f)|^2 df_1\,df_2$\\
\hline
$\rchi_2(f)$ & $\int_{-R_s/2}^{R_s/2}\int_{-R_s/2}^{R_s/2}|P(f_1)|^2|P(f_2)|^2|P(f-f_1+f_2)|^2\eta(f_1,f_2,f)\eta^*(f_1,-f+f_1-f_2,f) df_1\,df_2$ \\
\hline
$\rchi_3(f)$ & $|P(f)|^2\int_{-R_s/2}^{R_s/2}\int_{-R_s/2}^{R_s/2}|P(f_1)|^2|P(f_2)|^2\eta(f_1,-f,f)\eta^*(f_2,-f,f) df_1\,df_2$ \\
\hline
$\rchi_4(f)$ & \makecell[l]{$\int_{-R_s/2}^{R_s/2}\int_{-R_s/2}^{R_s/2}\int_{-R_s/2}^{R_s/2}P(f_1)P^*(f_2)P(f-f_1+f_2)P^*(f_1-f_2)P(f_3)P^*(f-f_1+f_2+f_3)\eta(f_1,f_2,f)$\\$\cdot\eta^*(f_1-f_2,f_3,f) df_1\,df_2\,df_3$} \\
\hline
$\rchi_5(f)$ & \makecell[l]{$\int_{-R_s/2}^{R_s/2}\int_{-R_s/2}^{R_s/2}\int_{-R_s/2}^{R_s/2}P(f_1)P^*(f_2)P(f-f_1+f_2)P^*(f_3)P(f_2-f_1)P^*(f-f_1+f_2-f_3)\eta(f_1,f_2,f)$\\$\cdot\eta^*(f_3,f_2-f_1,f)df_1\,df_2\,df_3$}  \\
\hline
$\rchi_6(f)$ & \makecell[l]{$\int_{-R_s/2}^{R_s/2}\int_{-R_s/2}^{R_s/2}\int_{-R_s/2}^{R_s/2}P(f_1)P^*(f_2)P(f-f_1+f_2)P^*(f_3)P(-f-f_2)P^*(f_2+f_3)\eta(f_1,f_2,f)$\\$\cdot\eta^*(f_3,-f-f_2,f)df_1\,df_2\,df_3$}  \\
\hline
$\rchi_7(f)$ 
%& $P(f)\iiint\limits_{\mathcal{Q}_7(f)}|P(f_1)|^2P^*(f_1^{\prime})P(f_2^{\prime})P^*(f-f_1^{\prime}+f_2^{\prime})\eta(f_1,f,f)\eta^*(f_1^{\prime},f_2^{\prime},f)df_1\,df_1^{\prime}\,df_2^{\prime}$ \\
%&\AB{$P^*(-f)\iiint\limits_{\mathcal{Q}_7(f)}|P(f_1)|^2P^*(f^{\prime}_1)P(f^{\prime}_2)P^*(f-f^{\prime}_1+f^{\prime}_2)\eta(f_1,f,f)\eta^*(f_1^{\prime},f_2^{\prime},f)df_1\,df_1^{\prime}\,df_2^{\prime}$}\\
&$P(f)\int_{-R_s/2}^{R_s/2}\int_{-R_s/2}^{R_s/2}\int_{-R_s/2}^{R_s/2}|P(f_1)|^2P^*(f_2)P(f_3)P^*(f-f_2+f_3)\eta(f_1,f,f)\eta^*(f_2,f_3,f)df_1\,df_2\,df_3$
\\
\hline
$\rchi_8(f)$ &\makecell[l]{$\int_{-R_s/2}^{R_s/2}\int_{-R_s/2}^{R_s/2}\int_{-R_s/2}^{R_s/2}|P(f_1)|^2P^*(f_2)P(f-f_1+f_2)P(f_3)P^*(f-f_1+f_3)\eta(f_1,f_2,f)\eta^*(f_1,f_3,f)df_1\,df_2\,df_3$}\\
\hline
$\rchi_9(f)$ &  \makecell[l]{$\int_{-R_s/2}^{R_s/2}\int_{-R_s/2}^{R_s/2}\int_{-R_s/2}^{R_s/2}|P(f_1)|^2P^*(f_2)P(f-f_1+f_2)P^*(f_3)P^*(f-f_1-f_3)\eta(f_1,f_2,f)\eta^*(f_3,-f_1,f)df_1\,df_2\,df_3$}\\
\hline
$\rchi_{10}(f)$ & \makecell[l]{$\int_{-R_s/2}^{R_s/2}\int_{-R_s/2}^{R_s/2}\int_{-R_s/2}^{R_s/2}P(f_1)|P(f_2)|^2P(f-f_1+f_2)P^*(f_3)P^*(f+f_2-f_3)\eta(f_1,f_2,f)\eta^*(f_3,f_2,f)df_1\,df_2\,df_3$} \\
\hline
$\rchi_{11}(f)$ & \makecell[l]{$\int_{-R_s/2}^{R_s/2}\int_{-R_s/2}^{R_s/2}\int_{-R_s/2}^{R_s/2}\int_{-R_s/2}^{R_s/2}P(f_1)P^*(f_2)P(f-f_1+f_2)P^*(f_3)P(f_4)P^*(f-f_3+f_4)\eta(f_1,f_2,f)$\\$\cdot\eta^*(f_3,f_4,f)df_1\,df_2\,df_3\,df_4$}\\
\hline\hline
\end{tabular}
\end{table}
%\end{longtable}
}

\section{Discussion and Conclusions}
In this work, we have derived a comprehensive analytical expression for the NLI power when a general dual-polarisation 4D modulation format is transmitted. The transmitted format is only assumed to be zero-mean. This result extends the model in \cite{Carena2014} by accounting for any constellation geometry and statistic in four dimensions. This is done by lifting two underlying assumptions in \cite{Carena2014} (and in other existing models): i) the transmitted formats are PM versions of a 2D format; ii) some high-order moments of the transmitted modulation format, such as $\mathbb{E}\{a^2_x\}$ and $\mathbb{E}\{a^3_x\}$, are implicitly assumed to be equal to zero.

The presented results are derived in a single-channel transmission scenario. However,  as it can be inferred from previous works, extending the expressions to the wavelength-division multiplexing (WDM) case does not lead to a different set of statistical moments of the transmitted constellation in the NLI power expression. An extension of this work to the WDM transmission scenario will be addressed in future versions of this manuscript.

Future work will also focus on comparing the presented model with possible heuristic extensions of existing PM-2D models to the general 4D case, e.g., by using the 4D constellation normalised fourth-order moment (or so-called kurtosis). Lastly, 4D constellation shaping in the optical fibre channel arguably represents the most attractive application for the model derived in this paper. 
\appendices

\section{Proof of Theorem \ref{The:RP_NLSE}}\label{app:theoRP}

% Let $\boldsymbol{E}(f,z)$ be Fourier transform (defined as in \eqref{eq:FTdef}) of the first-order regular perturbation solution of \eqref{eq:Manakov} given by
% \begin{equation}
% \boldsymbol{E}(f,z)=\boldsymbol{E}_0(f,z)+\gamma\boldsymbol{E}_1(f,z), 
% \end{equation}
% where $\boldsymbol{E}_0(f,z)$ and $\boldsymbol{E}_1(f,z)$ are the Fourier transforms of the zeroth and first order terms of \eqref{eq:Manakov}. The dependency on $\gamma$ of $\boldsymbol{E}_1(f,z)$ is here dropped for simplicity of notation. 
The Manakov equation \eqref{eq:Manakov} can be written in frequency domain as
%\begin{subequations}
\begin{align}
    \frac{\partial \boldsymbol{E}(f,z)}{\partial z}&=-\frac{\alpha}{2}\boldsymbol{E}(f,z)+j4\pi^2f^2 \frac{\beta_2}{2} \boldsymbol{E}(f,z)+j\gamma \frac{8}{9}\mathcal{F}\{|\tilde{\boldsymbol{E}}(t,z)|^2 \tilde{\boldsymbol{E}}(t,z)\}\nonumber \\
    &=\left(-\frac{\alpha}{2}+j4\pi^2f^2 \frac{\beta_2}{2}\right)\boldsymbol{E}(f,z)+j\gamma\frac{8}{9}\mathcal{F}\{|\tilde{\boldsymbol{E}}(t,z)|^2\}*\boldsymbol{E}(f,z),
\label{eq:Manakov_FD}
\end{align}
%\end{subequations}
where $*$ denotes a modified convolution operator between a scalar function and a vector function\footnote{For a scalar function $\alpha$, and a vector function $\boldsymbol{B}=[B_x,B_y]^{T}$, the operator $\alpha*\boldsymbol{B}$ is here defined as $\alpha*\boldsymbol{B}\triangleq\left[\alpha*B_x,\alpha*B_y\right]^T$.}. Expanding the nonlinear term in \eqref{eq:Manakov_FD}, we have 
\begin{equation*}
\mathcal{F}\{|\tilde{\boldsymbol{E}}(t,z)|^2\}*\boldsymbol{E}(f,z)=\left[\mathcal{F}\{\tilde{E}_{x}(t,z)\tilde{E}^*_{x}(t,z)\}+\mathcal{F}\{\tilde{E}_{y}(t,z)\tilde{E}^*_{y}(t,z)\}\right]*[E_x(f,z),E_y(f,z)]^T,
\label{eq:convolution}
\end{equation*}
which, for instance for the $x$ component, becomes 
\begin{equation}
E_{x}(f,z)*E^*_{x}(-f,z)*E_x(f,z)+E_{y}(f,z)*E^*_{y}(-f,z)*E_{x}(f,z).
\label{eq:convolution_x}
\end{equation}
Expanding the first term in \eqref{eq:convolution_x} we obtain
\begin{equation}
E_{x}(f,z)*E^*_{x}(-f,z)*E_x(f,z)=\int_{-\infty}^{\infty}\int_{-\infty}^{\infty} E_x(f_1,z)E^*_x(f_1-f_2,z)E_x(f-f_2,z)df_1df_2,
\label{eq:convolution_x2_a}
\end{equation}
which by substitution $f_1-f_2=\tilde{f}_2$
becomes\footnote{For notation's simplicity, the integration variable $\tilde{f}_2$ is relabelled as $f_2$.}
\begin{equation}
E_{x}(f,z)*E^*_{x}(-f,z)*E_x(f,z)=-\int_{-\infty}^{\infty}\int_{-\infty}^{\infty} E_x(f_1,z)E^*_x(f_2,z)E_x(f-f_1+f_2,z)df_1df_2.
\label{eq:convolution_x2_b}
\end{equation}
Similarly to the steps in \eqref{eq:convolution_x2_a} and \eqref{eq:convolution_x2_b}, the second term in \eqref{eq:convolution_x} can be found as 
\begin{equation*}
E_{y}(f,z)*E^*_{y}(-f,z)*E_x(f,z)=-\int_{-\infty}^{\infty}\int_{-\infty}^{\infty} E_y(f_1,z)E^*_y(f_2,z)E_x(f-f_1+f_2,z)df_1df_2.
\label{eq:convolution_yyx}
\end{equation*}
The $x$ component in \eqref{eq:Manakov_FD} can be then rewritten as 
\begin{equation}
\begin{split}
\frac{\partial E_x(f,z)}{\partial z}&=\left(-\frac{\alpha}{2}+j2\pi^2f^2\beta_2\right)E_x(f,z)-j\frac{8}{9}\gamma\int_{-\infty}^{\infty}\int_{-\infty}^{\infty}\left[ E_x(f_1,z)E^*_x(f_2,z)E_x(f-f_1+f_2,z)\right.\\
&\left.+E_y(f_1,z)E^*_y(f_2,z)E_x(f-f_1+f_2,z)\right]df_1df_2.
\end{split}
\label{eq:Manakov_expanded}
\end{equation}

Following the first-order RP approach to finding the solution to the Manakov equation \cite{Vannucci2002}, we replace the $x$ component of the first-order expansion in \eqref{eq:RP_general} into \eqref{eq:Manakov_expanded} and equate terms with the same power of $\gamma$. After some algebra and after substituting the $\boldsymbol{A}_n$ terms with the corresponding $\boldsymbol{E}_n$ using \eqref{eq:AtoE}, we find the following set of differential equations
\begin{align}
%\begin{split}
\frac{\partial E_{0,x}(f,z)}{\partial z}&=\left(-\frac{\alpha}{2}+j2\pi^2f^2\beta_2\right)E_{0,x}(f,z),
\label{eq:zero_order_de}
%\end{split}
\\
\begin{split}
\frac{\partial E_{1,x}(f,z)}{\partial z}&=-j\frac{8}{9}\gamma\int_{-\infty}^{\infty}\int_{-\infty}^{\infty}\left[ E_{0,x}(f_1,z)E^*_{0,x}(f_2,z)E_{0,x}(f-f_1+f_2,z)\right.\\
&\left.+E_{0,y}(f_1,z)E^*_{0,y}(f_2,z)E_{0,x}(f-f_1+f_2,z)\right]df_1df_2.
\label{eq:first_order_de}
\end{split}
\end{align}

The zeroth-order term for a single fibre span of length $z$ is given by
\begin{equation}
E_{0,x}(f,z)= E(f,0)e^{(-\alpha/2+j2\pi^2\beta_2f^2)z}.  
 \label{eq:zero_th_order}
\end{equation}
%implying the same dependency for both components, $E_{x,0}$ and $E_{y,0}$, of the signal.
On the other hand, the first-order term (for the $x$ component)  $E_{1,x}(f,z)$, with initial conditions given by the transmitted signal $\boldsymbol{E}(f,0)$, can be found solving the following differential equation
\begin{equation}
\begin{split}
\frac{\partial E_{1,x}(f,z)}{\partial z}=&\left(-\frac{\alpha}{2}+j2\pi^2\beta_2f^2\right)E_{1,x}(f,z)-j\frac{8}{9}\gamma\int_{-\infty}^{\infty}\int_{-\infty}^{\infty} \left[E_{0,x}(f_1,z)E^*_{0,x}(f_2,z)E_{0,x}(f-f_1+f_2,z)\right.\\
&\left.+E_{0,y}(f_1,z)E^*_{0,y}(f_2,z)E_{0,y}(f-f_1+f_2,z)\right]df_1df_2.
\end{split}
\label{eq:1st_order_diff_equ}    
\end{equation}
The solution to \eqref{eq:1st_order_diff_equ} with initial condition $E_{1,x}(f,0)=0$ is given by 
\begin{subequations}
\begin{align}
\begin{split}
E_{1,x}(f,z)&=-j\frac{8}{9}\gamma e^{\left(-\alpha z+j2\beta_2\pi^2f^2 z\right)}\int_{0}^{z} e^{\left(\frac{\alpha}{2}-j2\beta_2\pi^2f^2\right)z^{\prime}}\int_{-\infty}^{\infty}\int_{-\infty}^{\infty}\left[E_{0,x}(f_1,z^{\prime})E^*_{0,x}(f_2,z^{\prime})E_{0,x}(f-f_1+f_2,z^{\prime})\right.\\
& \left.+E_{0,y}(f_1,z^{\prime})E^*_{0,y}(f_2,z^{\prime})E_{0,x}(f-f_1+f_2,z^{\prime})\right]df_1df_2 d z^{\prime}\label{eq:1st_order_RP_solution1}
\end{split}
\\
\begin{split}
&=-j\frac{8}{9}\gamma e^{\left(-\alpha z+j2\beta_2\pi^2f^2 z\right)}\int_{0}^{z} e^{\left(\frac{\alpha}{2}-j2\beta_2\pi^2f^2\right)z^{\prime}}\int_{-\infty}^{\infty}\int_{-\infty}^{\infty}\left[E_{x}(f_1,0)E^*_{x}(f_2,0)E_{x}(f-f_1+f_2,0)\right.
\\
&\left.+E_{y}(f_1,0)E^*_{y}(f_2,0)E_{x}(f-f_1+f_2,0)\right]e^{-\frac{3}{2}\alpha z^{\prime}}e^{j4\pi^2\frac{\beta_2}{2}(f_1^2-f_2^2+(f-f_1+f_2)^2)z^{\prime}}   df_1df_2 d z^{\prime},
\end{split}
\label{eq:1st_order_RP_solution}
\end{align}
\end{subequations}
%which, after swapping the integral in $z^{\prime}$ and the double integral in $d\omega_1d\omega_2$ becomes
%\begin{equation}
%\begin{split}
%E_{1x}(\omega,z)&=-j\frac{8}{9}\gamma e^{\left(-\alpha z+j\frac{\beta_2}{2}\omega^2 z\right)}\int_{0}^{z} e^{\left(\frac{\alpha}{2}-j\frac{\beta_2}{2}\omega^2\right)z^{\prime}}\\
%\cdot\int_{-\infty}^{\infty}\int_{-\infty}^{\infty}&\left[E_{0x}(\omega_1,z^{\prime})E^*_{0x}(\omega_2,z^{\prime})E_{0x}(\omega-\omega_1+\omega_2,z^{\prime})+E_{0y}(\omega_1,z^{\prime})E^*_{0y}(\omega_2,z^{\prime})E_{0x}(\omega-\omega_1+\omega_2,z^{\prime})\right]d\omega_1d\omega_2 d z^{\prime}.
%\end{split}
%\label{eq:1st_order_RP_solution_2}
%\end{equation}
where \eqref{eq:zero_th_order} was used in the step from \eqref{eq:1st_order_RP_solution1} to \eqref{eq:1st_order_RP_solution}.

The power profile assumed in Sec.~\ref{sec:system_model} for the multi-span optical link is exponentially decaying with a lumped amplification at the end of each span which brings it back to the transmitted power level. This leads to a discontinuity in the function $\alpha(z)$ across the interface where an amplifier is located. For such a power profile, we can solve the differential equations \eqref{eq:zero_order_de}, \eqref{eq:first_order_de} by exploiting the continuity of their coefficients within each span, and imposing the initial conditions at the input of each new fibre span $E_{0,x}(f,lL_s^{+})=e^{\alpha L_s/2}E_{0,x}(f,lL_s^{-})$ and $E_{1,x}(f,lL_s^{+})=e^{\alpha L_s/2}E_{1,x}(f,lL_s^{-})$, for $l=1,2,...,N_s$, where $z=lL_s^{-}$ and $z=lL_s^{+}$ indicate the sections at the input and at the output of the $l$-th amplifier, respectively. Thus, we obtain that the zeroth and first-order term after $N_s$ fibre spans are given by
\begin{align}
\begin{split}
E_{0,x}(f,N_s,L_s)&=E(f,0)e^{j2\pi^2\beta_2f^2N_sL_s},
\label{eq:1st_order_RP_solution31}
\end{split}
\\
\begin{split}
E_{1,x}(f,N_s,L_s)&=-j\frac{8}{9}\gamma e^{j2\beta_2\pi^2f^2 N_sL_s}\sum_{l=1}^{N_s}\int_{(l-1)L_s}^{lL_s}  e^{\left(\frac{\alpha}{2}-j2\beta_2\pi^2f^2\right)z^{\prime}}  \\
&\cdot\int_{-\infty}^{\infty}\int_{-\infty}^{\infty}\left[E_{0,x}(f_1,z^{\prime})E^*_{0,x}(f_2,z^{\prime})E_{0,x}(f-f_1+f_2,z^{\prime}) \right.  \\
&\left.+E_{0,y}(f_1,z^{\prime})E^*_{0,y}(f_2,z^{\prime})E_{0,x}(f-f_1+f_2,z^{\prime})\right]df_1df_2 dz^{\prime}.
\label{eq:1st_order_RP_solution32}
\end{split}
\end{align}
Using \eqref{eq:1st_order_RP_solution31} in \eqref{eq:1st_order_RP_solution32}, and swapping the integral in $z^{\prime}$ with the double integral in $df_1df_2$, we obtain
\begin{align*}
\begin{split}
E_{1,x}(f,N_s,L_s)&=-j\frac{8}{9}\gamma e^{j2\beta_2\pi^2f^2 N_sL_s}\int_{-\infty}^{\infty}\int_{-\infty}^{\infty}\left[E_{x}(f_1,0)E^*_{x}(f_2,0)E_{x}(f-f_1+f_2,0)\right.\\
&\left.+E_{y}(f_1,0)E^*_{y}(f_2,0)E_{x}(f-f_1+f_2,0)\right]\sum_{l=1}^{N_s}\int_{(l-1)L_s}^{lL_s} e^{\left[-\alpha+j\beta_2(f-f_1)(f_2-f_1)\right]z^{\prime}}d z^{\prime}df_1df_2\\
=-j\frac{8}{9}\gamma e^{j2\beta_2\pi^2f^2 N_sL_s}\int_{-\infty}^{\infty}&\int_{-\infty}^{\infty}\left[E_{x}(f_1,0)E^*_{x}(f_2,0)E_{x}(f-f_1+f_2,0)+E_{y}(f_1,0)E^*_{y}(f_2,0)E_{x}(f-f_1+f_2,0)\right]\\
&\cdot\frac{1-e^{-\alpha z}e^{j\beta_2(f-f_1)(f_2-f_1)z}}{\alpha-j\beta_2(f-f_1)(f_2-f_1)}\sum_{l=1}^{N_s}e^{-j4\pi^2\beta_2 (l-1)(f-f_1)(f_2-f_1)L_s}df_1df_2.
%\label{eq:1st_order_RP_solution_2}
\end{split}
\end{align*}
%\frac{1-e^{(-\alpha+j\beta_2(\omega-\omega_1)(\omega_2-\omega_1))z}}{\alpha-j\beta_2(\omega-\omega_1)(\omega_2-\omega_1)}
% where 
% \begin{equation}
% \eta(f,f_1,f_2,z)\triangleq\frac{1-e^{-\alpha z}e^{j\beta_2(f-f_1)(f_2-f_1)z}}{\alpha-j\beta_2(f-f_1)(f_2-f_1)},
% \label{eq:FWM_array}
% \end{equation} 
% is the so-called \emph{four-wave mixing efficiency} function.

The $y$-component of the zeroth order term $E_{0,y}(f,z)$ and first-order term $E_{1,y}(f,z)$ can be found using the transformation $x\rightarrow y$, $y\rightarrow x$ in \eqref{eq:1st_order_RP_solution31} and \eqref{eq:1st_order_RP_solution32}, respectively. Finally, bringing together the $x$ and $y$ components, we have
% \begin{equation}
% \begin{split}
% &\boldsymbol{E}_{1}(\omega,z)=-j\frac{8}{9}\gamma e^{-\alpha z}e^{j\frac{\beta_2}{2}\omega^2 z}\\ 
% &\times\int\int\left[
% \begin{array}{c}
% E_{x}(\omega_1,0)E^*_{x}(\omega_2,0)E_{x}(\omega-\omega_1+\omega_2,0)+E_{y}(\omega_1,0)E^*_{y}(\omega_2,0)E_{x}(\omega-\omega_1+\omega_2,0)\\
% E_{y}(\omega_1,0)E^*_{y}(\omega_2,0)E_{y}(\omega-\omega_1+\omega_2,0)+E_{x}(\omega_1,0)E^*_{x}(\omega_2,0)E_{y}(\omega-\omega_1+\omega_2,0)
% \end{array}\right]\\
% &\times\eta(\omega,\omega_1,\omega_2,z)d\omega_1d\omega_2
% %\times &\frac{1-e^{(-\alpha+j\beta_2(\omega-\omega_1)(\omega_2-\omega_1))z}}{\alpha-j\beta_2(\omega-\omega_1)(\omega_2-\omega_1)}d\omega_1d\omega_2.
% \end{split}
% \label{eq:1st_order_RP_solution_3}
% \end{equation}
\begin{align*}
\begin{split}
\boldsymbol{E}_0(f,N_s,L_s)=&\boldsymbol{E}_0(f,0)e^{j2\pi^2f^2\beta_2N_sL_s},   
%\label{eq:zero_th_order_c}
\end{split}
\\
\begin{split}
\boldsymbol{E}_{1}(f,N_s,L_s)=-j\frac{8}{9}\gamma e^{j2\beta_2\pi^2f^2N_sL_s}\int_{-\infty}^{\infty}\int_{-\infty}^{\infty}& \boldsymbol{E}^{T}(f_1,0)\boldsymbol{E}^*(f_2,0)\boldsymbol{E}(f-f_1+f_2,0)\eta(f_1,f_2,f,z)df_1df_2,
%\times &\frac{1-e^{(-\alpha+j\beta_2(\omega-\omega_1)(\omega_2-\omega_1))z}}{\alpha-j\beta_2(\omega-\omega_1)(\omega_2-\omega_1)}d\omega_1d\omega_2.
\end{split}
%\label{eq:1st_order_RP_solution_3}
\end{align*}
%\times &\frac{1-e^{(-\alpha+j\beta_2(\omega-\omega_1)(\omega_2-\omega_1))z}}{\alpha-j\beta_2(\omega-\omega_1)(\omega_2-\omega_1)}d\omega_1d\omega_2.
%\label{eq:1st_order_RP_solution_f}
where $\eta(f,f_1,f_2,z)$ is defined in \eqref{eq:fwm_efficiency}, which proves the theorem.

\section{Proof of Proposition \ref{prop:conj}}\label{app:B}
Applying the variable transformation $\tilde{k}=k^{\prime},  \tilde{m}=m^{\prime}, \tilde{n}=n^{\prime}, \tilde{k}^{\prime}=k, \tilde{m}^{\prime}=m, \tilde{n}^{\prime}=n$, to the left-hand side of \eqref{eq:PSDconj} we obtain
\begin{subequations}
\begin{align}
&\sum_{\substack{(k,m,n) \in \mathcal{S}_i\\ (k^{\prime},m^{\prime},n^{\prime}) \in \mathcal{S}_i}}\mathcal{P}_{k,m,n,k^{\prime},m^{\prime},n^{\prime}}\nu_{x,k}\nu^*_{x,m}\nu_{x,n}\nu^*_{y,k^{\prime}}\nu_{y,m^{\prime}}\nu^*_{x,n^{\prime}}\eta_{k,n,m}\eta^*_{k^{\prime},n^{\prime},m^{\prime}}\\
&=\sum_{\substack{(\tilde{k}^{\prime},\tilde{m}^{\prime},\tilde{n}^{\prime}) \in \mathcal{S}_i\\ (\tilde{k},\tilde{m},\tilde{n}) \in \mathcal{S}_i}}
\mathcal{P}_{\tilde{k}^{\prime},\tilde{m}^{\prime},\tilde{n}^{\prime},\tilde{k},\tilde{m},\tilde{n}}\nu_{x,\tilde{k}^{\prime}}\nu^*_{x,\tilde{m}^{\prime}}\nu_{x,\tilde{n}^{\prime}}\nu^*_{y,\tilde{k}}\nu_{y,\tilde{m}}\nu^*_{x,\tilde{n}}\eta_{\tilde{k}^{\prime},\tilde{m}^{\prime},\tilde{n}^{\prime}}\eta^*_{\tilde{k},\tilde{m},\tilde{n}}\label{eq:PSDconjproof_b}\\
&=\sum_{\substack{(\tilde{k},\tilde{m},\tilde{n}) \in \mathcal{S}_i\\
(\tilde{k}^{\prime},\tilde{m}^{\prime},\tilde{n}^{\prime}) \in \mathcal{S}_i
}}\mathcal{P}^*_{\tilde{k},\tilde{m},\tilde{n},\tilde{k}^{\prime},\tilde{m}^{\prime},\tilde{n}^{\prime}}(\nu^*_{y,\tilde{k}}\nu_{y,\tilde{m}}\nu^*_{x,\tilde{n}}\nu_{x,\tilde{k}^{\prime}}\nu^*_{x,\tilde{m}^{\prime}}\nu_{x,\tilde{n}^{\prime}}\eta_{\tilde{k},\tilde{m},\tilde{n}}\eta^*_{\tilde{k}^{\prime},\tilde{m}^{\prime},\tilde{n}^{\prime}})^*\label{eq:PSDconjproof_c}\\
&=\biggl(\sum_{\substack{(\tilde{k},\tilde{m},\tilde{n}) \in \mathcal{S}_i\\
(\tilde{k}^{\prime},\tilde{m}^{\prime},\tilde{n}^{\prime}) \in \mathcal{S}_i
}}\mathcal{P}_{\tilde{k},\tilde{m},\tilde{n},\tilde{k}^{\prime},\tilde{m}^{\prime},\tilde{n}^{\prime}}\nu_{y,\tilde{k}}\nu^*_{y,\tilde{m}}\nu_{x,\tilde{n}}\nu_{x,\tilde{k}^{\prime}}^*\nu_{x,\tilde{m}^{\prime}}\nu^*_{x,\tilde{n}^{\prime}}\eta^*_{\tilde{k},\tilde{m},\tilde{n}}\eta_{\tilde{k}^{\prime},\tilde{m}^{\prime},\tilde{n}^{\prime}}\biggl)^*,
\label{eq:PSDconjproof_d}
\end{align}
\end{subequations}
where in the step between \eqref{eq:PSDconjproof_b} and \eqref{eq:PSDconjproof_c} we have used the property
\begin{equation}
\mathcal{P}_{\tilde{k}^{\prime},\tilde{m}^{\prime},\tilde{n}^{\prime},\tilde{k},\tilde{m},\tilde{n}}=\mathcal{P}^*_{\tilde{k},\tilde{m},\tilde{n},\tilde{k}^{\prime},\tilde{m}^{\prime},\tilde{n}^{\prime}}\label{eq:PconjP}
\end{equation}
which can be easily verified based on definition \eqref{eq:P_def}. Using the relabelling $\tilde{k}\rightarrow k,  \tilde{m}\rightarrow m, \tilde{n}\rightarrow n, \tilde{k}^{\prime}\rightarrow k^{\prime},\tilde{m}^{\prime}\rightarrow m^{\prime},\tilde{n}^{\prime}\rightarrow n^{\prime}$ for \eqref{eq:PSDconjproof_d}
the proposition is proven.

\section{Proof of Proposition \ref{prop:1}}\label{app:C}
Applying the variable transformation $\tilde{k}=n, \tilde{m}=m, \tilde{n}=k, \tilde{k}^{\prime}=n^{\prime}, \tilde{m}^{\prime}=m^{\prime},\tilde{n}^{\prime}=k^{\prime}$ to the right-hand side of \eqref{eq:prop_1} we have

\begin{align}
\begin{split}
\sum_{\substack{(k,m,n) \in \mathcal{S}_i \\ (k^{\prime},m^{\prime},n^{\prime}) \in \mathcal{S}_i}}\mathcal{P}_{k,m,n,k^{\prime},m^{\prime},n^{\prime}}\eta_{k,m,n}\eta^*_{k^{\prime},m^{\prime},n^{\prime}}D_2(k,m,n,k^{\prime},m^{\prime},n^{\prime})\\
=\sum_{\substack{(\tilde{n},\tilde{m},\tilde{k}) \in \mathcal{S}_i \\ (\tilde{n}^{\prime},\tilde{m}^{\prime},\tilde{k}^{\prime})\in \mathcal{S}_i}}\mathcal{P}_{\tilde{n},\tilde{m},\tilde{k},\tilde{n}^{\prime},\tilde{m}^{\prime},\tilde{k}^{\prime}}\eta_{\tilde{n},\tilde{m},\tilde{k}}\eta^*_{\tilde{n}^{\prime},\tilde{m}^{\prime},\tilde{k}^{\prime}}D_2(\tilde{n},\tilde{m},\tilde{k},\tilde{n}^{\prime},\tilde{m}^{\prime},\tilde{k}^{\prime}).
\end{split}\label{eq:km_transf}
\end{align}
From definitions \eqref{eq:eta_def} and \eqref{eq:P_def} it can be easily verified that $\mathcal{P}_{n,m,k,n^{\prime},m^{\prime},k^{\prime}}=\mathcal{P}_{k,m,n,k^{\prime},m^{\prime},n^{\prime}}$ and
$\eta_{n,m,k}=\eta_{k,m,n}$. Moreover, based on the definition of the set $\mathcal{S}_i$ in \eqref{eq:setSi}, it can be observed that the condition $(k,m,n)\in\mathcal{S_i}$ is equivalent to $(n,m,k)\in\mathcal{S_i}$, i.e. generates the same set of triplets $(k,m,n)$. We can, thus, write \eqref{eq:km_transf} as 
\begin{align*}
&\sum_{\substack{(k,m,n) \in \mathcal{S}_i \\ (k^{\prime},m^{\prime},n^{\prime}) \in \mathcal{S}_i}}\mathcal{P}_{k,m,n,k^{\prime},m^{\prime},n^{\prime}}\eta_{k,m,n}\eta^*_{k^{\prime},m^{\prime},n^{\prime}}D_2(k,m,n,k^{\prime},m^{\prime},n^{\prime})\\
&=\sum_{\substack{(\tilde{k},\tilde{m},\tilde{n}) \in \mathcal{S}_i \\ (\tilde{k}^{\prime},\tilde{m}^{\prime},\tilde{n}^{\prime})\in \mathcal{S}_i}}\mathcal{P}_{\tilde{k},\tilde{m},\tilde{n},\tilde{k}^{\prime},\tilde{m}^{\prime},\tilde{n}^{\prime}}\eta_{\tilde{k},\tilde{m},\tilde{n}}\eta^*_{\tilde{k}^{\prime},\tilde{m}^{\prime},\tilde{n}^{\prime}}D_2(\tilde{n},\tilde{m},\tilde{k},\tilde{n}^{\prime},\tilde{m}^{\prime},\tilde{k}^{\prime})\\
&=\sum_{\substack{(\tilde{k},\tilde{m},\tilde{n}) \in \mathcal{S}_i \\ (\tilde{k}^{\prime},\tilde{m}^{\prime},\tilde{n}^{\prime})\in \mathcal{S}_i}}{P}_{\tilde{k},\tilde{m},\tilde{n},\tilde{k}^{\prime},\tilde{m}^{\prime},\tilde{n}^{\prime}}\eta_{\tilde{k},\tilde{m},\tilde{n}}\eta^*_{\tilde{k}^{\prime},\tilde{m}^{\prime},\tilde{n}^{\prime}}D_1(\tilde{k},\tilde{m},\tilde{n},\tilde{k}^{\prime},\tilde{m}^{\prime},\tilde{n}^{\prime}),   
\end{align*}
which proves the proposition.

\section{Proof of Proposition  \ref{prop:4}}\label{app:D}
Since $D_1(k,m,n,k^{\prime},m^{\prime},n^{\prime})=D_2(k^{\prime},m^{\prime},n^{\prime},k,m,n)$, the left-hand side of \eqref{eq:prop2} can be written as  
\begin{align}
\begin{split}
&\sum_{\substack{(k,m,n) \in \mathcal{S}_i \\ (k^{\prime},m^{\prime},n^{\prime}) \in \mathcal{S}_i}}\mathcal{P}_{k,m,n,k^{\prime},m^{\prime},n^{\prime}}\eta_{k,m,n}\eta^*_{k^{\prime},m^{\prime},n^{\prime}}D_1(k,m,n,k^{\prime},m^{\prime},n^{\prime})\\
&=\sum_{\substack{(k,m,n) \in \mathcal{S}_i \\ (k^{\prime},m^{\prime},n^{\prime}) \in \mathcal{S}_i}}\mathcal{P}_{k,m,n,k^{\prime},m^{\prime},n^{\prime}}\eta_{k,m,n}\eta^*_{k^{\prime},m^{\prime},n^{\prime}}D_2(k^{\prime},m^{\prime},n^{\prime},k,m,n). 
\label{eq:d1_d2}
\end{split}
\end{align}
Using the change of variables $\tilde{k}=k^{\prime},  \tilde{m}=m^{\prime}, \tilde{n}=n^{\prime}, \tilde{k}^{\prime}=k, \tilde{m}^{\prime}=m, \tilde{n}^{\prime}=n$, the right-hand side of 
\eqref{eq:d1_d2} can be equivalently expressed as
\begin{subequations}
\begin{align}
&\sum_{\substack{(k,m,n) \in \mathcal{S}_i \\ (k^{\prime},m^{\prime},n^{\prime}) \in \mathcal{S}_i}}\mathcal{P}_{k,m,n,k^{\prime},m^{\prime},n^{\prime}}\eta_{k,m,n}\eta^*_{k^{\prime},m^{\prime},n^{\prime}}D_2(k^{\prime},m^{\prime},n^{\prime},k,m,n)\label{eq:93a}\\
&=\sum_{\substack{(\tilde{k}^{\prime},\tilde{m}^{\prime},\tilde{n}^{\prime}) \in \mathcal{S}_i \\ (\tilde{k},\tilde{m},\tilde{n})\in \mathcal{S}_i}}\mathcal{P}_{\tilde{k}^{\prime},\tilde{m}^{\prime},\tilde{n}^{\prime},\tilde{k},\tilde{m},\tilde{n}}\eta^*_{\tilde{k},\tilde{m},\tilde{n}}\eta_{\tilde{k}^{\prime},\tilde{m}^{\prime},\tilde{n}^{\prime}}D_2(\tilde{k},\tilde{m},\tilde{n},\tilde{k}^{\prime},\tilde{m}^{\prime},\tilde{n}^{\prime})\label{eq:93b}
\\
&=\sum_{\substack{(\tilde{k},\tilde{m},\tilde{n}) \in \mathcal{S}_i \\ (\tilde{k}^{\prime},\tilde{m}^{\prime},\tilde{n}^{\prime})\in \mathcal{S}_i}}\mathcal{P}^*_{\tilde{k},\tilde{m},\tilde{n},\tilde{k}^{\prime},\tilde{m}^{\prime},\tilde{n}^{\prime}}\eta^*_{\tilde{k},\tilde{m},\tilde{n}}\eta_{\tilde{k}^{\prime},\tilde{m}^{\prime},\tilde{n}^{\prime}}D_2(\tilde{k},\tilde{m},\tilde{n},\tilde{k}^{\prime},\tilde{m}^{\prime},\tilde{n}^{\prime})\label{eq:93c}\\
&=\biggl(\sum_{\substack{(\tilde{k},\tilde{m},\tilde{n}) \in \mathcal{S}_i \\ (\tilde{k}^{\prime},\tilde{m}^{\prime},\tilde{n}^{\prime})\in \mathcal{S}_i}}\mathcal{P}_{\tilde{k},\tilde{m},\tilde{n},\tilde{k}^{\prime},\tilde{m}^{\prime},\tilde{n}^{\prime}}\eta_{\tilde{k},\tilde{m},\tilde{n}}\eta^*_{\tilde{k}^{\prime},\tilde{m}^{\prime},\tilde{n}^{\prime}}D_2(\tilde{k},\tilde{m},\tilde{n},\tilde{k}^{\prime},\tilde{m}^{\prime},\tilde{n}^{\prime})\biggl)^*,\label{eq:93d}
\end{align}
\end{subequations}
where in the step from \eqref{eq:93b} to \eqref{eq:93c} we have used \eqref{eq:PconjP}. 
Eq.~\eqref{eq:93d} is identical to the right-hand side of \eqref{eq:prop2} up to the variable relabelling $\tilde{k}\rightarrow k,  \tilde{m}\rightarrow m, \tilde{n}\rightarrow n, \tilde{k}^{\prime}\rightarrow k^{\prime},\tilde{m}^{\prime}\rightarrow m^{\prime},\tilde{n}^{\prime}\rightarrow n^{\prime}$, which proves the proposition.

\section{Proof of Lemma \ref{lm:T_dim}}\label{app:E}
To prove the statement about the dimensionality of the sets $\mathcal{T}_{l,i}$ we take as an example the cases for $l=1,2,3$. In these instances, the sets $\mathcal{T}_{l,i}$ $\forall i\in \mathbb{Z}$ are identified by 5 linear constraints on the set of variables $(k,m,n,k^{\prime},m^{\prime},n^{\prime}) \in \{0,1,\ldots,W-1\}^6$ given by: i) the  2 linearly independent constraints, $(k,m,n) \in \mathcal{S}_i$ and $(k^{\prime},m^{\prime},n^{\prime}) \in \mathcal{S}_i$; ii) and the 3 linearly independent constraints induced by the condition $D^{(l)}=1$ for $l=1,2,3$ (see Table \ref{tab:delta_prod}). Let then $\boldsymbol{A}_{l}\triangleq[\boldsymbol{a}_1^{T}; \boldsymbol{a}_2^{T};\dots;\boldsymbol{a}_5^{T}]$ be a $5\times6$ matrix whose rows $\boldsymbol{a}_k$, $k=1,2,\dots,5,$ describe each of these 5 linear combinations,  $\boldsymbol{x}\triangleq [k,m,n,k^{\prime},m^{\prime},n^{\prime}]$, and $\boldsymbol{y}_i=[i,i,0,0,0]$. Thus the set $\mathcal{T}_{l,i}$ can be equivalently defined as 
\begin{equation}
\mathcal{T}_{l,i}=\{\boldsymbol{x} \in \{0,1,\dots,W-1\}^6 : \boldsymbol{A}_l\boldsymbol{x}=\boldsymbol{y}_i\}.
\label{eq:matrix_Tli}    
\end{equation}
From \eqref{eq:matrix_Tli} it can be seen that $\mathcal{T}_{l,i}$ is a vector space whose number of dimensions is given by  
\begin{equation}
\dim\{\mathcal{T}_{l,i}\}=6-\text{rank}(\boldsymbol{A}_l).
\label{eq:dim_Tli}    
\end{equation}
Due to the construction of the delta products $D^{(l)},\; l\in\{1,2,3\}$ it can be  shown that the rows of $\boldsymbol{A}_l$ are linearly dependent under the relationship $\boldsymbol{a}_1-\boldsymbol{a}_2=\pm\boldsymbol{a}_3\pm\boldsymbol{a}_4\pm\boldsymbol{a}_5$. Hence, $\forall\; l \in \{1,2,3\}$ and $i \in \mathbb{Z}$ we have  rank$(\boldsymbol{A}_l)=4$. As a result, from \eqref{eq:dim_Tli}, $\dim\{\mathcal{T}_{l,i}\}=2, \;\forall\; l \in \{1,2,3\}$ and $i \in \mathbb{Z}$.

For $l\in \{4,5,\dots,10\}$, we have that $\mathcal{T}_{l,i}$ is identified by 4 linear constraints, 2 of them related to the $\mathcal{S}_i$ set and 2 to the condition $D^{(l)}=1$. Furthermore, it can be seen that $\boldsymbol{a}_1-\boldsymbol{a}_2=\pm\boldsymbol{a}_3\pm\boldsymbol{a}_4$, hence leading to rank$(\boldsymbol{A}_l)=3,\; \forall\; l \in \{4,5,\dots,10\}$ and $\dim \{\mathcal{T}_{l,i}\}=3$. Finally, based on similar arguments one can show that rank$(\boldsymbol{A}_l)=2$ for $l=11$ and dim$\{\mathcal{T}_{l,i}\}=4$, which proves the lemma.

\section{Proof of Theorem \ref{th:keyresult}}\label{app:F}
The limit of a sequence of distributions $\mathsf{f}(\Delta_f)$ can be defined as the distribution $\tilde{\mathsf{f}}$ such that \cite[Sec.~2.2]{Strichartz1994}
\begin{equation}
 	\langle\tilde{\mathsf{f}},\psi \rangle=\lim_{\Delta_f\rightarrow 0} 	\langle\mathsf{f}(\Delta_f),\psi 	\rangle, \qquad \forall\; \psi,
\label{eq:dist_limit}
\end{equation}
where 
\begin{equation}
\langle\mathsf{f},\psi \rangle\triangleq\int_{-\infty}^{\infty}\mathsf{f}\psi\,df   
\label{eq:dist_def}
\end{equation} 
denotes the functional corresponding to the distribution $\mathsf{f}$ applied to a generic test function $\psi$.
In particular, the delta distribution centered in $f_0$ is defined as 
\begin{equation}
\langle\delta_{f_0},\psi\rangle\triangleq\int_{-\infty}^{\infty}\delta(f-f_0)\psi(f)df=\psi(f_0).   \label{eq:delta_def}
\end{equation}
Based on \eqref{eq:dist_def}, we have for the distribution $S_x(f,N_s,L_s)$ in \eqref{eq:psd_periodic},
\begin{subequations}
\begin{align}
\begin{split}
\langle S_x(f,N_s,L_s),\psi\rangle&=\left(\frac{8}{9}\right)^2\gamma^2\Delta_f\biggl[R_s^3\Delta_f^{2}\biggl(\Phi_{1}\int_{-\infty}^{\infty}\sum_{i=-\infty}^{\infty}\sum_{\mathcal{T}_{1,i}}\mathsf{P}\delta(f-i\Delta_f)\psi(f)df+...\\
+\Phi_{3}\int_{-\infty}^{\infty}\sum_{i=-\infty}^{\infty}\sum_{\mathcal{T}_{3,i}}&\mathsf{P}\delta(f-i\Delta_f)\psi(f)df\biggl)+R_s^2\Delta_f^{3}\biggl(\Psi_1\int_{-\infty}^{\infty}\sum_{i=-\infty}^{\infty}\sum_{\mathcal{T}_{4,i}}\mathsf{P}\delta(f-i\Delta_f)\psi(f)df+...\\
+\Lambda_6\int_{-\infty}^{\infty}\sum_{i=-\infty}^{\infty}\sum_{\mathcal{T}_{10,i}}&\mathsf{P}\delta(f-i\Delta_f)\psi(f)df\biggl)+R_s\Delta_f^{4}\Xi_1\int_{-\infty}^{\infty}\sum_{i=-\infty}^{\infty}\sum_{\mathcal{T}_{11,i}}\mathsf{P}\delta(f-i\Delta_f)\psi(f)df\biggl]
\label{eq:delta_train_functional1}
\end{split}\\
\begin{split}
&=\left(\frac{8}{9}\right)^2\gamma^2\biggl[R_s^3\Delta_f^{3}\biggl(\Phi_{1}\sum_{i=-\infty}^{\infty}\sum_{\mathcal{T}_{1,i}}\mathsf{P}\psi(i\Delta_f)+...+\Phi_{3}\sum_{i=-\infty}^{\infty}\sum_{\mathcal{T}_{3,i}}\mathsf{P}\psi(i\Delta_f)\biggl)\\
+R_s^2\Delta_f^{4}\biggl(\Psi_1\sum_{i=-\infty}^{\infty}\sum_{\mathcal{T}_{4,i}}&\mathsf{P}\psi(i\Delta_f)+...+\Lambda_6\sum_{i=-\infty}^{\infty}\sum_{\mathcal{T}_{10,i}}\mathsf{P}\psi(i\Delta_f)\biggl)+R_s\Delta_f^{5}\Xi_1\sum_{i=-\infty}^{\infty}\sum_{\mathcal{T}_{11,i}}\mathsf{P}\psi(i\Delta_f)\biggl],
\label{eq:delta_train_functional2}
\end{split}
\end{align}
\end{subequations}
and where we have used \eqref{eq:delta_def} in the step between \eqref{eq:delta_train_functional1} and \eqref{eq:delta_train_functional2}.

Now we want to show that all the terms in \eqref{eq:delta_train_functional2} are multidimensional Riemann sums, which then will converge to multidimensional integrals in the limit for $\Delta_f\rightarrow 0$. From \eqref{eq:eta_def}, \eqref{eq:P_def} and \eqref{eq:Psf_def}, it can be seen that the terms $\mathsf{P}\psi(i\Delta_f)$ are samples on a multidimensional grid of step $\Delta_f$ of the multivariate function 
\begin{align}
\begin{split}
\tilde{\mathsf{P}}(f_1,f_2,f_3,f_1^{\prime},f_2^{\prime},f_3^{\prime})&\triangleq P(f_1)P^*(f_2)P(f_3)P^*(f_1^{\prime})P(f_2^{\prime})P^*(f_3^{\prime})\\
&\cdot\eta(f_1,f_2,f_1-f_2+f_3,N_s,L_s)\eta^*(f_1^{\prime},f_2^{\prime},f_1^{\prime}-f_2^{\prime}+f_3^{\prime},N_s,L_s), \qquad f_1, f_2, f_3, f_1^{\prime},f_2^{\prime},f_3^{\prime} \in \mathbb{R}.
\end{split}
\label{eq:Pcont_fun}    
\end{align}
Moreover, $\Delta_f^{t(l)}$, which represents the power of $\Delta_f$ multiplying the $l$th element in \eqref{eq:delta_train_functional2}, where 
\begin{align}
t(l)\triangleq
\begin{dcases}
 3 &  \text{for} \;\;  l=1,2,3; \\
 4 & \text{for} \;\; l=4,...,10;\\
 5 & \text{for} \;\; l=11;
\end{dcases}
\label{eq:t_l}
\end{align}
is a measure of the $t(l)$th dimensional hypercube in $\mathbb{R}^{t(l)}$ whose side measures $\Delta_f$. Hence, to prove that each term in \eqref{eq:delta_train_functional2} converges to a sum of multiple integrals of the multivariate functions $\tilde{\mathsf{P}_{l}}\psi(f)$ we simply need to show that the dimensionality of the summation sets, i.e. $\mathbb{Z}\times \mathcal{T}_{l,i}$, is equal to $t(l)$, i.e., $\text{dim} \{\mathcal{T}_{l,i}\}=t(l)-1$, for $l=1,...,11$, and $\forall i \in \mathbb{Z}$. This can be easily verified comparing Lemma \ref{lm:T_dim} to \eqref{eq:t_l}. Defining the subspaces of $\mathbb{R}^6$
\begin{equation}
\mathcal{Q}_{l}(f) \triangleq \{(f_1,f_2,f_3,f_1^{\prime},f_2^{\prime},f_3^{\prime}) \in \mathbb{R}^6\cap\mathcal{G}_l: f_1-f_2+f_3=f,\; f_1^{\prime}-f_2^{\prime}+f_3^{\prime}=f\},
\label{eq:calQdef}
\end{equation}
where $\mathcal{G}_l$ is the set defined by the condition $D^{(l)}=1$ and the discrete variables $(k,m,n,k^{\prime},m^{\prime},n^{\prime})$ are replaced by the continuous ones $(f_1,f_2,f_3,f_1^{\prime},f_2^{\prime},f_3^{\prime}),$
we have
\begin{subequations}
\begin{align}
\begin{split}
\lim_{\Delta_f \rightarrow 0}\langle S_x(f,N_s,L_s),\psi\rangle&=\left(\frac{8}{9}\right)^2\gamma^2\biggl[R_s^3\biggl(\Phi_1\int_{-\infty}^{\infty}\idotsint\limits_{\mathcal{Q}_{1}(f)}\tilde{\mathsf{P}}(f_1,...,f_3^{\prime})\psi(f)df_1\,...\,df_3^{\prime}\,df+...\\
+\Phi_3\int_{-\infty}^{\infty}\idotsint\limits_{\mathcal{Q}_{3}(f)}\tilde{\mathsf{P}}(f_1,&...,f_3^{\prime})\psi(f)df_1\,...df_3^{\prime}\,df\biggl)+R_s^2\biggl(\Psi_1\int_{-\infty}^{\infty}\idotsint\limits_{\mathcal{Q}_{4}(f)}\tilde{\mathsf{P}}(f_1,...,f_3^{\prime})\psi(f)df_1\,...df_3^{\prime}\,df+...\\
+\Lambda_6\int_{-\infty}^{\infty}\idotsint\limits_{\mathcal{Q}_{10}(f)}&\tilde{\mathsf{P}}(f_1,...,f_3^{\prime})\psi(f)df_1\,...df_3^{\prime}\,df\biggl)+R_s\Xi_1\int_{-\infty}^{\infty}\idotsint\limits_{\mathcal{Q}_{11}(f)}\tilde{\mathsf{P}}(f_1,...,f_3^{\prime})\psi(f)df_1\,...df_3^{\prime}\,df\biggl]
\label{eq:Int_dist_a}
\end{split}
\\
\begin{split}
&=\left(\frac{8}{9}\right)^2\gamma^2\biggl[R_s^3\biggl(\Phi_1\int_{-\infty}^{\infty}\int_{-\infty}^{\infty}\int_{-\infty}^{\infty}\tilde{\mathsf{P}}_1(f_1,f_2,f)\psi(f)df_1\,df_2\,df+...\\
+\Phi_3\int_{-\infty}^{\infty}\int_{-\infty}^{\infty}\int_{-\infty}^{\infty}\tilde{\mathsf{P}}_{3}(f_1&,f_2,f)\psi(f)df_1\,df_2\,df\biggl)+R_s^2\biggl(\Psi_1\idotsint_{\mathbb{R}^4}\tilde{\mathsf{P}}_{4}(f_1,...,f_3,f)\psi(f)df_1\,...\,df_3\,df+...\\
+\Lambda_6\idotsint_{\mathbb{R}^4}\tilde{\mathsf{P}}_{10}(f_1,&...,f_3,f)\psi(f)df_1\,...\,df_3\,df\biggl)+R_s\Xi_1\idotsint_{\mathbb{R}^5}\tilde{\mathsf{P}}_{11}(f_1,...,f_4,f)\psi(f)df_1\,...\,df_4\,df\biggl].
\label{eq:Int_dist_b}
\end{split}
\end{align}
\end{subequations}
In the step from \eqref{eq:Int_dist_a} to \eqref{eq:Int_dist_b}, we have replaced in each integrand the function $\tilde{\mathsf{P}}$ in \eqref{eq:Pcont_fun} with its constrained version over $\mathcal{Q}_{l}(f)$   
\begin{equation}
\tilde{\mathsf{P}}_{l}\triangleq\tilde{\mathsf{P}}(f_1,f_2,f_3,f_1^{\prime},f_2^{\prime},f_3^{\prime})|_{(f_1,f_2,f_3,f_1^{\prime},f_2^{\prime},f_3^{\prime})\in \mathcal{Q}_{l}(f)},    
\end{equation}
and explicitly expressed the dimensionality of the integrals based on the dimension of their corresponding integration domains $\mathcal{Q}_{l}(f)$.~By construction (see \eqref{eq:calQdef}), $\text{dim}\{\mathcal{Q}_{l}(f)\}=t(l)-1$, for $l=1,...,11$.

Finally, using \eqref{eq:dist_limit} and comparing definition \eqref{eq:dist_def} with \eqref{eq:Int_dist_b}, we obtain    
\begin{align*}
\bar{S}_x(f,N_s,L_s)&=\lim_{\Delta_f \rightarrow 0}S_x(f,N_s,L_s)=\left(\frac{8}{9}\right)^2\gamma^2\biggl[R_s^3\biggl(\Phi_1\int_{-\infty}^{\infty}\int_{-\infty}^{\infty}\tilde{\mathsf{P}}_{1}(f_1,f_2,f)df_1\,df_2+...\\
+\Phi_3&\int_{-\infty}^{\infty}\int_{-\infty}^{\infty}\tilde{\mathsf{P}}_{3}(f_1,f_2,f)df_1\,df_2\biggl)+R_s^2\biggl(\Psi_1\int_{-\infty}^{\infty}\int_{-\infty}^{\infty}\int_{-\infty}^{\infty}\tilde{\mathsf{P}}_{4}(f_1,f_2,f_3,f)df_1\,df_2\,df_3\,+...\\
+\Lambda_6&\int_{-\infty}^{\infty}\int_{-\infty}^{\infty}\int_{-\infty}^{\infty}\tilde{\mathsf{P}}_{10}(f_1,f_2,f_3,f)df_1\,df_2\,df_3\biggl)+R_s\Xi_1\idotsint_{\mathbb{R}^4}\tilde{\mathsf{P}}_{11}(f_1,...,f_4,f)df_1...\,df_4\biggl],    
\end{align*}
which, defining
\begin{align}
    \rchi_l(f)\triangleq
    \begin{dcases}
     \int_{-\infty}^{\infty}\int_{-\infty}^{\infty}\tilde{\mathsf{P}}_{l}(f_1,f_2,f)df_1\,df_2=\int_{-\frac{R_s}{2}}^{\frac{R_s}{2}}\int_{-\frac{R_s}{2}}^{\frac{R_s}{2}}\tilde{\mathsf{P}}_{l}(f_1,f_2,f)df_1\,df_2, \qquad l=1,2,3;\\
     \int_{-\infty}^{\infty}\int_{-\infty}^{\infty}\tilde{\mathsf{P}}_{l}(f_1,f_2,f_3,f)df_1\,df_2\,df_3=\int_{-\frac{R_s}{2}}^{\frac{R_s}{2}}\int_{-\frac{R_s}{2}}^{\frac{R_s}{2}}\int_{-\frac{R_s}{2}}^{\frac{R_s}{2}}\tilde{\mathsf{P}}_{l}(f_1,f_2,f_3,f)df_1\,df_2\,df_3, \; l=4,\dots,10;\\
     \idotsint_{\mathbb{R}^4}\tilde{\mathsf{P}}_{11}(f_1,...,f_4,f)df_1...df_4=\idotsint_{\mathcal{R}_4}\tilde{\mathsf{P}}_{11}(f_1,...,f_4,f)df_1...df_4, \qquad l=11,
    \end{dcases}
\label{eq:Chi_def}    
\end{align}
with $\mathcal{R}_4\triangleq [-R_s/2,R_s/2]^4$, proves the theorem. The second equalities in \eqref{eq:Chi_def} are justified by the form of the functions $\tilde{\mathsf{P}}_{l}$ (see Table \ref{tab:final_result}) which, due to the assumption of strictly band-limited pulses, have limited support within the hybercube $[-R_s/2,R_s/2]^{t(l)-1}$. To derive the explicit expressions for the $\rchi_l(f)$ in Table~\ref{tab:final_result}, we used \eqref{eq:Chi_def} and the property $P(-f)=P^*(f)$, which stems from the fact that $p(t)$ is assumed to be real valued (see Sec.~\ref{sec:system_model}).

\bibliographystyle{IEEEtran}
\bibliography{references} % see references.bib for bibliography management

\end{document}